\providecommand{\algorithmname}{Algorithm}
\newtheorem{rem}{Remark}
\begin{document}

\title{Constrained Fault-Tolerant Resource Allocation%
\thanks{Part of section 5 of this paper has appeared in \cite{kewen2011cocoon}.%
}}

\author{\author{Kewen Liao$^1$, Hong Shen$^{1,2}$ and Longkun Guo$^{2}$ }}

\institute{\institute{$^1$School of Computer Science \\ The University of Adelaide, Adelaide, Australia  \\  $^2$School of Computer and Information Technology\\ Sun Yat-sen University, Guangzhou, China \\ \email{\{kewen, hong\}@cs.adelaide.edu.au} \\ \email{longkun.guo@gmail.com}}}
\maketitle
\begin{abstract}
In the Constrained Fault-Tolerant Resource Allocation ($FTRA$) problem,
we are given a set of sites containing facilities as resources, and
a set of clients accessing these resources. Specifically, each site
$i$ is allowed to open at most $R_{i}$ facilities with cost $f_{i}$
for each opened facility. Each client $j$ requires an allocation
of $r_{j}$ open facilities and connecting $j$ to any facility at
site $i$ incurs a connection cost $c_{ij}$. The goal is to minimize
the total cost of this resource allocation scenario.

\quad{}\enskip{}$FTRA$ generalizes the Unconstrained Fault-Tolerant
Resource Allocation ($FTRA_{\infty}$) \cite{kewen2011cocoon} and
the classical Fault-Tolerant Facility Location ($FTFL$) \cite{Jain00FTFL}
problems: for every site $i$, $FTRA_{\infty}$ does not have the
constraint $R_{i}$, whereas $FTFL$ sets $R_{i}=1$. These problems
are said to be uniform if all $r_{j}$'s are the same, and general
otherwise.

\quad{}\enskip{}For the general metric $FTRA$, we first give an
LP-rounding algorithm achieving an approximation ratio of 4. Then
we show the problem reduces to $FTFL$, implying the ratio of 1.7245
from \cite{JaroslawFTFL1.725}. For the uniform $FTRA$, we provide
a 1.52-approximation primal-dual algorithm in \textcolor{black}{$O\left(n^{4}\right)$
time, where $n$ is the total number of sites and clients.} We also
consider the Constrained Fault-Tolerant $k$-Resource Allocation ($k$-$FTRA$)
problem where additionally the total number of facilities can be opened
across all sites is bounded by $k$. For the uniform $k$-$FTRA$,
we give the first constant-factor approximation algorithm with a factor
of 4. Note that the above results carry over to $FTRA_{\infty}$ and
$k$-$FTRA_{\infty}$.
\end{abstract}
\markboth{Kewen Liao, Hong Shen and Longkun Guo}{Constrained Fault-Tolerant
Resource Allocation }\thispagestyle{empty}

\section{Introduction}

In the \textit{Constrained Fault-Tolerant Resource Allocation} ($FTRA$)
problem introduced in \cite{kewen2011cocoon}, we are given a set
$\mathcal{F}$ of sites and a set \textcolor{black}{$\mathcal{C}$}
of clients, where $\left|\mathcal{F}\right|=n_{f}$, $\left|\mathcal{C}\right|=n_{c}$
and $n=n_{f}+n_{c}$. Each site $i\in\mathcal{F}$ contains at most
$R_{i}$ ($R_{i}\geq1$) facilities to open as resources and each
client $j\in\mathcal{C}$ is required to be allocated $r_{j}$ ($r_{j}\geq1$)
open facilities. Note that in $FTRA$, facilities at the same site
are different and $\max_{j\in\mathcal{C}}r_{j}\leq\sum_{i\in\mathcal{F}}R_{i}$.
Moreover, opening a facility at site $i$ incurs a cost $f_{i}$ and
connecting $j$ to any facility at $i$ costs $c_{ij}$. The objective
of the problem is to minimize the sum of facility opening and client
connection costs under the resource constraint $R_{i}$. This problem
is closely related to the \textit{Unconstrained Fault-Tolerant Resource
Allocation} ($FTRA_{\infty}$)%
\footnote{The problem was also called \textit{Fault-Tolerant Facility Allocation
}($FTFA$) in \cite{shihongftfa} and \textit{Fault-Tolerant Facility
Placement }($FTFP$) in \cite{yan2011approximation}. Nevertheless,
we reserve our names for identifying the different application-oriented
resource allocation scenarios. Our naming convention also follows
from \cite{fujito2005better,hua2009exact,kolliopoulos2003approximating}
for the set cover problems.%
} \cite{kewen2011cocoon}, the classical \textit{Fault-Tolerant Facility
Location }($FTFL$) \cite{Jain00FTFL} and\textit{ Uncapacitated Facility
Location }($UFL$) \cite{Shmoys97FL} problems. Both $FTRA_{\infty}$
and $FTFL$ are special cases of $FTRA$: $R_{i}$ is unbounded in
$FTRA_{\infty}$, whereas $\forall i\in\mathcal{F}:\, R_{i}=1$ in
$FTFL$. These problems are said to be \textit{uniform} if all $r_{j}$'s
are same, and\textit{ general} otherwise. If $\forall j\in\mathcal{C}:\, r_{j}=1$,
they all reduce to $UFL$. Fig. 1 displays an $FTRA$ instance with
a feasible solution. We notice that both $FTRA$ and $FTRA_{\infty}$
have potential applications in numerous distributed systems such as
cloud computing, content delivery networks, Web services provision
and etc. The fault-tolerance attribute ($r_{j}$) can be also viewed
as the parallel processing capability of these systems. Unless elsewhere
specified, we consider the problems in metric space, that is, the
connection costs $c_{ij}$'s satisfy the metric properties like the
triangle inequality and etc.\textcolor{black}{{} Note that even the
simplest non-metric $UFL$ is hard to approximate better than $O\left(\log n\right)$
unless $NP\subseteq DTIME\left[n^{O\left(\log\log n\right)}\right]$}
\cite{Sviridenko02improved1.58}.

\begin{figure}
\begin{centering}
\includegraphics[scale=0.7]{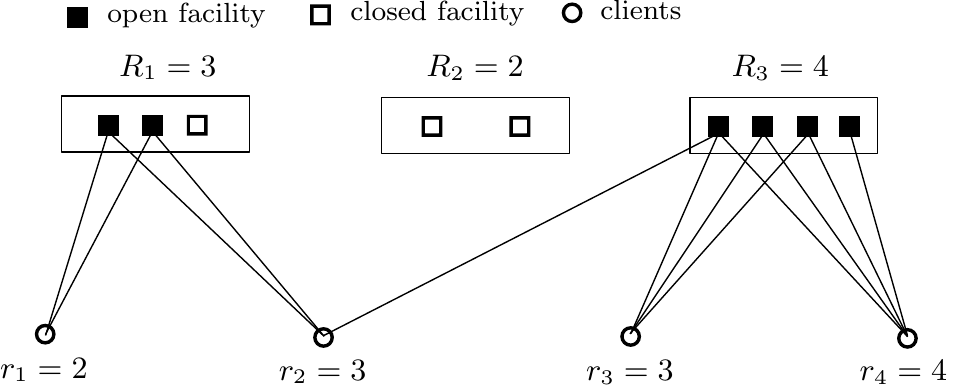}
\par\end{centering}

\caption{An $FTRA$ instance with a feasible solution}
\end{figure}

\textbf{Related Work.} Primal-dual and LP-rounding are two typical
approaches in designing approximation algorithms for the facility
location problems. Starting from the most basic and extensively studied
$UFL$ problem, there are JV \cite{jain01approximation}, MMSV \cite{Mohammad1.861}
and JMS \cite{Jain02greedy} primal-dual algorithms obtaining approximation
ratios of 3, 1.861 and 1.61 respectively. In addition, Charikar and
Guha \cite{Charikar051.7281.853} improved the result of the JV algorithm
to 1.853 and Mahdian et al. \cite{Mohammad06FLP} improved that of
the JMS algorithm to 1.52, both using the standard cost scaling and
greedy augmentation techniques. Shmoys et al. \cite{Shmoys97FL} first
gave a filtering based LP-rounding algorithm achieving the constant
ratio of 3.16. Following this, Guha and Khuller \cite{Guha99greedy}
improved the factor to 2.41 with greedy augmentation. Later, Chudak
and Shmoys \cite{Chudak0312e} came up with the clustered randomized
rounding algorithm which further reduces the ratio to 1.736. Based
on their algorithm, Sviridenko \cite{Sviridenko02improved1.58} applied
pipage rounding to obtain 1.582-approximation. Byrka \cite{jaroslaw2010optimal}
achieved the ratio of 1.5 using a bi-factor result of the JMS algorithm.
Recently, Li's more refined analysis in \cite{Li2011} obtained the
current best ratio of 1.488, which is close to the 1.463 lower bound
established by Guha and Khuller \cite{Guha99greedy}. For the \textcolor{black}{non-metric
$UFL$, there are two $O\left(\log n\right)$ approximation algorithms
\cite{Hochbaum-1982,Lin92filting} based on the greedy and LP-rounding
approaches respectively.}

Comparing to $UFL$, $FTFL$ seems more difficult to approximate.
For the general $FTFL$, the primal-dual algorithm in \cite{Jain00FTFL}
yields a non-constant factor \textcolor{black}{$O\left(\log n\right)$}.
Constant results exist only for the uniform case. In particular, Jain
et al. \cite{Jain03dualfitting,MohammadThesis2004} showed their MMSV
and JMS algorithms for $UFL$ can be adapted to the uniform $FTFL$
while preserving the ratios of 1.861 and 1.61 respectively. Swamy
and Shmoys \cite{Swamy08FTFL2.076} improved the result to 1.52. On
the other hand, LP-rounding approaches are more successful for the
general $FTFL$. Guha et al. \cite{Guha03FTFL2.41} obtained the first
constant factor algorithm with the ratio of 2.408. Later, this was
improved to 2.076 by Swamy and Shmoys \cite{Swamy08FTFL2.076} with
several rounding techniques. Recently, Byrka et al. \cite{JaroslawFTFL1.725}
used dependent rounding and laminar clustering techniques to get the
current best ratio of 1.7245.

Allowing parallel connections to multiple facilities at the same site,
$FTRA_{\infty}$ was first introduced by Xu and Shen \cite{shihongftfa}
and they claimed a 1.861 approximation algorithm which runs in pseudo-polynomial
time for the general case. Liao and Shen \cite{kewen2011cocoon} studied
the uniform case of the problem and presented a 1.52 approximation
algorithm using a star-greedy approach. The general case of the problem
was also studied by Yan and Chrobak \cite{yan2011approximation} who
gave a 3.16-approximation LP-rounding algorithm based on \cite{Shmoys97FL,Chudak0312e},
and recently claimed the ratio of 1.575 \cite{yan2012lp} built on
the work of \cite{Chudak0312e,jaroslaw2010optimal,jaroslaw2012lp,Guha03FTFL2.41}.
They aim to close the approximation gap between $FTRA_{\infty}$ and
$UFL$. On the other hand, due to the difficulties both inherited
from $FTFL$ and $FTRA_{\infty}$, it is still unknown what the approximation
gap between $FTRA$ and $FTFL$ is.

In this paper, we strive to close this gap. However, there are several
difficulties. First, despite the similar combinatorial structures
of $FTRA_{\infty}$ and $FTRA$, the existing LP-rounding algorithms
\cite{yan2011approximation,yan2012lp} for $FTRA_{\infty}$ can not
be adopted for $FTRA$. This is because these algorithms produce infeasible
solutions that violate the constraint $R_{i}$ in $FTRA$. In particular,
the recent work of \cite{yan2012lp} requires liberally splitting
facilities and randomly opening them. This can not be done for both
$FTRA$ and $FTFL$ as the splitting may cause more than $R_{i}$
facilities to open, which is not a problem for $FTRA_{\infty}$. Second,
in $FTFL$, $\max_{j\in\mathcal{C}}r_{j}\leq n_{f}$, while $r_{j}$
can be much larger than $n_{f}$ in both $FTRA_{\infty}$ and $FTRA$.
Therefore, the naive reduction idea of splitting the sites of an $FTRA$
instance and then restrict each site to have at most one facility
will create an equivalent $FTFL$ instance with a possibly exponential
size. Third, significantly more insights and heuristics are needed
in addition to the previous work for solving $FTRA$ (both the general
and the uniform cases) in polynomial time.

\textbf{Our Contribution.} For the general $FTRA$, we first develop
a \textit{unified LP-rounding algorithm }through modifying and extending
the 4-approximation LP-rounding algorithm \cite{Swamy08FTFL2.076}
for $FTFL$. The algorithm can directly solve $FTRA$, $FTRA_{\infty}$
and $FTFL$ with the same approximation ratio of 4. This is achieved
by: 1) constructing some useful \textit{properties} of the unified
algorithm which enable us to directly round the optimal fractional
solutions with values that might exceed one while ensuring the feasibility
of the rounded solutions and the algorithm correctness; 2) exploiting
the primal and dual complementary slackness conditions of the $FTRA$
problem's LP formulation. Then we show $FTRA$ reduces to $FTFL$
using an \textit{instance shrinking technique} inspired from the splitting
idea of \cite{yan2011newresults} for $FTRA_{\infty}$. It implies
that these two problems may share the same approximability in weakly
polynomial time. Hence, from the $FTFL$ result of \cite{JaroslawFTFL1.725},
we obtain the ratio of 1.7245\textcolor{black}{. For the non-metric
$FTRA$, we get }the first approximation factor of \textcolor{black}{$O\left(\log^{2}n\right)$
deduced from the work of \cite{Jain00FTFL,Lin92filting}. Note that,
although our first rounding algorithm attains a worse approximation
ratio, it could be more useful than the second to be adapted for other
variants of the resource allocation problems.}

For the uniform $FTRA$, better results are obtained. We first present
a naive \textit{primal-dual algorithm} that runs in pseudo-polynomial
time. To analyze the algorithm, we adopt a \textit{constraint-based
analysis} to derive the ratio of 1.61. Compared to dual fitting \cite{Jain03dualfitting}
and inverse dual fitting \cite{shihongftfa}, our analysis approach
is simpler and more convenient for handling more complicated dual
constructions. Later, with a carefully designed \textit{acceleration
heuristic} applied to the primal-dual algorithm, we obtain the first
strongly polynomial time algorithm for $FTRA$ that has the same ratio
of 1.61 but runtime \textcolor{black}{$O\left(n^{4}\right)$. }Moreover,
by applying another similar heuristic to the greedy augmentation technique
\cite{Guha03FTFL2.41}, \textcolor{black}{the 3.16-approximation rounding
result of \cite{yan2011approximation} for the general $FTRA_{\infty}$
is improved to 2.408, and the previous 1.61 ratio for the uniform
}$FTRA$ \textcolor{black}{reduces to 1.52.}

Lastly, we consider an important variant of $FTRA$ -- the Constrained
Fault-Tolerant $k$-Resource Allocation ($k$-$FTRA$) problem which
adds an extra global constraint that at most $k$ facilities across
all sites can be opened as resources. For the uniform $k$-$FTRA$,
based on the work of \cite{jain01approximation,Jain03dualfitting,Swamy08FTFL2.076},
we give the first constant-factor approximation algorithm for this
problem with a factor of 4. In particular, the algorithm relies on
a polynomial time \textit{greedy pairing} \textit{procedure} we develop
for efficiently splitting sites into paired and unpaired facilities.

The results shown directly hold for $FTRA_{\infty}$ and $ $$k$-$FTRA_{\infty}$,
and the techniques developed will be useful for other variants of
the resource allocation problems. For ease of analysis and implementation,
the algorithms presented mostly follow the pseudocode style. Furthermore,
we distinguish among pseudo-, weakly and strongly polynomial time
algorithms w.r.t. the problem size $n$.

\section{LP Basics and Properties}

The $FTRA$ problem has the following ILP formulation, in which \textcolor{black}{solution
variable $y_{i}$ denotes the number of facilities to open at site
$i$, and $x_{ij}$ the number of connections between client $j$
and site $i$. From the ILP, we can verify that the problem becomes
the special cases $FTFL$ if all $R_{i}$'s are uniform and equal
to $1$, and $FTRA_{\infty}$ if the third resource constraint is
removed.} 

\textit{\small 
\begin{equation}
\begin{array}{llc}
\mathrm{minimize} & \sum_{i\in\mathcal{F}}f_{i}y_{i}+\sum_{i\in\mathcal{F}}\sum_{j\in\mathcal{C}}c_{ij}x_{ij}\\
\mathrm{subject\, to} & \forall j\in\mathcal{C}:\,\sum_{i\in\mathcal{F}}x_{ij}\ge r_{j}\\
 & \forall i\in\mathcal{F},j\in\mathcal{C}:\, y_{i}-x_{ij}\geq0\\
 & \forall i\in\mathcal{F}:\, y_{i}\leq R_{i}\\
 & \forall i\in\mathcal{F},j\in\mathcal{C}:\, x_{ij},\, y_{i}\in\mathbb{Z}^{+}
\end{array}\label{eq:ftra-ip}
\end{equation}
}{\small \par}

The problem's LP-relaxation (primal LP) and dual LP are the following:

\textit{\small 
\begin{equation}
\begin{array}{llc}
\mathrm{minimize} & \sum_{i\in\mathcal{F}}f_{i}y_{i}+\sum_{i\in\mathcal{F}}\sum_{j\in\mathcal{C}}c_{ij}x_{ij}\\
\mathrm{subject\, to} & \forall j\in\mathcal{C}:\,\sum_{i\in\mathcal{F}}x_{ij}\ge r_{j}\\
 & \forall i\in\mathcal{F},j\in\mathcal{C}:\, y_{i}-x_{ij}\geq0\\
 & \forall i\in\mathcal{F}:\, y_{i}\leq R_{i}\\
 & \forall i\in\mathcal{F},j\in\mathcal{C}:\, x_{ij},\, y_{i}\geq0
\end{array}\label{eq:ftra-lp}
\end{equation}
}{\small \par}

{\small 
\begin{equation}
\begin{array}{llc}
\textrm{maximize} & \sum_{j\in\mathcal{C}}r_{j}\alpha_{j}-\sum_{i\in\mathcal{F}}R_{i}z_{i}\\
\mathrm{subject\, to} & \forall i\in\mathcal{F}:\,\sum_{j\in\mathcal{C}}\beta_{ij}\leq f_{i}+z_{i}\\
 & \forall i\in\mathcal{F},j\in\mathcal{C}:\,\alpha_{j}-\beta_{ij}\leq c_{ij}\\
 & \forall i\in\mathcal{F},j\in\mathcal{C}:\,\alpha_{j},\,\beta_{ij},\, z_{i}\geq0
\end{array}\label{eq:ftra-dual}
\end{equation}
}{\small \par}

Now we let $\left(\boldsymbol{x^{*}},\,\boldsymbol{y^{*}}\right)$
and $\left(\boldsymbol{\alpha^{*}},\,\boldsymbol{\beta^{*}},\,\boldsymbol{z^{*}}\right)$
be the optimal fractional primal and dual solutions of the LPs, and
$cost\left(\boldsymbol{x},\,\boldsymbol{y}\right)$ and $cost\left(\boldsymbol{\alpha},\,\boldsymbol{\beta},\,\boldsymbol{z}\right)$
be the cost functions (objective value functions) of any primal and
dual solutions respectively. By the strong duality theorem, $cost\left(\boldsymbol{x^{*}},\,\boldsymbol{y^{*}}\right)=cost\left(\boldsymbol{\alpha^{*}},\,\boldsymbol{\beta^{*}},\,\boldsymbol{z^{*}}\right)$.
Moreover, the primal complementary slackness conditions (CSCs) are:

(C1) If $x_{ij}^{*}>0\,$ then $\alpha_{j}^{*}=\beta_{ij}^{*}+c_{ij}$.

(C2) If $y_{i}^{*}>0\,$ then $\sum_{j\in\mathcal{C}}\beta_{ij}^{*}=f_{i}+z_{i}^{*}$.

Dual CSCs are:

(C3) If $\alpha_{j}^{*}>0\,$ then $\sum_{i\in\mathcal{F}}x_{ij}^{*}=r_{j}$.

(C4) If $\beta_{ij}^{*}>0\,$ then $x_{ij}^{*}=y_{i}^{*}$.

(C5) If $z_{i}^{*}>0\,$ then $y_{i}^{*}=R_{i}$.

W.l.o.g., $\left(\boldsymbol{x^{*}},\,\boldsymbol{y^{*}}\right)$
and $\left(\boldsymbol{\alpha^{*}},\,\boldsymbol{\beta^{*}},\,\boldsymbol{z^{*}}\right)$
have the following properties:

(P1) $\forall j\in\mathcal{C}:\,\alpha_{j}^{*}>0\,$ and $\sum_{i\in\mathcal{F}}x_{ij}^{*}=r_{j}$.

(P2) $\left(\boldsymbol{x^{*}},\,\boldsymbol{y^{*}}\right)$ is 'almost'
complete, i.e. $\forall j\in\mathcal{C}:$ if $x_{ij}^{*}>0$ then
$x_{ij}^{*}=y_{i}^{*}$ (the complete condition) or there is at most
one $i$ s.t. $0<x_{ij}^{*}<y_{i}^{*}$ where $i$ is the farthest
site connecting $j$. (cf. \cite{Chudak0312e,Swamy08FTFL2.076} for
more details)

\section{A Unified LP-Rounding Algorithm}

The algorithm ULPR (Algorithm 1) starts by solving the primal and
dual LPs to get the optimal solutions $\left(\boldsymbol{x^{*}},\,\boldsymbol{y^{*}}\right)$
and $\left(\boldsymbol{\alpha^{*}},\,\boldsymbol{\beta^{*}},\,\boldsymbol{z^{*}}\right)$
to work with. In order to utilize the dual LP for analyzing the approximation
ratio of the output solution $\left(\boldsymbol{x},\,\boldsymbol{y}\right)$,
we need to first deal with how to bound the $-\sum_{i\in\mathcal{F}}R_{i}z_{i}$
term in the dual objective function, introduced by imposing the new
resource constraint $\forall i\in\mathcal{F}:\, y_{i}\leq R_{i}$
in the primal LP. To resolve this, we exploit the dual CSC (C5). This
condition guides us to come up with Stage 1 of the algorithm ULPR
which fully opens all (facilities of) sites with $y_{i}^{*}=R_{i}$
and put these sites into the set $\mathcal{P}$ for pruning in the
future. Moreover, for successfully deriving the bound stated in Lemma
\ref{lem:1}, in the algorithm the client connections $x_{ij}^{*}$
with the opened sites in $\mathcal{P}$ $ $are rounded up to $\left\lceil x_{ij}^{*}\right\rceil $;
in the analysis the other primal and dual CSCs are also exploited.
At the end of Stage 1, for each $j$, we calculate its established
connection $\hat{r_{j}}$, residual connection requirement $\bar{r_{j}}$
and record its connected sites not in $\mathcal{P}$ as set $\mathcal{F}_{j}$
for the use of next stage.

\begin{algorithm}[H]
\caption{ULPR: Unified LP-Rounding Algorithm}

\textbf{\textcolor{black}{Input}}\textcolor{black}{: }$\mathcal{F},\,\mathcal{C},\,\boldsymbol{f},\,\boldsymbol{c},\,\boldsymbol{r},\,\boldsymbol{R}$.\textbf{\textcolor{black}{{}
Output: }}$\left(\boldsymbol{x},\,\boldsymbol{y}\right)$

\textbf{\textcolor{black}{Initialization}}\textcolor{black}{: }Solve
LPs \eqref{eq:ftra-lp} and \eqref{eq:ftra-dual} to obtain the optimal
fractional solutions $\left(\boldsymbol{x^{*}},\,\boldsymbol{y^{*}}\right)$
and $\left(\boldsymbol{\alpha^{*}},\,\boldsymbol{\beta^{*}},\,\boldsymbol{z^{*}}\right)$.
$\boldsymbol{x}\leftarrow\boldsymbol{0},\,\boldsymbol{y}\leftarrow\boldsymbol{0},\,\mathcal{P}\leftarrow\emptyset$

\textbf{\textcolor{black}{Stage 1}}\textcolor{black}{: Pruning and
Rounding}

\textbf{for} $i\in\mathcal{F}$

\qquad{}\textbf{if} $y_{i}^{*}=R_{i}$ \textbf{do}

\qquad{}\qquad{}$y_{i}\leftarrow R_{i}$

\qquad{}\qquad{}$\mathcal{P}\leftarrow\mathcal{P}\cup\left\{ i\right\} $

\qquad{}\qquad{}\textbf{for} $j\in\mathcal{C}$ 

\qquad{}\qquad{}\qquad{}\textbf{if} $x{}_{ij}^{*}>0$ \textbf{do}

\qquad{}\qquad{}\qquad{}\qquad{}$x_{ij}\leftarrow\left\lceil x_{ij}^{*}\right\rceil $

\textbf{set} $\forall j\in\mathcal{C}:\,\hat{r_{j}}\leftarrow\sum_{i\in\mathcal{P}}x_{ij},\,\bar{r_{j}}\leftarrow r_{j}-\hat{r_{j}},\,\mathcal{F}_{j}\leftarrow\left\{ i\in\mathcal{F}\backslash\mathcal{P}\,|\, x_{ij}^{*}>0\right\} $

\medskip{}

\textbf{\textcolor{black}{Stage 2}}\textcolor{black}{: Clustered Rounding}

\textbf{set} $\bar{\mathcal{C}}\leftarrow\left\{ j\in\mathcal{C}\,|\,\bar{r_{j}}\geq1\right\} $

\textbf{while }$\bar{\mathcal{C}}\neq\emptyset$

\qquad{}\textbf{//2.1}: Construct a cluster $\mathcal{S}$ centered
at $j_{o}$

\qquad{}$j_{o}\leftarrow\arg\min_{j}\left\{ \alpha_{j}^{*}:\, j\in\bar{\mathcal{C}}\right\} $,
order $\mathcal{F}_{j_{o}}$ by non-decreasing site facility costs

\qquad{}choose $\mathcal{S}\subseteq\mathcal{F}_{j_{o}}$ starting
from the cheapest site in $\mathcal{F}_{j_{o}}$ s.t. just $\sum_{i\in\mathcal{S}}y_{i}^{*}\geq\bar{r_{j_{o}}}$

\qquad{}\textbf{if} $\sum_{i\in\mathcal{S}}y_{i}^{*}>\bar{r_{j_{o}}}$
\textbf{do}

\qquad{}\qquad{}split the last most expensive site $i_{l}\in\mathcal{S}$
into $i_{1}$ and $i_{2}$: $y_{i_{1}}^{*}=\bar{r_{j_{o}}}-\sum_{i\in\mathcal{S}\backslash i_{l}}y_{i}^{*}$, 

\qquad{}\qquad{}$y_{i_{2}}^{*}=y_{i_{l}}^{*}-y_{i_{1}}^{*}$;\textbf{
forall $j$}:\textbf{ }set $x_{i_{1}j}^{*},\, x_{i_{2}j}^{*}$ s.t.
$x_{i_{1}j}^{*}+x_{i_{2}j}^{*}=x_{i_{l}j}^{*}$, $x_{i_{1}j}^{*}\leq y_{i_{1}}^{*}$ 

\qquad{}\qquad{}$x_{i_{2}j}^{*}\leq y_{i_{2}}^{*}$ and update $\mathcal{F}_{j}$;\textbf{
}$\mathcal{S}\leftarrow\mathcal{S}\backslash\left\{ i_{l}\right\} \cup\left\{ i_{1}\right\} $
(now $\sum_{i\in\mathcal{S}}y_{i}^{*}=\bar{r_{j_{o}}}$)\medskip{}

\qquad{}//\textbf{2.2}: Rounding around $j_{o}$ and $\mathcal{S}$

\qquad{}//\textbf{2.2.1}: Finish rounding $\boldsymbol{y}$

\qquad{}\textbf{for $i\in\mathcal{S}$ }//from the cheapest site

\qquad{}\qquad{}$y_{i}\leftarrow\left\lceil y_{i}^{*}\right\rceil $ 

\qquad{}\qquad{}$\bar{\mathcal{S}}\leftarrow\bar{\mathcal{S}}\cup\left\{ i\right\} $
//maintain a set of already rounded sites

\qquad{}\qquad{}\textbf{if $\sum_{i'\in\bar{\mathcal{S}}}y_{i'}\geq\bar{r_{j_{o}}}$}

\qquad{}\qquad{}\qquad{}$y_{i}\leftarrow\bar{r_{j_{o}}}-\sum_{i'\in\bar{\mathcal{S}}\backslash i}y_{i'}$
(resetting $y_{i}$ to make $\sum_{i'\in\bar{\mathcal{S}}}y_{i}=\bar{r_{j_{o}}}$)

\qquad{}\qquad{}\qquad{}\textbf{break}\medskip{}

\qquad{}//\textbf{2.2.2}: Finish rounding $\boldsymbol{x}$

\qquad{}\textbf{for} $j\in\bar{\mathcal{C}}$ //including $j_{o}$

\qquad{}\qquad{}if $\mathcal{F}_{j}\cap\mathcal{S}\neq\emptyset$

\qquad{}\qquad{}\qquad{}\textbf{for $i\in\bar{\mathcal{S}}$ }//order
does not matter, could connect to the closest

\qquad{}\qquad{}\qquad{}\qquad{}$x_{ij}\leftarrow\min\left(\bar{r_{j}},\, y_{i}\right)$

\qquad{}\qquad{}\qquad{}\qquad{}$\bar{r_{j}}\leftarrow\bar{r_{j}}-x_{ij}$

\qquad{}\qquad{}$\mathcal{F}_{j}=\mathcal{F}_{j}\backslash\mathcal{S}$

\qquad{}\textbf{update $\bar{\mathcal{C}}$}
\end{algorithm}

While most LP-rounding algorithms round optimal solutions with values
in $\left[0,\,1\right]$, for $FTRA$, our approach directly rounds
the solutions with values in $\left[0,\, R_{i}\right]$ and later
we shall analyze its correctness via establishing some useful properties.
Like the major LP-rounding algorithms \cite{Sviridenko02improved1.58,jaroslaw2010optimal,jaroslaw2012lp,Li2011}
for $UFL$, Stage 2 of our algorithm also inherits the classical iterative
clustering idea \cite{Shmoys97FL,Chudak0312e}. The clustering and
rounding here terminate when all $\bar{r_{j}}$'s are satisfied, i.e.
the set of not-fully-connected clients $\bar{\mathcal{C}}=\emptyset$
in the algorithm. Stage 2 consists of two substages 2.1 and 2.2, dealing
with cluster construction and cluster guided rounding respectively.
Stage 2.1 essentially adopts the facility cloning idea \cite{Swamy08FTFL2.076}
for the deterministic rounding of $FTFL$. Nevertheless, here we are
splitting sites. In each iteration, it first picks the cluster center
$j_{o}$ with the smallest optimal dual value, and then builds a cluster
$\mathcal{S}$ around it which contains a subset of ordered sites
in $\mathcal{F}_{j_{o}}$, starting from the cheapest site until $\sum_{i\in\mathcal{S}}y_{i}^{*}\geq\bar{r_{j_{o}}}$.
In order to maintain the invariant $\forall j\in\bar{\mathcal{C}}:\,\sum_{i\in\mathcal{F}_{j}}y_{i}^{*}\geq\bar{r_{j}}$
in every iteration, the stage then splits the last site $i_{l}\in\mathcal{S}$
into $i_{1}$ and $i_{2}$, updates the client connections w.r.t.
$i_{1}$ and $i_{2}$, and in $\mathcal{S}$ includes $i_{1}$ while
excluding $i_{l}$ to keep $\sum_{i\in\mathcal{S}}y_{i}^{*}=\bar{r_{j_{o}}}$.
Stage 2.2 does the final rounding steps around $\mathcal{S}$ in addition
to Stage 1 to produce a feasible integral solution $\left(\boldsymbol{x},\,\boldsymbol{y}\right)$.
This stage modifies and generalizes the rounding steps for $FTFL$.
Its substage 2.2.1 rounds up the sites ($y_{i}^{*}\rightarrow\left\lceil y_{i}^{*}\right\rceil $)
from the cheapest site in $\mathcal{S}$ until $\bar{\mathcal{S}}$
(the set of sites rounded so far) just satisfies \textbf{$\sum_{i'\in\bar{\mathcal{S}}}y_{i'}\geq\bar{r_{j_{o}}}$}
(now these $y_{i'}$'s are already integral). To make sure $\sum_{i'\in\bar{\mathcal{S}}}y_{i}=\bar{r_{j_{o}}}$
for bounding the site facility opening cost (cf. Lemma \ref{lem: boc}),
the integral facility opening $y_{i}$ of the last site $i$ in $\bar{\mathcal{S}}$
is reset to $\bar{r_{j_{o}}}-\sum_{i'\in\bar{\mathcal{S}}\backslash i}y_{i'}$,
which is also integral. After the facilities at the sites in $\bar{\mathcal{S}}$
are opened according to the $y_{i}$'s constructed in stage 2.2.1,
stage 2.2.2 then connects every client $j$ in $\bar{\mathcal{C}}$
which has connections to the sites in $\mathcal{S}$ (according to
the $x_{ij}^{*}$'s) to $\min\left(\bar{r_{j}},\,\bar{r_{j_{o}}}\right)$
of these open facilities. It does this by iterating through all sites
in $\bar{\mathcal{S}}$, setting $x_{ij}$'s and updating $\bar{r_{j}}$'s
as described in the algorithm. At the end, for the run of next iteration,
the sites in the cluster $\mathcal{S}$ are excluded from $\mathcal{F}_{j}$,
implying all clusters chosen in the iterations are disjoint; and $\bar{\mathcal{C}}$
is updated (at least $j_{o}$ is removed from the set). 

In the analysis, we first demonstrate the overall correctness of the
algorithm ensured by the following properties. Note that some of the
proofs in this section frequently refer to the content of Section
2.

\medskip{}

(P3) After Stage 1, $ $$\forall i\in\mathcal{P},\, j\in\mathcal{C}:$
$x_{ij}\leq R_{i}$ and $\bar{r_{j}}=r_{j}-\hat{r_{j}}\geq0$.
\begin{proof}
The first part of the property is obvious since $x_{ij}=0$ or $x_{ij}=\left\lceil x_{ij}^{*}\right\rceil \leq\left\lceil y_{i}^{*}\right\rceil \leq R_{i}$.
For the second part, $\forall j\in\mathcal{C}:$ if all $x_{ij}^{*}$
are integers, we are done. Now we only need to consider $j$'s fractional
$x_{ij}^{*}$ connecting with $ $$\mathcal{P}$. By the previous
property (P2), there is at most one fractional $x_{ij}^{*}$ with
$\mathcal{P}$ because all $y_{i}^{*}$'s in $\mathcal{P}$ are integers.
Therefore, in Stage 1, at most one fractional $x_{ij}^{*}$ is rounded
up which will not make $\hat{r_{j}}$ exceed $r_{j}$.
\end{proof}

(P4) Stage 2.2.1 rounds $y_{i_{1}}^{*}$ (the optimal fractional opening
of the last site $i_{1}$ in $\mathcal{S}$ which is included in Stage
2.1) to at most $\left\lfloor y_{i_{1}}^{*}\right\rfloor $. 
\begin{proof}
If $y_{i_{1}}^{*}$ is integral, the property clearly holds. Otherwise
if $y_{i_{1}}^{*}$ is fractional, Case 1): if \textbf{$\sum_{i'\in\bar{\mathcal{S}}}y_{i'}=\bar{r_{j_{o}}}$
}before resetting the last site $i$ in Stage 2.2.1, this $i$ definitely
appears before $i_{1}$ in $\mathcal{S}$ because otherwise $\sum_{i'\in\bar{\mathcal{S}}}\left\lceil y_{i'}^{*}\right\rceil $
will exceed $\bar{r_{j_{o}}}$, therefore $i_{1}$ is left unrounded;
Case 2): If \textbf{$\sum_{i'\in\bar{\mathcal{S}}}y_{i'}>\bar{r_{j_{o}}}$},
the last site $i$ is possibly $i_{1}$ and if it is then $\bar{\mathcal{S}}=\mathcal{S}$,
and from the algorithm we have rounded $y_{i_{1}}=\bar{r_{j_{o}}}-\sum_{i'\in\mathcal{S}\backslash i_{1}}\left\lceil y_{i'}^{*}\right\rceil $
after resetting. If $y_{i_{1}}>\left\lfloor y_{i_{1}}^{*}\right\rfloor $
we get $\sum_{i'\in\mathcal{S}\backslash i_{1}}\left\lceil y_{i'}^{*}\right\rceil +\left\lfloor y_{i_{1}}^{*}\right\rfloor <\bar{r_{j_{o}}}$
which is not possible since $\sum_{i\in\mathcal{S}}y_{i}^{*}=\bar{r_{j_{o}}}=\sum_{i'\in\mathcal{S}\backslash i_{1}}y_{i'}^{*}+y_{i_{1}}^{*}=\left\lceil \sum_{i'\in\mathcal{S}\backslash i_{1}}y_{i'}^{*}\right\rceil +\left\lfloor y_{i_{1}}^{*}\right\rfloor \leq\sum_{i'\in\mathcal{S}\backslash i_{1}}\left\lceil y_{i'}^{*}\right\rceil +\left\lfloor y_{i_{1}}^{*}\right\rfloor $
(because $y_{i_{1}}^{*}$ is fractional). Hence, $ $$y_{i_{1}}^{*}$
is rounded to at most $\left\lfloor y_{i_{1}}^{*}\right\rfloor $.
\end{proof}

(P5) $\forall i\in\mathcal{F}:$ given $0<y_{i_{1}}^{*}+y_{i_{2}}^{*}=y_{i}^{*}\leq R_{i}$,
then we have $\left\lfloor y_{i_{1}}^{*}\right\rfloor +\left\lceil y_{i_{2}}^{*}\right\rceil \leq\left\lceil y_{i}^{*}\right\rceil \leq R_{i}$.
\begin{proof}
We first have $ $$\left\lfloor y_{i_{1}}^{*}\right\rfloor \leq y_{i_{1}}^{*}$
and $\left\lceil y_{i_{2}}^{*}\right\rceil <y_{i_{2}}^{*}+1$, so
$\left\lfloor y_{i_{1}}^{*}\right\rfloor +\left\lceil y_{i_{2}}^{*}\right\rceil <y_{i_{1}}^{*}+y_{i_{2}}^{*}+1=y_{i}^{*}+1$.
Now if $y_{i}^{*}$ is integral, because $\left\lfloor y_{i_{1}}^{*}\right\rfloor +\left\lceil y_{i_{2}}^{*}\right\rceil $
is also integral, $\left\lfloor y_{i_{1}}^{*}\right\rfloor +\left\lceil y_{i_{2}}^{*}\right\rceil \leq\left\lceil y_{i}^{*}\right\rceil $.
Otherwise if $y_{i}^{*}$ is fractional, $\left\lfloor y_{i_{1}}^{*}\right\rfloor +\left\lceil y_{i_{2}}^{*}\right\rceil \leq\left\lfloor y_{i}^{*}\right\rfloor +1=\left\lceil y_{i}^{*}\right\rceil $.
The property then follows from $\forall i\in\mathcal{F}:\,\left\lceil y_{i}^{*}\right\rceil \leq R_{i}$.
\end{proof}
\medskip{}

In summary, property (P3) shows the correctness of Stage 1 before
going into Stage 2, (P4) and (P5) together ensure the splitting in
Stage 2.1 and the rounding in Stage 2.2.1 produce feasible $y_{i}$'s
for $FTRA$. This is because for any split sites $i_{1}$ and $i_{2}$
from $i$, (P4) guarantees at $i_{1}$ at most $\left\lfloor y_{i_{1}}^{*}\right\rfloor $
facilities are open, and (P5) makes sure that even $\left\lceil y_{i_{2}}^{*}\right\rceil $
facilities are opened at $i_{2}$ in the subsequent iterations of
the algorithm, no more than $R_{i}$ facilities in total actually
get opened at $i$. Note that, (P5) also covers the situation that
a site is repeatedly (recursively) split. Furthermore, in each iteration,
Stage 2 at least fully connects the client $j_{o}$ and considers
all sites in the cluster $\mathcal{S}$ centered at $j_{o}$. More
importantly, the invariant $\forall j\in\bar{\mathcal{C}}:\,\sum_{i\in\mathcal{F}_{j}}y_{i}^{*}\geq\bar{r_{j}}$
is maintained for choosing the feasible cluster $\mathcal{S}$ in
Stage 2.1. This is true in the first iteration. In the subsequent
iterations, the invariant still preserves because for any $j$ with
$\mathcal{F}_{j}\cap\mathcal{S}\neq\emptyset$ that is not fully connected
in the current iteration, in the next iteration, $\sum_{i\in\mathcal{F}_{j}}y_{i}^{*}$
is decreased by at most $\bar{r_{j_{o}}}$ (because Stage 2.1 splits
sites to maintain $\sum_{i\in\mathcal{S}}y_{i}^{*}=\bar{r_{j_{o}}}$
and $\mathcal{S}$ is excluded from $\mathcal{F}_{j}$ in Stage 2.2.2)
and $\bar{r_{j}}$ is decreased by exactly $\bar{r_{j_{o}}}$ (from
Stage 2.2.2). Therefore, the overall algorithm is correct.

Furthermore, the time complexity of the rounding stages of Algorithm
1 is $O\left(n^{3}\right)$ since each iteration of Stage 2 at least
fully connects one of $n_{c}$ clients which takes time $O\left(n^{2}\right)$.
In the following, we separately bound the partial solution costs incurred
in the stages involving rounding and then combine these costs for
achieving the approximation ratio.
\begin{lemma}
After pruning and rounding, the partial total cost from Stage 1 is
$\sum_{j\in\mathcal{C}}\hat{r_{j}}\alpha_{j}^{*}-\sum_{i\in\mathcal{F}}R_{i}z_{i}^{*}$.
\label{lem:1}\end{lemma}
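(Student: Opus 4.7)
The plan is to evaluate the partial cost incurred by Stage 1 directly and then match it term by term to the right-hand side using the complementary slackness conditions (C1), (C2), (C5) together with (C4) and property (P2). Only sites in $\mathcal{P}$ are opened in Stage 1 (each at full capacity $R_i$) and the only connection variables set in Stage 1 are the $x_{ij}=\lceil x_{ij}^{*}\rceil$ for $i\in\mathcal{P}$ with $x_{ij}^{*}>0$, so the partial cost equals
\[
\sum_{i\in\mathcal{P}} f_i R_i \;+\; \sum_{i\in\mathcal{P}}\sum_{j\in\mathcal{C}:\,x_{ij}^{*}>0} c_{ij}\lceil x_{ij}^{*}\rceil.
\]

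First I would invoke (C5): since $z_i^{*}>0$ forces $y_i^{*}=R_i$, we have $z_i^{*}=0$ for every $i\notin\mathcal{P}$, so $\sum_{i\in\mathcal{F}} R_i z_i^{*}=\sum_{i\in\mathcal{P}} R_i z_i^{*}$. Next, because every $i\in\mathcal{P}$ satisfies $y_i^{*}=R_i\ge 1>0$, (C2) lets me replace $f_i$ with $\sum_{j\in\mathcal{C}}\beta_{ij}^{*}-z_i^{*}$, turning $f_i R_i$ into $R_i\sum_j\beta_{ij}^{*} - R_i z_i^{*}$. Then (C1), applied to every $x_{ij}^{*}>0$, replaces $c_{ij}$ by $\alpha_j^{*}-\beta_{ij}^{*}$. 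After these substitutions the partial cost reads
\[
\sum_{i\in\mathcal{P}}\Bigl[R_i\sum_{j\in\mathcal{C}}\beta_{ij}^{*} \;+\; \sum_{j:\,x_{ij}^{*}>0}(\alpha_j^{*}-\beta_{ij}^{*})\lceil x_{ij}^{*}\rceil\Bigr] \;-\; \sum_{i\in\mathcal{F}} R_i z_i^{*}.
\]

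The main obstacle will be showing the $\beta$-terms cancel, i.e. verifying the identity $R_i\sum_{j}\beta_{ij}^{*}=\sum_{j:\,x_{ij}^{*}>0}\beta_{ij}^{*}\lceil x_{ij}^{*}\rceil$ for each $i\in\mathcal{P}$. This is where I would use the structural properties: $y_i^{*}=R_i$ is integral, so by (P2) any $x_{ij}^{*}>0$ either equals $y_i^{*}=R_i$ (already integral, so $\lceil x_{ij}^{*}\rceil=R_i$) or is strictly less than $y_i^{*}$, in which case (C4) forces $\beta_{ij}^{*}=0$. Therefore each summand contributes either $\beta_{ij}^{*}R_i$ or $0$, exactly matching $R_i\sum_j\beta_{ij}^{*}$. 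Once this cancellation is in hand, the remaining expression collapses to $\sum_{i\in\mathcal{P}}\sum_{j:\,x_{ij}^{*}>0}\alpha_j^{*}\lceil x_{ij}^{*}\rceil - \sum_{i\in\mathcal{F}} R_i z_i^{*}$, and swapping the order of summation together with $\hat{r_j}=\sum_{i\in\mathcal{P}} x_{ij}$ yields the claimed $\sum_{j\in\mathcal{C}}\hat{r_j}\alpha_j^{*} - \sum_{i\in\mathcal{F}} R_i z_i^{*}$.
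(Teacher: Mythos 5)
Your proposal is correct and follows essentially the same route as the paper's proof: both rest on (C1), (C2), (C5), and the key observation via (C4) that for $i\in\mathcal{P}$ either $x_{ij}^{*}=y_{i}^{*}=R_{i}$ (so $\lceil x_{ij}^{*}\rceil=R_{i}$) or $\beta_{ij}^{*}=0$, which is exactly what makes the $\beta$-terms match up. The only difference is direction — you start from the Stage 1 cost and substitute upward to the dual expression, while the paper expands $\sum_{j}\lceil x_{ij}^{*}\rceil\alpha_{j}^{*}$ downward — which is an immaterial reordering of the same algebra.
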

\begin{proof}
$\forall i\in\mathcal{P}:\,$

\begin{eqnarray*}
 & \sum_{j\in\mathcal{C}}\left\lceil x_{ij}^{*}\right\rceil \alpha_{j}^{*} & =\sum_{j\in\mathcal{C}}\left\lceil x_{ij}^{*}\right\rceil c_{ij}+\sum_{j\in\mathcal{C}}\left\lceil x_{ij}^{*}\right\rceil \beta_{ij}^{*}\\
 &  & =\sum_{j\in\mathcal{C}}\left\lceil x_{ij}^{*}\right\rceil c_{ij}+\sum_{j:\, x_{ij}^{*}=y_{i}^{*}=R_{i}}\left\lceil x_{ij}^{*}\right\rceil \beta_{ij}^{*}+\sum_{j:\, x_{ij}^{*}<y_{i}^{*}=R_{i}}\left\lceil x_{ij}^{*}\right\rceil \beta_{ij}^{*}\\
 &  & =\sum_{j\in\mathcal{C}}\left\lceil x_{ij}^{*}\right\rceil c_{ij}+\sum_{j:\, x_{ij}^{*}=y_{i}^{*}=R_{i}}\left\lceil x_{ij}^{*}\right\rceil \beta_{ij}^{*}\\
 &  & =\sum_{j\in\mathcal{C}}\left\lceil x_{ij}^{*}\right\rceil c_{ij}+R_{i}\left(\sum_{j:\, x_{ij}^{*}=y_{i}^{*}=R_{i}}\beta_{ij}^{*}+\sum_{j:\, x_{ij}^{*}<y_{i}^{*}=R_{i}}\beta_{ij}^{*}\right)\\
 &  & =\sum_{j\in\mathcal{C}}\left\lceil x_{ij}^{*}\right\rceil c_{ij}+R_{i}\sum_{j\in\mathcal{C}}\beta_{ij}^{*}\\
 &  & =\sum_{j\in\mathcal{C}}\left\lceil x_{ij}^{*}\right\rceil c_{ij}+R_{i}f_{i}+R_{i}z_{i}^{*}.
\end{eqnarray*}

The first equality is due to the condition (C1), the third, fourth
and fifth is because by (C4) we have $\forall i\in\mathcal{P},\, j\in\mathcal{C}:$
if $\beta_{ij}^{*}>0\,$ then $x_{ij}^{*}=y_{i}^{*}=R_{i}$, so $x_{ij}^{*}<y_{i}^{*}=R_{i}$
implies $\beta_{ij}^{*}=0$ (by the contraposition in logic) and also
$\sum_{j:\, x_{ij}^{*}<y_{i}^{*}=R_{i}}\beta_{ij}^{*}=0$. The last
equality is obtained from (C2), and the fact that $\forall i\in\mathcal{P}:\, y_{i}^{*}=R_{i}>0$.

Summing both sides over all $i\in\mathcal{P}$, we can then bound
the cost of Stage 1:

$ $
\begin{eqnarray*}
 & \sum_{i\in\mathcal{P}}\sum_{j\in\mathcal{C}}\left\lceil x_{ij}^{*}\right\rceil c_{ij}+\sum_{i\in\mathcal{P}}R_{i}f_{i} & =\sum_{i\in\mathcal{P}}\sum_{j\in\mathcal{C}}\left\lceil x_{ij}^{*}\right\rceil \alpha_{j}^{*}-\sum_{i\in\mathcal{P}}R_{i}z_{i}^{*}\\
 &  & =\sum_{j\in\mathcal{C}}\hat{r_{j}}\alpha_{j}^{*}-\sum_{i\in\mathcal{F}}R_{i}z_{i}^{*}.
\end{eqnarray*}

The second equality follows from Stage 1 that $\forall j\in\mathcal{C}:\,\hat{r_{j}}=\sum_{i\in\mathcal{P}}\left\lceil x_{ij}^{*}\right\rceil $,
and the condition (C5): if $z_{i}^{*}>0\,$ then $y_{i}^{*}=R_{i}$,
so $y_{i}^{*}<R_{i}$ implies $z_{i}^{*}=0\,$.\end{proof}
\begin{lemma}
After rounding $\boldsymbol{y}$, the partial site facility opening
cost from Stage 2.2.1 is at most $\sum_{i\in\mathcal{F}\backslash\mathcal{P}}f_{i}y_{i}^{*}$.
\label{lem: boc}\end{lemma}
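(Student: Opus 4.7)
The plan is to bound the opening cost one cluster at a time and then sum across iterations, using two structural facts that Stage~2 enforces. After the optional split in Stage~2.1, the cluster $\mathcal{S}$ satisfies $\sum_{i\in\mathcal{S}} y_i^* = \bar r_{j_o}$, with the sites of $\mathcal{S}$ ordered by non-decreasing $f_i$; and after Stage~2.2.2 the clusters chosen across iterations are pairwise disjoint in the original site set (a split copy $i_1$ placed in one cluster and its companion $i_2$ that may be absorbed into a later cluster together satisfy $f_{i_1}=f_{i_2}=f_{i_l}$ and $y_{i_1}^*+y_{i_2}^*=y_{i_l}^*$, so their joint contribution to $\sum_i f_i y_i^*$ equals the original site's contribution). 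Hence it suffices to prove, for a single cluster $\mathcal{S}$, that $\sum_{i\in\mathcal{S}} f_i y_i \le \sum_{i\in\mathcal{S}} f_i y_i^*$, and then sum to obtain the claim against $\sum_{i\in\mathcal{F}\backslash\mathcal{P}} f_i y_i^*$.

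For the per-cluster bound I would label $\mathcal{S}=\{i_1,\dots,i_m\}$ in the Stage~2.1 order, so $f_{i_1}\le\cdots\le f_{i_m}$, and let $t$ be the index at which Stage~2.2.1 breaks. Then $y_{i_k}=\lceil y_{i_k}^*\rceil\ge y_{i_k}^*$ for $k<t$, $y_{i_k}=0$ for $k>t$, and the reset line sets $y_{i_t}=\bar r_{j_o}-\sum_{k<t}\lceil y_{i_k}^*\rceil$. Combined with $\sum_{k=1}^m y_{i_k}^*=\bar r_{j_o}$ from Stage~2.1, this produces the decisive identity $\sum_{k\le t}(y_{i_k}-y_{i_k}^*)=\sum_{k>t} y_{i_k}^*=:\delta\ge 0$. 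Because the $f_{i_k}$ are non-decreasing, I would then control the net positive contribution by
\[
\sum_{k\le t} f_{i_k}(y_{i_k}-y_{i_k}^*)\;\le\; f_{i_t}\cdot\delta \;\le\; f_{i_{t+1}}\cdot\delta \;\le\; \sum_{k>t} f_{i_k} y_{i_k}^*,
\]
treating the trivial case $t=m$, $\delta=0$ separately. Rearranging gives the per-cluster bound, and summing across the disjoint clusters together with the split-site accounting yields the claimed upper bound $\sum_{i\in\mathcal{F}\backslash\mathcal{P}} f_i y_i^*$.

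The main obstacle I foresee is the bookkeeping at the boundary index $t$: after the reset, $y_{i_t}$ can be strictly smaller than $y_{i_t}^*$, so the sum $\sum_{k\le t}(y_{i_k}-y_{i_k}^*)$ mixes positive \emph{excesses} from $k<t$ with a possible negative \emph{deficit} at $k=t$, and the displayed inequality must absorb both correctly. This is precisely why the ordering must be used in the form $f_{i_k}\le f_{i_t}$ (not $f_{i_k}\le f_{i_{t+1}}$) for the $k\le t$ terms, so that the mixed-sign sum is still bounded by $f_{i_t}\delta$. Once this point is handled, the remaining steps are clean arithmetic, and the global sum follows immediately from cluster disjointness.
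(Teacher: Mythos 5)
Your proof is correct and follows essentially the same route as the paper: both arguments rest on the facts that $\sum_{i\in\mathcal{S}}y_{i}^{*}=\bar{r_{j_{o}}}$ after the split, that the rounded openings also sum to $\bar{r_{j_{o}}}$ but are concentrated on a cheapest-first prefix of $\mathcal{S}$, and that the clusters are disjoint across iterations. The only difference is that you spell out the prefix mass-shifting arithmetic (the identity $\sum_{k\le t}(y_{i_k}-y_{i_k}^{*})=\sum_{k>t}y_{i_k}^{*}$ and the chain of inequalities through $f_{i_t}$ and $f_{i_{t+1}}$) and the split-site cost accounting, both of which the paper asserts without detailed computation.
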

\begin{proof}
Facilities at sites $i\in{\cal S}\subseteq\mathcal{F}\backslash\mathcal{P}$
are opened in $f_{i}$'s non-decreasing order in Stage 2.2.1: In any
iteration of the algorithm with picked cluster $\mathcal{S}$, before
rounding we have $\sum_{i\in\mathcal{S}}y_{i}^{*}=\bar{r_{j_{o}}}$;
after rounding set $\bar{\mathcal{S}}$ is formed starting from the
cheapest site in $\mathcal{S}$ s.t. $\sum_{i'\in\bar{\mathcal{S}}}y_{i}=\bar{r_{j_{o}}}$.
This makes the opening cost of all sites in cluster $\bar{{\cal S}}$
at most $\sum_{i\in\mathcal{S}}f_{i}y_{i}^{*}$. The lemma then follows
from the fact that all chosen clusters are disjoint in the algorithm.\end{proof}
\begin{lemma}
After rounding $\boldsymbol{x}$, the partial connection cost from
Stage 2.2.2 is at most $3\sum_{j\in\mathcal{C}}\bar{r_{j}}\alpha_{j}^{*}$.
\label{lem: conn} \end{lemma}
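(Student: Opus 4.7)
The plan is to bound the Stage 2.2.2 cost one cluster at a time and then sum, using the triangle inequality to route non-center clients through the cluster center $j_o$.

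First I would fix an iteration of Stage 2 with cluster $\mathcal{S}$ centered at $j_o$, noting that the sites touched by Stage 2.2.2 lie in $\bar{\mathcal{S}}\subseteq\mathcal{S}\subseteq\mathcal{F}_{j_o}$, so for every $i\in\bar{\mathcal{S}}$ we have $x_{ij_o}^{*}>0$. The primal CSC (C1) then gives $c_{ij_o}\le\alpha_{j_o}^{*}+\beta_{ij_o}^{*}$ and in particular $c_{ij_o}\le\alpha_{j_o}^{*}$ after dropping $\beta^{*}$ (or more directly $\alpha_{j_o}^{*}=\beta_{ij_o}^{*}+c_{ij_o}\ge c_{ij_o}$). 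Hence the at most $\bar{r}_{j_o}$ new connections made for $j_o$ itself cost at most $\bar{r}_{j_o}\alpha_{j_o}^{*}$.

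Next I would treat a non-center client $j\in\bar{\mathcal{C}}$ that acquires new connections in this iteration, i.e.\ $\mathcal{F}_{j}\cap\mathcal{S}\neq\emptyset$. Pick any witness $i'\in\mathcal{F}_{j}\cap\mathcal{S}$; by definition $x_{i'j}^{*}>0$ and $x_{i'j_o}^{*}>0$, so (C1) yields $c_{i'j}\le\alpha_{j}^{*}$ and $c_{i'j_o}\le\alpha_{j_o}^{*}$. For any site $i\in\bar{\mathcal{S}}$ used to connect $j$ in Stage 2.2.2, the metric inequality gives
\[
c_{ij}\ \le\ c_{ij_o}+c_{i'j_o}+c_{i'j}\ \le\ 2\alpha_{j_o}^{*}+\alpha_{j}^{*}.
\]
Because Stage 2.1 selects $j_o=\arg\min\{\alpha_{j}^{*}:j\in\bar{\mathcal{C}}\}$, we have $\alpha_{j_o}^{*}\le\alpha_{j}^{*}$, so $c_{ij}\le 3\alpha_{j}^{*}$. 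The algorithm adds exactly $\min(\bar{r}_{j},\bar{r}_{j_o})\le\bar{r}_{j}$ connections for $j$ in this iteration, each of cost at most $3\alpha_{j}^{*}$, contributing at most $3\min(\bar{r}_{j},\bar{r}_{j_o})\alpha_{j}^{*}$ to the total. The same bound $3\bar{r}_{j_o}\alpha_{j_o}^{*}$ covers $j_o$ trivially.

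Finally I would sum across iterations. Since the chosen clusters $\mathcal{S}$ are pairwise disjoint and after each iteration the set $\mathcal{F}_{j}$ of every affected $j$ loses exactly the sites in $\mathcal{S}$ while its residual requirement $\bar{r}_{j}$ drops by the matching amount, each client $j$ accumulates a total of at most its initial $\bar{r}_{j}$ connections over the whole stage, each costing at most $3\alpha_{j}^{*}$. Summing the per-iteration bound over all iterations therefore gives $\sum_{j\in\mathcal{C}}3\bar{r}_{j}\alpha_{j}^{*}$, as claimed. The main subtlety I expect is the bookkeeping that guarantees the invariant $\mathcal{F}_{j}\cap\mathcal{S}\neq\emptyset$ delivers a genuine witness $i'$ (after the site-splitting in Stage 2.1 may have altered $\mathcal{F}_{j}$) and that the per-iteration bound uses $\min(\bar{r}_{j},\bar{r}_{j_o})$ so that telescoping $\bar{r}_{j}$ across iterations is valid; everything else reduces to one application each of (C1) and the triangle inequality.
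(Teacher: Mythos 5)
Your proposal is correct and follows essentially the same route as the paper's proof: bound $c_{ij}$ by $\alpha_{j}^{*}$ via (C1) when $x_{ij}^{*}>0$, and otherwise route through a witness $i'\in\mathcal{F}_{j}\cap\mathcal{S}$ and the center $j_{o}$ using the triangle inequality to get $c_{ij}\leq\alpha_{j}^{*}+2\alpha_{j_{o}}^{*}\leq3\alpha_{j}^{*}$, then charge each client for its $\bar{r_{j}}$ connections. Your explicit per-iteration accounting of the $\min\left(\bar{r_{j}},\,\bar{r_{j_{o}}}\right)$ telescoping is slightly more careful than the paper's one-line summation, but it is the same argument.
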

\begin{proof}
Let site $i$ lie in the cluster $\mathcal{S}$ centered at $j_{o}$.
If $j$ is already connected to $i$ ($x_{ij}^{*}>0$), then $c_{ij}\leq\alpha_{j}^{*}$
from the condition (C1). Otherwise, if $j$ connects to $i$ after
rounding, from the algorithm, it implies $\alpha_{j_{o}}^{*}\leq\alpha_{j}^{*}$
(because $j$ with the smallest $\alpha_{j}^{*}$ is always chosen
as $j_{o}$) and $\mathcal{F}_{j}\cap\mathcal{S}\neq\emptyset$. Fig.
2 then displays the case $\mathcal{F}_{j}\cap\mathcal{S}\neq\emptyset$
where initially $j$ connects to $i'$ and it is connected to $i$
after rounding. By the triangle inequality, we have $c_{ij}\leq c_{i'j}+c_{ij_{o}}+c_{i'j_{o}}$.
Also, it is true that $x_{i'j}^{*},\, x_{ij_{o}}^{*},\, x_{i'j_{o}}^{*}>0$,
so from (C1) we have $c_{i'j}\leq\alpha_{j}^{*}$ and $ $$c_{ij_{o}},\, c_{i'j_{o}}\leq\alpha_{j_{o}}^{*}$.
Hence, $c_{ij}\leq\alpha_{j}^{*}+2\alpha_{j_{o}}^{*}\leq3\alpha_{j}^{*}$.
Since each $j$ makes $\bar{r_{j}}$ connections, the total partial
connection cost bound is $3\sum_{j\in\mathcal{C}}\bar{r_{j}}\alpha_{j}^{*}$
and the lemma follows. Note that Fig. 2 does not show multiplicity
of the connection between any client and site in an $FTRA$ solution.
It is merely for simplicity and will not affect the correctness of
the proof.
\begin{figure}
\begin{centering}
\includegraphics{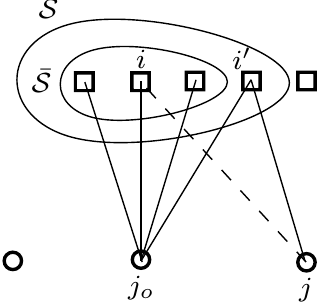}
\par\end{centering}

\caption{Illustration of bounding the connection costs}
\end{figure}
\end{proof}
\begin{theorem}
Algorithm ULPR is 4-approximation for $FTRA$.\end{theorem}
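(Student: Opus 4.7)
The plan is to sum the three partial bounds from Lemmas~\ref{lem:1}, \ref{lem: boc}, and \ref{lem: conn}, and then exploit LP strong duality to collapse the result into $4D^*$, where $D^*=\sum_{j}r_j\alpha_j^*-\sum_i R_i z_i^*$ is the dual optimum. Since strong duality gives $D^*=P^*=\sum_i f_i y_i^*+\sum_{i,j}c_{ij}x_{ij}^*$ and $P^*$ lower bounds the integer optimum $OPT$, this suffices. Adding the three partial bounds yields
\[
\mathrm{cost}(\boldsymbol{x},\boldsymbol{y})\le\sum_{j\in\mathcal{C}}\hat{r}_j\alpha_j^*-\sum_{i\in\mathcal{F}}R_iz_i^*+\sum_{i\in\mathcal{F}\setminus\mathcal{P}}f_iy_i^*+3\sum_{j\in\mathcal{C}}\bar{r}_j\alpha_j^*.
\]

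The algebraic core is to show that $\sum_j\hat{r}_j\alpha_j^*-\sum_iR_iz_i^*+3\sum_j\bar{r}_j\alpha_j^*\le 3D^*$. After expanding $\bar{r}_j=r_j-\hat{r}_j$ and cancelling common terms, this reduces to the requirement $\sum_iR_iz_i^*\le\sum_j\hat{r}_j\alpha_j^*$, which follows immediately from the identity established inside the proof of Lemma~\ref{lem:1}: $\sum_j\hat{r}_j\alpha_j^*-\sum_iR_iz_i^*$ equals the nonnegative quantity $\sum_{i\in\mathcal{P}}R_if_i+\sum_{i\in\mathcal{P}}\sum_j\lceil x_{ij}^*\rceil c_{ij}$. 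The residual facility term is then controlled via the primal objective: $\sum_{i\in\mathcal{F}\setminus\mathcal{P}}f_iy_i^*\le\sum_{i\in\mathcal{F}}f_iy_i^*\le P^*=D^*$. Combining gives $\mathrm{cost}(\boldsymbol{x},\boldsymbol{y})\le 3D^*+D^*=4D^*\le 4\cdot OPT$.

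The hard part is the bookkeeping around the dual term $\sum_iR_iz_i^*$, which is introduced by the resource constraint $y_i\le R_i$ in $FTRA$ and has no counterpart in classical $FTFL$. A naive application of $\bar{r}_j\le r_j$ inside $3\sum_j\bar{r}_j\alpha_j^*$ would leave a residual positive multiple of $\sum_iR_iz_i^*$ that inflates the ratio past $4$. The clean cancellation relies on the \emph{equality} (not merely inequality) form of Lemma~\ref{lem:1}, which itself is a consequence of Stage~1's design of fully opening every tight site and rounding up its fractional connections, so that conditions (C2), (C4), and (C5) tie $\sum_iR_iz_i^*$ exactly to facility and connection costs the algorithm has already paid.
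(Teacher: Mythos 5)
Your proposal is correct and follows essentially the same route as the paper's proof: both sum the three lemma bounds, replace $\sum_{i\in\mathcal{F}\setminus\mathcal{P}}f_iy_i^*$ by the primal--dual optimum via strong duality, and absorb the leftover terms using the nonnegativity of the Stage~1 cost $\sum_{j}\hat{r}_j\alpha_j^*-\sum_{i}R_iz_i^*$ from Lemma~\ref{lem:1}. Your grouping of the terms into a $3D^*$ part and a $D^*$ part is only a cosmetic rearrangement of the paper's single-substitution calculation, and your final step to the integer optimum is a harmless weakening of the paper's comparison against the fractional optimum.
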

\begin{proof}
Adding up the partial cost bounds stated in the previous lemmas, the
total cost $cost\left(\boldsymbol{x},\,\boldsymbol{y}\right)$ is
therefore at most $\sum_{j\in\mathcal{C}}\hat{r_{j}}\alpha_{j}^{*}-\sum_{i\in\mathcal{F}}R_{i}z_{i}^{*}+\sum_{i\in\mathcal{F}\backslash\mathcal{P}}f_{i}y_{i}^{*}+3\sum_{j\in\mathcal{C}}\bar{r_{j}}\alpha_{j}^{*}$.
Also, we have $cost\left(\boldsymbol{x^{*}},\,\boldsymbol{y^{*}}\right)=\sum_{i\in\mathcal{F}}f_{i}y_{i}^{*}+\sum_{i\in\mathcal{F}}\sum_{j\in\mathcal{C}}c_{ij}x_{ij}^{*}=cost\left(\boldsymbol{\alpha^{*}},\,\boldsymbol{\beta^{*}},\,\boldsymbol{z^{*}}\right)=\sum_{j\in\mathcal{C}}r_{j}\alpha_{j}^{*}-\sum_{i\in\mathcal{F}}R_{i}z_{i}^{*}=\sum_{j\in\mathcal{C}}\hat{r_{j}}\alpha_{j}^{*}+\sum_{j\in\mathcal{C}}\bar{r_{j}}\alpha_{j}^{*}-\sum_{i\in\mathcal{F}}R_{i}z_{i}^{*}$,
so $cost\left(\boldsymbol{x},\,\boldsymbol{y}\right)\leq\sum_{j\in\mathcal{C}}\hat{r_{j}}\alpha_{j}^{*}-\sum_{i\in\mathcal{F}}R_{i}z_{i}^{*}+\left(\sum_{j\in\mathcal{C}}\hat{r_{j}}\alpha_{j}^{*}+\sum_{j\in\mathcal{C}}\bar{r_{j}}\alpha_{j}^{*}-\sum_{i\in\mathcal{F}}R_{i}z_{i}^{*}\right)+3\sum_{j\in\mathcal{C}}\bar{r_{j}}\alpha_{j}^{*}=4\sum_{j\in\mathcal{C}}\bar{r_{j}}\alpha_{j}^{*}+2\sum_{j\in\mathcal{C}}\hat{r_{j}}\alpha_{j}^{*}-2\sum_{i\in\mathcal{F}}R_{i}z_{i}^{*}$$=4cost\left(\boldsymbol{\alpha^{*}},\,\boldsymbol{\beta^{*}},\,\boldsymbol{z^{*}}\right)-2\left(\sum_{j\in\mathcal{C}}\hat{r_{j}}\alpha_{j}^{*}-\sum_{i\in\mathcal{F}}R_{i}z_{i}^{*}\right)\leq4cost\left(\boldsymbol{x^{*}},\,\boldsymbol{y^{*}}\right)$.
The last inequality follows from the fact that $\left(\sum_{j\in\mathcal{C}}\hat{r_{j}}\alpha_{j}^{*}-\sum_{i\in\mathcal{F}}R_{i}z_{i}^{*}\right)$
is the cost of Stage 1 (cf. Lemma \ref{lem:1}) which is nonnegative.
\end{proof}

\section{Reduction to $FTFL$}

Recently, the authors in \cite{yan2011newresults} presented a splitting
idea that is able to reduce any $FTRA_{\infty}$ instance with arbitrarily
large $r_{j}$ to another small $FTRA_{\infty}$ instance with polynomially
bounded $r_{j}$. The direct consequence of this is that $FTRA_{\infty}$
is then reducible to $FTFL$, since we are able to naively split the
sites of the small $FTRA_{\infty}$ instance and the resulting instance
is equivalent to an $FTFL$ instance with a polynomial size. Because
$FTRA$ and $FTRA_{\infty}$ have similar combinatorial structures
where $FTRA_{\infty}$ is a special case, the question then becomes
whether the more general $FTRA$ reduces to $FTFL$ as well. In the
following, we give an affirmative answer to this with an instance
shrinking technique.

Compared to the reduction in \cite{yan2011newresults} for $FTRA_{\infty}$,
first, the instance shrinking technique is more general. This is because
the technique reduces any $FTRA$ instance with arbitrarily large
$R_{i}$ to another small $FTRA$ instance with polynomially bounded
$R_{i}^{s}$, which works for $FTRA_{\infty}$ as well since an $FTRA_{\infty}$
instance can be treated as an $FTRA$ instance with all $R_{i}$'s
set to be $\max_{j\in\mathcal{C}}r_{j}$. The small $FTRA$ instance
is then equivalent to an $FTFL$ instance with a polynomial size ($\sum_{i\in\mathcal{F}}R_{i}^{s}$),
implying $FTRA$ and $FTFL$ share the same approximability in weakly
polynomial time. Second, the reduction for $FTRA_{\infty}$ does not
have a mechanism for bounding $R_{i}^{s}$ polynomially in $FTRA$.
Therefore, it can not directly yield a reduction result for $FTRA$.
On the other hand, our technique initially includes the following
\textit{crucial} instance shrinking mechanism for bounding $R_{i}^{s}$.
\begin{claim}
$\left(\boldsymbol{x^{*}},\,\boldsymbol{y^{*}}\right)$ remains to
be the optimal solution even if $R_{i}$ is replaced with $\left\lceil y_{i}^{*}\right\rceil $
in LP \eqref{eq:ftra-lp}. \end{claim}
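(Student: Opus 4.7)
The plan is to prove the claim by a direct feasibility-and-optimality comparison between the original LP \eqref{eq:ftra-lp} and the ``shrunk'' LP obtained by replacing every $R_i$ with $\lceil y_i^*\rceil$ in the resource constraint; call the latter $\mathrm{LP}_{s}$ and denote its feasible region by $\Omega_s$. Since only the third group of constraints is altered, I would handle this in two short steps, and I do not expect a real obstacle—the force of the statement is conceptual, enabling the subsequent reduction to $FTFL$ by collapsing each $R_i$ down to the (polynomially bounded) integer $\lceil y_i^*\rceil$ without losing the optimal cost.

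First, I would check that $(\boldsymbol{x^*},\,\boldsymbol{y^*})$ remains feasible for $\mathrm{LP}_{s}$. All constraints except the resource cap are identical to those of \eqref{eq:ftra-lp}, which $(\boldsymbol{x^*},\,\boldsymbol{y^*})$ already satisfies by hypothesis. For the modified resource constraint, the bound $y_i^*\leq\lceil y_i^*\rceil$ is trivially true for every $i\in\mathcal{F}$, so $(\boldsymbol{x^*},\,\boldsymbol{y^*})\in\Omega_s$.

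Second, I would argue that shrinking the caps cannot enlarge the feasible region. Because $R_i$ is a positive integer in the ILP formulation \eqref{eq:ftra-ip}, and $y_i^*\leq R_i$ from feasibility of $(\boldsymbol{x^*},\,\boldsymbol{y^*})$ in \eqref{eq:ftra-lp}, we get $\lceil y_i^*\rceil\leq R_i$ for every $i$. Hence every point of $\Omega_s$ satisfies $y_i\leq\lceil y_i^*\rceil\leq R_i$ and is therefore also feasible in \eqref{eq:ftra-lp}. Since the two LPs share the same minimization objective, the optimum of $\mathrm{LP}_s$ is no smaller than $cost(\boldsymbol{x^*},\,\boldsymbol{y^*})$. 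Combined with the previous step, $(\boldsymbol{x^*},\,\boldsymbol{y^*})$ is feasible for $\mathrm{LP}_s$ and attains this lower bound, so it is optimal for $\mathrm{LP}_s$, completing the claim. The only subtlety worth flagging is the integrality of $R_i$, without which the inequality $\lceil y_i^*\rceil\leq R_i$ could fail and the feasible-region inclusion would break.
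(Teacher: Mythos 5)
Your proof is correct and follows essentially the same two-sided argument as the paper: $(\boldsymbol{x^{*}},\,\boldsymbol{y^{*}})$ stays feasible after shrinking since $y_{i}^{*}\leq\left\lceil y_{i}^{*}\right\rceil$, and the shrunk feasible region is contained in the original one, so the optimum cannot improve. Your explicit remark that $\left\lceil y_{i}^{*}\right\rceil\leq R_{i}$ hinges on the integrality of $R_{i}$ is a point the paper leaves implicit, but the substance of the argument is the same.
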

\begin{proof}
Denote the instance with parameter $R_{i}$ as $\mathcal{I}_{o}$,
and $\mathcal{I}$ after replacing $R_{i}$ with $\left\lceil y_{i}^{*}\right\rceil $.
On one hand, solving $\mathcal{I}$ will not yield any better optimal
solution $\left(\tilde{\boldsymbol{x^{*}}},\,\tilde{\boldsymbol{y^{*}}}\right)$
with $cost\left(\tilde{\boldsymbol{x^{*}}},\,\tilde{\boldsymbol{y^{*}}}\right)<cost\left(\boldsymbol{x^{*}},\,\boldsymbol{y^{*}}\right)$,
because this $\left(\tilde{\boldsymbol{x^{*}}},\,\tilde{\boldsymbol{y^{*}}}\right)$
is also feasible to $\mathcal{I}_{o}$, which contradicts the optimality
of $\left(\boldsymbol{x^{*}},\,\boldsymbol{y^{*}}\right)$ for $\mathcal{I}_{o}$.
On the other hand, $cost\left(\tilde{\boldsymbol{x^{*}}},\,\tilde{\boldsymbol{y^{*}}}\right)>cost\left(\boldsymbol{x^{*}},\,\boldsymbol{y^{*}}\right)$
is not possible since $\left(\boldsymbol{x^{*}},\,\boldsymbol{y^{*}}\right)$
is also a feasible solution to $\mathcal{I}$ as $y_{i}^{*}\leq\left\lceil y_{i}^{*}\right\rceil $,
which contradicts the optimality of $\left(\tilde{\boldsymbol{x^{*}}},\,\tilde{\boldsymbol{y^{*}}}\right)$
for $\mathcal{I}$. Hence, $\left(\boldsymbol{x^{*}},\,\boldsymbol{y^{*}}\right)$
stays optimal for $\mathcal{I}$.
\end{proof}
With this mechanism, instead we can consider the equivalent $FTRA$
instance $\mathcal{I}$ with $\forall i\in\mathcal{F}:\, R_{i}=\left\lceil y_{i}^{*}\right\rceil $
and the same optimal solution $\left(\boldsymbol{x^{*}},\,\boldsymbol{y^{*}}\right)$.
Then in the reduction, $\left(\boldsymbol{x^{*}},\,\boldsymbol{y^{*}}\right)$
is split into a large integral solution with $y_{i}^{l}=\max\left(0,\,\left\lfloor y_{i}^{*}\right\rfloor -1\right)$
and $x_{ij}^{l}=\min\left(\left\lfloor x_{ij}^{*}\right\rfloor ,\, y_{i}^{l}\right)$,
and a small fractional solution with $ $$y_{i}^{s}=y_{i}^{*}-y_{i}^{l}$
and $x_{ij}^{s}=x_{ij}^{*}-x_{ij}^{l}$, for all $i\in\mathcal{F},\, j\in\mathcal{C}$.
Let the tuple $\left\langle \mathcal{F},\,\mathcal{C},\,\boldsymbol{f},\,\boldsymbol{c},\,\boldsymbol{r},\,\boldsymbol{R}\right\rangle $
represent the instance $\mathcal{I}$, the reduction then proceeds
by splitting $\mathcal{I}$ into a large instance $ $$\mathcal{I}^{l}$:
$\left\langle \mathcal{F},\,\mathcal{C},\,\boldsymbol{f},\,\boldsymbol{c},\,\boldsymbol{r^{l}},\,\boldsymbol{R^{l}}\right\rangle $
and a small instance $ $$\mathcal{I}^{s}$: $\left\langle \mathcal{F},\,\mathcal{C},\,\boldsymbol{f},\,\boldsymbol{c},\,\boldsymbol{r^{s}},\,\boldsymbol{R^{s}}\right\rangle $
according to $\left(\boldsymbol{x^{l}},\,\boldsymbol{y^{l}}\right)$
and $\left(\boldsymbol{x^{s}},\,\boldsymbol{y^{s}}\right)$. In particular,
these two instances differ at two parameters $\boldsymbol{r}$ and
$\boldsymbol{R}$, where we let $r_{j}^{l}=\sum_{i\in\mathcal{F}}x_{ij}^{l}$,
$r_{j}^{s}=r_{j}-r_{j}^{l}$ and $R_{i}^{l}=y_{i}^{l}$, $R_{i}^{s}=\left\lceil y_{i}^{*}\right\rceil -y_{i}^{l}$.
Note that although the above splitting idea of the instance shrinking
technique is inspired from the reduction for $FTRA_{\infty}$, the
focus on splitting $R_{i}$ is essentially different from reducing
$r_{j}$. Also, here we can see that the construction of the $ $shrunken
instance $\mathcal{I}$ with $R_{i}=\left\lceil y_{i}^{*}\right\rceil $
is crucial for bounding $R_{i}^{s}$, since if the original $R_{i}$
is used, $R_{i}^{s}$ can not be bounded and the technique will not
work. In the following, the first lemma mostly results from the original
splitting idea where we provide a simpler proof for it. The second
is directly from our instance shrinking and splitting on $R_{i}$.
As shown later in the proof of Theorem \ref{reduction-theorem}, these
lemmas are necessary for the approximation preserving reduction from
$\mathcal{I}$ to $\mathcal{I}^{s}$.
\begin{lemma}
$\left(\boldsymbol{x^{l}},\,\boldsymbol{y^{l}}\right)$ is a feasible
integral solution to $\mathcal{I}^{l}$ and $\left(\boldsymbol{x^{s}},\,\boldsymbol{y^{s}}\right)$
is a feasible fractional solution to $\mathcal{I}^{s}$.\label{lem:splitting}\end{lemma}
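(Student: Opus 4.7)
The plan is to verify, separately for each of the two split instances, all four constraint families of the (I)LP \eqref{eq:ftra-ip}, using only the LP feasibility of $(\boldsymbol{x^*},\boldsymbol{y^*})$ for $\mathcal{I}$ and the definitions of the split. Throughout, I will distinguish the two natural regimes $y_i^* \ge 1$ (so $y_i^l = \lfloor y_i^* \rfloor - 1$) and $y_i^* < 1$ (so $y_i^l = 0$), since these control the integrality gap between $y_i^s$ and any $x_{ij}^s$.

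First I tackle the large instance $\mathcal{I}^l$. Integrality of $y_i^l$ is immediate from the definition, and $x_{ij}^l = \min(\lfloor x_{ij}^* \rfloor,\, y_i^l)$ is a minimum of two integers. The connection bound $x_{ij}^l \le y_i^l$ follows from the very definition of $x_{ij}^l$, and the capacity bound $y_i^l \le R_i^l$ is an equality by construction. For the demand constraint, $r_j^l$ is defined as $\sum_i x_{ij}^l$, so $\sum_i x_{ij}^l \ge r_j^l$ holds with equality. That completes $\mathcal{I}^l$.

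Next I handle $\mathcal{I}^s$. Nonnegativity of $y_i^s$ follows by direct computation in both regimes (it is $y_i^*$ when $y_i^* < 1$, and $y_i^* - \lfloor y_i^*\rfloor + 1 \ge 1$ otherwise), and $x_{ij}^s \ge 0$ because $x_{ij}^l \le \lfloor x_{ij}^*\rfloor \le x_{ij}^*$. The capacity bound $y_i^s \le R_i^s$ reduces to $y_i^* \le \lceil y_i^* \rceil$. For the demand, I subtract the identity $r_j^l = \sum_i x_{ij}^l$ from the primal feasibility $\sum_i x_{ij}^* \ge r_j$ to obtain $\sum_i x_{ij}^s \ge r_j^s$. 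The remaining constraint $x_{ij}^s \le y_i^s$ is the only one requiring care; I will split on which term attains the minimum in $x_{ij}^l$. If $x_{ij}^l = y_i^l$, then $x_{ij}^s = x_{ij}^* - y_i^l \le y_i^* - y_i^l = y_i^s$ by primal feasibility of $(\boldsymbol{x^*},\boldsymbol{y^*})$. If instead $x_{ij}^l = \lfloor x_{ij}^* \rfloor$, then $x_{ij}^s < 1$; when $y_i^* \ge 1$ the calculation above gives $y_i^s \ge 1 > x_{ij}^s$, while when $y_i^* < 1$ we have $y_i^l = 0$, hence $x_{ij}^l = 0$ and $x_{ij}^s = x_{ij}^* \le y_i^* = y_i^s$.

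The main obstacle is exactly this last constraint: a naive splitting of $y_i^*$ into $\lfloor y_i^*\rfloor$ and the fractional part would leave $y_i^s < 1$ and could be strictly smaller than some residual $x_{ij}^s$, breaking feasibility for $\mathcal{I}^s$. The choice $y_i^l = \lfloor y_i^* \rfloor - 1$ rather than $\lfloor y_i^*\rfloor$, together with clipping $x_{ij}^l$ by $y_i^l$, is precisely what reserves a full unit of slack in $y_i^s$ whenever $y_i^* \ge 1$ and simultaneously forces $x_{ij}^l = 0$ whenever $y_i^* < 1$; this is what makes the case analysis close and is the only place where the specific form of the split is used in an essential way.
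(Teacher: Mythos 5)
Your proof is correct and follows essentially the same route as the paper's: the feasibility of $\left(\boldsymbol{x^{l}},\,\boldsymbol{y^{l}}\right)$ and the demand/capacity constraints of $\mathcal{I}^{s}$ are dispatched directly from the definitions, and the only nontrivial constraint $x_{ij}^{s}\leq y_{i}^{s}$ is handled by the same case split on which term attains the minimum in $x_{ij}^{l}$, with the same sub-cases $\left\lfloor y_{i}^{*}\right\rfloor \geq1$ versus $\left\lfloor y_{i}^{*}\right\rfloor <1$. You are somewhat more explicit than the paper about the parts it declares trivial, but the argument is the same.
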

\begin{proof}
According to the LP \eqref{eq:ftra-lp}, it is trivial to see the
feasibility of the integral solution $\left(\boldsymbol{x^{l}},\,\boldsymbol{y^{l}}\right)$.
For the fractional solution $\left(\boldsymbol{x^{s}},\,\boldsymbol{y^{s}}\right)$,
since $ $$r_{j}=\sum_{i\in\mathcal{F}}x_{ij}^{*}$, $r_{j}^{l}=\sum_{i\in\mathcal{F}}x_{ij}^{l}$,
$r_{j}^{s}=r_{j}-r_{j}^{l}$ and $x_{ij}^{s}=x_{ij}^{*}-x_{ij}^{l}$,
we have $r_{j}^{s}=\sum_{i\in\mathcal{F}}x_{ij}^{s}$ and the first
constraint of the LP holds. Further, it is easy to see $y_{i}^{s}\leq R_{i}^{s}$
and we are left to show the second constraint $\forall i\in\mathcal{F},\, j\in\mathcal{C}:\, y_{i}^{s}-x_{ij}^{s}\geq0$
holds, i.e. $y_{i}^{*}-y_{i}^{l}\geq x_{ij}^{*}-\min\left(\left\lfloor x_{ij}^{*}\right\rfloor ,\, y_{i}^{l}\right)$.
Consider two cases: 1) $y_{i}^{l}\leq\left\lfloor x_{ij}^{*}\right\rfloor $,
then the inequality obviously follows from $y_{i}^{*}\geq x_{ij}^{*}$;
2) $y_{i}^{l}>\left\lfloor x_{ij}^{*}\right\rfloor $, the inequality
$rhs=x_{ij}^{*}-\left\lfloor x_{ij}^{*}\right\rfloor $, and $lhs=y_{i}^{*}-\max\left(0,\,\left\lfloor y_{i}^{*}\right\rfloor -1\right)$
after substituting $y_{i}^{l}$. Now again consider two sub cases:
2.1) $\left\lfloor y_{i}^{*}\right\rfloor \geq1$, then $lhs\geq1$
while $rhs\leq1$, so $lhs\geq rhs$ and the inequality follows; 2.2)
$\left\lfloor y_{i}^{*}\right\rfloor <1$, then $lhs=y_{i}^{*}$,
and since $1>y_{i}^{*}\geq x_{ij}^{*}$, $\left\lfloor x_{ij}^{*}\right\rfloor =0$
and $rhs=x_{ij}^{*}$, then the inequality follows. Overall, $\left(x_{ij}^{s},\, y_{i}^{s}\right)$
is a feasible solution.\end{proof}
\begin{lemma}
For the instances $\mathcal{I}^{l}$ and $\mathcal{I}^{s}$ the following
holds: \label{lem: ss-bounds}

\textup{(i)} $\max_{j\in\mathcal{C}}r_{j}^{l}\leq\sum_{i\in\mathcal{F}}R_{i}^{l}$
and $\max_{j\in\mathcal{C}}r_{j}^{s}\leq\sum_{i\in\mathcal{F}}R_{i}^{s}$.

\textup{(ii)} $R_{i}^{s}\in\left\{ 0,\,1,\,2\right\} $.\end{lemma}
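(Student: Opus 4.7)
\medskip
\noindent\textbf{Proof proposal.} The plan is to handle the two parts separately with very little machinery: part (i) falls out of the definitions of $r_j^l, r_j^s, R_i^l, R_i^s$ combined with the feasibility already established in Lemma~\ref{lem:splitting}, and part (ii) is a short case analysis on the value of $y_i^*$.

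For part (i), I would first observe that $r_j^l = \sum_{i\in\mathcal{F}} x_{ij}^l$ by definition, and that $x_{ij}^l = \min(\lfloor x_{ij}^*\rfloor,\, y_i^l) \leq y_i^l = R_i^l$. Summing this coordinate-wise bound over $i$ gives $r_j^l \leq \sum_i R_i^l$, which holds for every $j$, so $\max_j r_j^l \leq \sum_i R_i^l$. For the second inequality I would write $r_j^s = r_j - r_j^l = \sum_i x_{ij}^* - \sum_i x_{ij}^l = \sum_i x_{ij}^s$, then invoke the feasibility part of Lemma~\ref{lem:splitting} to get $x_{ij}^s \leq y_i^s$, and finally observe $y_i^s = y_i^* - y_i^l \leq \lceil y_i^*\rceil - y_i^l = R_i^s$ because $y_i^* \leq \lceil y_i^*\rceil$. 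Chaining these yields $r_j^s \leq \sum_i R_i^s$ uniformly in $j$.

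For part (ii), the expression to analyse is $R_i^s = \lceil y_i^*\rceil - \max(0,\, \lfloor y_i^*\rfloor - 1)$. I would split into three cases according to the size of $y_i^*$. If $y_i^* = 0$, both terms are $0$ and $R_i^s = 0$. If $0 < y_i^* \leq 1$, then $\lceil y_i^*\rceil = 1$ while $\lfloor y_i^*\rfloor - 1 \leq 0$, so $y_i^l = 0$ and $R_i^s = 1$. If $y_i^* > 1$, then $\lfloor y_i^*\rfloor \geq 1$, so $y_i^l = \lfloor y_i^*\rfloor - 1$ and $R_i^s = \lceil y_i^*\rceil - \lfloor y_i^*\rfloor + 1$, which equals $1$ when $y_i^*$ is an integer and $2$ otherwise. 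In all cases $R_i^s \in \{0,1,2\}$.

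There is really no hard step here: the only subtle point is noticing that the instance-shrinking step (Claim that $R_i$ can be replaced by $\lceil y_i^*\rceil$) is what makes $R_i^s$ depend on $\lceil y_i^*\rceil$ rather than on the original, possibly huge, $R_i$, which is precisely why the bound in (ii) collapses to $\{0,1,2\}$. I would therefore open the proof with one sentence reminding the reader of that substitution, then dispatch (i) using the feasibility lemma and (ii) by the above three-case split.
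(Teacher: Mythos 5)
Your proposal is correct and follows essentially the same route as the paper's proof: part (i) via the chain $r_j^{l}\leq\sum_i x_{ij}^{l}\leq\sum_i y_i^{l}\leq\sum_i R_i^{l}$ (and its analogue for the small instance, using the feasibility from Lemma~\ref{lem:splitting}), and part (ii) by a case split on $\lfloor y_i^{*}\rfloor$ after substituting $y_i^{l}=\max(0,\lfloor y_i^{*}\rfloor-1)$ into $R_i^{s}=\lceil y_i^{*}\rceil-y_i^{l}$. Your three-way case split in (ii) is a harmless refinement of the paper's two cases, and your closing remark about the instance-shrinking substitution being what makes the bound possible is exactly the point the paper emphasizes.
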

\begin{proof}
(i) The previous lemma and the constraints of the LP \eqref{eq:ftra-lp}
together ensure the bounds that $\forall j\in\mathcal{C}:\, r_{j}^{l}\leq\sum_{i\in\mathcal{F}}x_{ij}^{l}\leq\sum_{i\in\mathcal{F}}y_{i}^{l}\leq\sum_{i\in\mathcal{F}}R_{i}^{l}$
and $r_{j}^{s}\leq\sum_{i\in\mathcal{F}}x_{ij}^{s}\leq\sum_{i\in\mathcal{F}}y_{i}^{s}\leq\sum_{i\in\mathcal{F}}R_{i}^{s}$.

(ii) We have $R_{i}^{s}=\mbox{\ensuremath{\left\lceil y_{i}^{*}\right\rceil }-}\max\left(0,\,\left\lfloor y_{i}^{*}\right\rfloor -1\right)$
after the substitution of $y_{i}^{l}$. If $\left\lfloor y_{i}^{*}\right\rfloor \geq1$,
then $R_{i}^{s}=\mbox{\ensuremath{\left\lceil y_{i}^{*}\right\rceil }-}\left\lfloor y_{i}^{*}\right\rfloor +1\in\left\{ 1,\,2\right\} $,
otherwise if $\left\lfloor y_{i}^{*}\right\rfloor <1$, then $R_{i}^{s}=\mbox{\ensuremath{\left\lceil y_{i}^{*}\right\rceil }}\in\left\{ 0,\,1\right\} $.\end{proof}
\begin{theorem}
If there is a $\rho$-approximation polynomial-time algorithm $\mathcal{A}$
for the general $FTRA$ with polynomially bounded $R_{i}$, which
always produces an integral solution that approximates the fractional
optimal solution with factor $\rho\geq1$. Then there is also a polynomial-time
$\rho$-approximation algorithm $\mathcal{A}'$ for the general $FTRA$.\label{reduction-theorem}\end{theorem}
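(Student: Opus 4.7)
The plan is to use the instance shrinking and splitting machinery already developed (the Claim and Lemmas~\ref{lem:splitting} and~\ref{lem: ss-bounds}) to turn any $FTRA$ instance into a small companion instance $\mathcal{I}^s$ on which $\mathcal{A}$ can be invoked, then glue the result back together with the already-integral large part. Given an arbitrary $FTRA$ instance $\mathcal{I}_o$, I would first solve the LP~\eqref{eq:ftra-lp} in polynomial time to obtain $(\boldsymbol{x^*},\,\boldsymbol{y^*})$. By the Claim, replacing each $R_i$ with $\lceil y_i^*\rceil$ yields an equivalent instance $\mathcal{I}$ with the same optimal fractional solution and with $cost(\boldsymbol{x^*},\boldsymbol{y^*})$ a lower bound on the integral optimum of $\mathcal{I}_o$. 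I then split $(\boldsymbol{x^*},\boldsymbol{y^*})$ into $(\boldsymbol{x^l},\boldsymbol{y^l})$ and $(\boldsymbol{x^s},\boldsymbol{y^s})$ exactly as described before the lemmas, producing $\mathcal{I}^l$ and $\mathcal{I}^s$.

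Next I would construct $\mathcal{A}'$ as follows: output $(\boldsymbol{x^l},\boldsymbol{y^l})$ directly as an already-integral feasible solution for $\mathcal{I}^l$ (Lemma~\ref{lem:splitting}), and run $\mathcal{A}$ on $\mathcal{I}^s$ to obtain an integral solution $(\tilde{\boldsymbol{x}}^s,\tilde{\boldsymbol{y}}^s)$ with $cost(\tilde{\boldsymbol{x}}^s,\tilde{\boldsymbol{y}}^s)\le \rho\cdot cost(\boldsymbol{x^s},\boldsymbol{y^s})$. The final output of $\mathcal{A}'$ is their sum, $\boldsymbol{y}:=\boldsymbol{y^l}+\tilde{\boldsymbol{y}}^s$ and $\boldsymbol{x}:=\boldsymbol{x^l}+\tilde{\boldsymbol{x}}^s$. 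The justification for invoking $\mathcal{A}$ on $\mathcal{I}^s$ in polynomial time is Lemma~\ref{lem: ss-bounds}(ii): each $R_i^s\in\{0,1,2\}$, so all capacities in $\mathcal{I}^s$ are constants and in particular polynomially bounded; and Lemma~\ref{lem: ss-bounds}(i) guarantees $\mathcal{I}^s$ is a well-posed $FTRA$ instance. Note also that $r_j^s\le \sum_i R_i^s\le 2n_f$ so the demands are polynomially bounded too.

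I would then verify feasibility of the combined solution for $\mathcal{I}_o$. For each client $j$, the total connection count is $r_j^l+r_j^s=r_j$, satisfying the demand constraint. For each site $i$, the facility opening $y_i=y_i^l+\tilde{y}_i^s\le y_i^l+R_i^s=\lceil y_i^*\rceil\le R_i$ (the last inequality because $y_i^*\le R_i$ and $R_i$ is integral), so the resource constraint of $\mathcal{I}_o$ is respected. The $x_{ij}\le y_i$ constraint follows from the corresponding constraints in $(\boldsymbol{x^l},\boldsymbol{y^l})$ and $(\tilde{\boldsymbol{x}}^s,\tilde{\boldsymbol{y}}^s)$ being summed.

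Finally, the cost bound chains together cleanly: since costs are linear in $(\boldsymbol{x},\boldsymbol{y})$,
\[
cost(\boldsymbol{x},\boldsymbol{y})=cost(\boldsymbol{x^l},\boldsymbol{y^l})+cost(\tilde{\boldsymbol{x}}^s,\tilde{\boldsymbol{y}}^s)\le cost(\boldsymbol{x^l},\boldsymbol{y^l})+\rho\cdot cost(\boldsymbol{x^s},\boldsymbol{y^s}).
\]
Using $\rho\ge 1$, this is at most $\rho\bigl(cost(\boldsymbol{x^l},\boldsymbol{y^l})+cost(\boldsymbol{x^s},\boldsymbol{y^s})\bigr)=\rho\cdot cost(\boldsymbol{x^*},\boldsymbol{y^*})\le \rho\cdot \mathrm{OPT}(\mathcal{I}_o)$. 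The main conceptual obstacle is not the arithmetic but ensuring that the small instance $\mathcal{I}^s$ really falls into the regime in which $\mathcal{A}$ is assumed to run in polynomial time; this is exactly what the instance-shrinking step (replacing $R_i$ by $\lceil y_i^*\rceil$ before splitting) was designed to guarantee, and it is the reason the naive splitting of \cite{yan2011newresults} does not suffice in the constrained setting. The only other subtlety is the capacity check $y_i^l+\tilde{y}_i^s\le R_i$, which relies on the shrunken capacity $\lceil y_i^*\rceil$ being exactly split as $R_i^l+R_i^s$.
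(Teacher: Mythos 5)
Your proposal is correct and follows essentially the same route as the paper's own proof: shrink the capacities to $\left\lceil y_{i}^{*}\right\rceil $ via the Claim, split into $\mathcal{I}^{l}$ and $\mathcal{I}^{s}$, invoke $\mathcal{A}$ on the small instance (justified by Lemma \ref{lem: ss-bounds}), recombine using $r_{j}^{l}+r_{j}^{s}=r_{j}$ and $R_{i}^{l}+R_{i}^{s}=\left\lceil y_{i}^{*}\right\rceil \leq R_{i}$, and chain the cost bound through $\rho\geq1$ and $\boldsymbol{x^{l}}+\boldsymbol{x^{s}}=\boldsymbol{x^{*}}$, $\boldsymbol{y^{l}}+\boldsymbol{y^{s}}=\boldsymbol{y^{*}}$. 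The extra details you supply (the explicit check of the $x_{ij}\leq y_{i}$ constraint and the bound $r_{j}^{s}\leq2n_{f}$) are correct and consistent with the paper's argument.
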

\begin{proof}
We will describe such an algorithm $\mathcal{A}'$. It first does
the instance shrinking and splitting as described before for any instance
$\mathcal{I}$ of $FTRA$. From (i) of Lemma \ref{lem: ss-bounds},
the split instances $\mathcal{I}^{l}$ and $\mathcal{I}^{s}$ are
valid. From (ii), $\mathcal{I}^{s}$ has polynomially bounded $R_{i}^{s}$.
Note that if $R_{i}^{s}=0$, we can safely remove this site $i$ in
$\mathcal{I}^{s}$, and set the solution $\forall j\in\mathcal{C}:\, x_{ij}^{s},\, y_{i}^{s}=0$
when later combining it with $\mathcal{I}^{l}$. Then, $\mathcal{A}'$
uses $\mathcal{A}$ as a subroutine to solve $\mathcal{I}^{s}$ to
obtain a feasible integral solution $\left(\bar{x_{ij}^{s}},\,\bar{y_{i}^{s}}\right)$
that approximates the fractional optimum. Also, from Lemma \ref{lem:splitting},
$\left(\boldsymbol{x^{s}},\,\boldsymbol{y^{s}}\right)$ is feasible
to $\mathcal{I}^{s}$, so $cost\left(\boldsymbol{\bar{x^{s}}},\,\boldsymbol{\bar{y^{s}}}\right)\leq\rho\cdot cost\left(\boldsymbol{x^{s}},\,\boldsymbol{y^{s}}\right)$.
Finally, $\mathcal{A}'$ combines $\left(\bar{x_{ij}^{s}},\,\bar{y_{i}^{s}}\right)$
with the readily available constructed integer solution $\left(x_{ij}^{l},\, y_{i}^{l}\right)$
for $\mathcal{I}^{l}$. Because $\left(x_{ij}^{l},\, y_{i}^{l}\right)$
is a feasible integral solution to $\mathcal{I}^{l}$, then when combined,
they form a feasible integral solution to $\mathcal{I}$ as $r_{j}^{l}+r_{j}^{s}=r_{j}$
and $R_{i}^{l}+R_{i}^{s}=R_{i}=\left\lceil y_{i}^{*}\right\rceil $.
The only thing left is to prove the combined solution from $\mathcal{A}'$
is $\rho$-approximation, i.e., $cost\left(\boldsymbol{x^{l}},\,\boldsymbol{y^{l}}\right)+cost\left(\boldsymbol{\bar{x^{s}}},\,\boldsymbol{\bar{y^{s}}}\right)\leq\rho\cdot cost\left(\boldsymbol{x^{*}},\,\boldsymbol{y^{*}}\right)$.
This follows from $cost\left(\boldsymbol{x^{l}},\,\boldsymbol{y^{l}}\right)+cost\left(\boldsymbol{\bar{x^{s}}},\,\boldsymbol{\bar{y^{s}}}\right)\leq cost\left(\boldsymbol{x^{l}},\,\boldsymbol{y^{l}}\right)+\rho\cdot cost\left(\boldsymbol{x^{s}},\,\boldsymbol{y^{s}}\right)\leq\rho\cdot\left(cost\left(\boldsymbol{x^{l}},\,\boldsymbol{y^{l}}\right)+cost\left(\boldsymbol{x^{s}},\,\boldsymbol{y^{s}}\right)\right)$
and $\boldsymbol{x^{l}}+\boldsymbol{x^{s}}=\boldsymbol{x^{*}}$, $\boldsymbol{y^{l}}+\boldsymbol{y^{s}}=\boldsymbol{y^{*}}$.\end{proof}
\begin{corollary}
The general $FTRA$ is reducible to the general $FTFL$ in weakly
polynomial time.\end{corollary}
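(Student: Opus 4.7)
The plan is to chain Theorem~\ref{reduction-theorem} with an elementary site-cloning reduction from $FTRA$ (restricted to polynomially bounded $R_i$) to $FTFL$. First I would observe that when $\sum_{i\in\mathcal{F}} R_i$ is polynomial in the input size, we can replace each site $i\in\mathcal{F}$ by $R_i$ individual $FTFL$ sites, each carrying opening cost $f_i$ and connection cost $c_{ij}$ to every client $j\in\mathcal{C}$, keeping the client demands $r_j$ unchanged. The resulting $FTFL$ instance $\mathcal{I}'$ has $\sum_{i\in\mathcal{F}} R_i$ sites, hence is of polynomial size and can be tackled by any known $FTFL$ approximation algorithm.

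Second, I would verify that this cloning is an exact reduction: any feasible (fractional or integral) $FTRA$ solution $(\boldsymbol{x},\boldsymbol{y})$ on $\mathcal{I}$ can be spread across the cloned sites into an $FTFL$ solution of identical cost, e.g.\ by setting $\lfloor y_i\rfloor$ copies to value $1$ and one further copy to value $y_i-\lfloor y_i\rfloor$ and routing the $x_{ij}$-flow through them consistently; conversely, an $FTFL$ solution on $\mathcal{I}'$ aggregates back into an $FTRA$ solution on $\mathcal{I}$ by summing openings and connections over copies of the same original site, with cost preserved because all copies share the same $f_i$ and $c_{ij}$. In particular the fractional LP optima coincide, so any $\rho$-approximation against the fractional $FTFL$ optimum on $\mathcal{I}'$ lifts to a $\rho$-approximation against the fractional $FTRA$ optimum on $\mathcal{I}$, matching the hypothesis of Theorem~\ref{reduction-theorem}.

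Third, I would assemble the full pipeline for a general $FTRA$ instance: solve the LP~\eqref{eq:ftra-lp} to obtain $\boldsymbol{y^*}$, perform the instance shrinking and splitting of Section~4 to produce $\mathcal{I}^l$ and $\mathcal{I}^s$ with $R_i^s\in\{0,1,2\}$ (by Lemma~\ref{lem: ss-bounds}(ii), so trivially polynomially bounded), reduce $\mathcal{I}^s$ to an $FTFL$ instance of size at most $2|\mathcal{F}|$ via the cloning above, invoke any polynomial-time $FTFL$ routine, lift its output back through the cloning map, and finally combine with the integral companion $(\boldsymbol{x^l},\boldsymbol{y^l})$ as prescribed by Theorem~\ref{reduction-theorem}. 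Every step except the LP solve is strongly polynomial; the LP solve is weakly polynomial in the size of $\mathcal{I}$, so the overall reduction from $FTRA$ to $FTFL$ is weakly polynomial, as required.

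The only real obstacle is checking that site cloning preserves the \emph{fractional} (not merely integral) LP optimum, since Theorem~\ref{reduction-theorem} is stated in terms of approximation against the fractional optimum. This is resolved by the symmetry among cloned copies: the constraint $y_i\le R_i$ of the $FTRA$ LP decomposes exactly into $R_i$ copies of the unit-capacity constraint $y\le 1$ of the $FTFL$ LP, and the distribution argument sketched above produces a bijection at the LP level between feasible solutions of $\mathcal{I}$ and symmetric feasible solutions of $\mathcal{I}'$ of equal cost. Once this correspondence is in place, the corollary follows immediately from Theorem~\ref{reduction-theorem}.
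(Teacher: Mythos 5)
Your proposal is correct and follows essentially the same route as the paper: chain Theorem~\ref{reduction-theorem} with the observation that an $FTRA$ instance with polynomially bounded $R_i$ is equivalent, via site cloning, to a polynomial-size $FTFL$ instance, with the LP solve accounting for the weakly polynomial bound. You simply spell out the cost-preserving correspondence between fractional solutions of the original and cloned instances in more detail than the paper, which asserts the equivalence without elaboration.
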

\begin{proof}
Any instance of $FTRA$ with polynomially bounded $R_{i}$ can be
treated as an equivalent $FTFL$ instance with facility size $\sum_{i\in\mathcal{F}}R_{i}$
which is polynomial. Then any polynomial time algorithm solves $FTFL$
with ratio $\rho$ (w.r.t. the fractional optimum) can become the
algorithm $\mathcal{A}$ for $\mathcal{A}'$ in the previous theorem
to solve $FTRA$ with the same ratio. In addition, the reduction requires
solving the LP first to obtain $\left(\boldsymbol{x^{*}},\,\boldsymbol{y^{*}}\right)$
which takes weakly polynomial time.
\end{proof}
Therefore, from the above corollary and the result of \cite{JaroslawFTFL1.725}
for the metric $FTFL$, we get the ratio of 1.7245 for the metric
$FTRA$. Also, \textcolor{black}{from the results of \cite{Jain00FTFL,Lin92filting},
we can deduce that the non-metric $FTFL$ has an approximation ratio
of $O\left(\log^{2}n\right)$. This is because Jain and Vazirani \cite{Jain00FTFL}
proved that $FTFL$ reduces to $UFL$ with a ratio loss of $O\left(\log n\right)$,
and Lin and Vitter \cite{Lin92filting} showed that the non-metric
$UFL$ can be approximated with the ratio of $O\left(\log n\right)$
w.r.t. the fractional optimum. For the non-metric $FTRA$, we can
subsequently achieve the same ratio due to its reduction to $FTFL$
first. Moreover, in future, any improved ratio for the general $FTFL$
might directly hold for the general $FTRA$.}

\section{The Uniform $FTRA$}

Interestingly, the reduction results in the previous section does
not imply that the uniform $FTRA$ reduces to the uniform $FTFL$
in weakly polynomial time. This is because the instance shrinking
technique may split a uniform instance into two general/non-uniform
instances. As a consequence, the ratio of 1.52 in \cite{Swamy08FTFL2.076}
for the uniform $FTFL$ does not directly hold for the uniform $FTRA$.
Nevertheless, in this section, we show this ratio can still be preserved
for the uniform $FTRA$ in strongly polynomial time with a primal-dual
algorithm and two acceleration heuristics. Note that the following
algorithms are generic which work for the general $FTRA$ as well.
The uniform condition is only necessary in the analysis (Lemma \ref{lem:tri}). 

We begin with a naive primal-dual (PD) algorithm \textcolor{black}{(Algorithm
2)} for $FTRA$ with an approximation ratio of 1.61 and then present
the first acceleration heuristic to improve the complexity of the
algorithm to strongly polynomial \textcolor{black}{$O\left(n^{4}\right)$.
W.l.o.g., the PD algorithm assumes that each client $j$ makes $r_{j}$
connections and each connection is associated with a }\textit{\textcolor{black}{port}}\textcolor{black}{{}
of $j$ denoted by }$j^{\left(q\right)}$\textcolor{black}{{} $\left(1\leq q\leq r_{j}\right)$.
Also, the function $\phi\left(j^{\left(q\right)}\right)$ represents
the facility/site a client $j$'s $q$-th port is connected with and
the variable $p_{j}$ keeps track of the port of the client $j$ to
be connected. The algorithm then gradually connects clients in the
port order from $1$ to $r_{j}$, as well as increasing the solution
$\left(\boldsymbol{x},\,\boldsymbol{y}\right)$ from $\left(\boldsymbol{0},\,\boldsymbol{0}\right)$
in its }\textit{\textcolor{black}{actions}}\textcolor{black}{{} in response
to some }\textit{\textcolor{black}{events }}\textcolor{black}{controlled
by a global time $t$ that increases monotonically from $0$. All
events }\textit{\textcolor{black}{repeatedly}}\textcolor{black}{{} occur
until all clients are fully-connected, i.e., the not-fully-connected
clients s}et \textcolor{black}{$\mathcal{U}=\emptyset$. At any $t$,
the }\textit{\textcolor{black}{payment}}\textcolor{black}{{} of any
client $j$ to a site $i$ is defined as $t$, and the }\textit{\textcolor{black}{contribution}}\textcolor{black}{{}
is $\max\left(0,\, t-c_{ij}\right)$ for the clients in $\mathcal{U}$
and $\max\left(0,\,\max_{q}c_{\phi\left(j^{\left(q\right)}\right)j}-c_{ij}\right)$
for the clients in $\mathcal{C}\backslash\mathcal{U}$. As $t$ increases,
the action that a client $j$ connects to a facility of $i$ ($x_{ij}$
is increased by $1$) happens under two events: Event 1. $j$'s payment
reaches the connection cost $c_{ij}$ of an already opened facility
at $i$ that $j$ is not connected to (implying at this time $y_{i}>x_{ij}$);
Event 2. sum of contributions of all clients to a closed facility
at $i$ reaches its opening cost $f_{i}$. In particular, if $y_{i}<R_{i}$,
Event 2 triggers the action that a new facility at $i$ is opened
first ($y_{i}$ is increased by 1). Then any client $j\in\mathcal{C}\backslash\mathcal{U}$
with }$\max_{q}c_{\phi\left(j^{\left(q\right)}\right)j}-c_{ij}>0$
will switch one of its most expensive connections from $\arg\max_{\phi\left(j^{\left(q\right)}\right)}c_{\phi\left(j^{\left(q\right)}\right)j}$
to $ $$i$; and the \textcolor{black}{client in $\mathcal{U}$ with
$t-c_{ij}\geq0$ will connect to $i$. In addition, for analyzing
the approximation ratio of PD, each port }$j^{\left(q\right)}$\textcolor{black}{{}
is associated with a dual variable $\alpha_{j}^{q}$ which is assigned
the time $t$ at which }$j^{\left(q\right)}$\textcolor{black}{{} gets
connected. From the algorithm, it should be obvious that $cost\left(\boldsymbol{x},\,\boldsymbol{y}\right)=\sum_{j\in\mathcal{C}}\sum_{1\leq q\leq r_{j}}\alpha_{j}^{q}$.
Note that Event 2 on $i$ stops occurring once $y_{i}=R_{i}$ and
this introduces some difficulties to the analysis. To tackle these
difficulties, we use an extra variable $\boldsymbol{\hat{x}}$ to
store the amounts of the clients' connections when they just become
fully-connected. Compared with the phase-connection approach in \cite{shihongftfa}
in which each phase constructs multiple stars connecting sites and
cities at the same port number in increasing order globally, our algorithm
constructs the most cost efficient star in increasing order of port
number confined within each star.}

\begin{algorithm}[H]
{\small \caption{PD: Primal-Dual Algorithm}
}{\small \par}

\textbf{\textcolor{black}{Input}}\textcolor{black}{: }$\mathcal{F},\,\mathcal{C},\,\boldsymbol{f},\,\boldsymbol{c},\,\boldsymbol{r},\,\boldsymbol{R}$.\textbf{\textcolor{black}{{}
Output: }}$\left(\boldsymbol{x},\,\boldsymbol{y}\right)$.

\textbf{\textcolor{black}{Initialization}}\textcolor{black}{: }Set
$\mathcal{U}=\mathcal{C}$, $\forall i\in\mathcal{F},\, j\in\mathcal{C}:\, x_{ij},\, y_{i}=0,\, p_{j}=1$.

\medskip{}

\textbf{\textcolor{black}{while}}\textcolor{black}{{} $\mathcal{U}\neq\emptyset$,
increase time $t$ uniformly and execute the events below:} 
\begin{itemize}
\item \textcolor{black}{Event 1: $\exists i\in\mathcal{F},\, j\in\mathcal{U}$:
$t=c_{ij}$ and $x_{ij}<y_{i}$}.\\
\textcolor{black}{Action 1: }\textbf{\textcolor{black}{set}}\textcolor{black}{{}
}$\phi\left(j^{\left(p_{j}\right)}\right)\leftarrow i$, $x_{ij}\leftarrow x_{ij}+1$
and $\alpha_{j}^{p_{j}}\leftarrow t$; If $p_{j}=r_{j}$, then\textbf{
$\mathcal{U}\leftarrow\mathcal{U}\backslash\left\{ j\right\} $} and
$\hat{x_{ij}}=x_{ij}$, otherwise $p_{j}\leftarrow p_{j}+1$.\textcolor{black}{{}
}\smallskip{}

\item \textcolor{black}{Event 2: $\exists i\in\mathcal{F}$: $\sum_{j\in\mathcal{U}}\max\left(0,\, t-c_{ij}\right)+$$\sum_{j\in\mathcal{C}\backslash\mathcal{U}}\max\left(0,\,\max_{q}c_{\phi\left(j^{\left(q\right)}\right)j}-c_{ij}\right)=f_{i}$}
and $y_{i}<R_{i}$.\\
\textcolor{black}{Action 2: }\textbf{\textcolor{black}{set}}\textcolor{black}{{}
$y_{i}\leftarrow y_{i}+1$}; $\forall j\in\mathcal{C}\backslash\mathcal{U}$
s.t. $\max_{q}c_{\phi\left(j^{\left(q\right)}\right)j}-c_{ij}>0:$
\textbf{set} $i_{j}^{*}\leftarrow\arg\max_{\phi\left(j^{\left(q\right)}\right)}c_{\phi\left(j^{\left(q\right)}\right)j}$,
$x_{i_{j}^{*}j}\leftarrow x_{i_{j}^{*}j}-1$, $x_{ij}\leftarrow x_{ij}+1$
and $\phi\left(j^{\left(\arg\max_{q}c_{\phi\left(j^{\left(q\right)}\right)j}\right)}\right)\leftarrow i$;
$\forall j\in\mathcal{\mathcal{U}}$ s.t. $t\geq c_{ij}:$ \textbf{do}
Action 1.\end{itemize}
\end{algorithm}

\begin{rem} \textcolor{black}{If more than one event happen at time
$t$, the PD algorithm processes all of them in an arbitrary order.}
Also, the events themselves may repeatedly happen at any $t$ because
more than one facilities at a site are allowed to open.\end{rem}
\begin{lemma}
Algorithm PD computes a feasible solution to the uniform $FTRA$ and
runs in $O\left(n^{3}\max_{j\in\mathcal{C}}r_{j}\right)$.\label{lem:time-PD}\end{lemma}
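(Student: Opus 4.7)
The plan is to split the proof into feasibility and complexity parts.

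For feasibility, I would first establish three invariants maintained throughout the while loop: (i) $x_{ij}\leq y_{i}$ for every pair $(i,j)$, (ii) $y_{i}\leq R_{i}$ for every site $i$, and (iii) the loop terminates with $\mathcal{U}=\emptyset$. Invariants (i) and (ii) follow directly from the guards of the two events: Event 1 increments $x_{ij}$ only when $x_{ij}<y_{i}$; Event 2 increments $y_{i}$ only when $y_{i}<R_{i}$, and does so strictly before any Action 1 on the freshly opened $i$. Every switch in Action 2 leaves $\sum_{i}x_{ij}$ per client $j$ unchanged while moving the connection onto an already opened facility, so (i) remains valid.

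To prove the termination part of (iii), I would argue that whenever $\mathcal{U}\neq\emptyset$, advancing $t$ upward must strictly reduce the potential $\Phi=\sum_{j\in\mathcal{U}}(r_{j}-p_{j}+1)$ in finite time. For any $j\in\mathcal{U}$ and any site $i$ with $y_{i}<R_{i}$, the $\mathcal{U}$-contribution $\max(0,t-c_{ij})$ grows linearly and unboundedly with $t$, so the Event 2 threshold $f_{i}$ is eventually matched and a new facility opens at $i$; Event 1 similarly fires once $t$ reaches $c_{ij}$ for some already-open reachable facility. In both cases Action 1 is invoked for some $j\in\mathcal{U}$, decreasing $\Phi$ by exactly one. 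The problem's feasibility hypothesis $\max_{j}r_{j}\leq\sum_{i}R_{i}$ guarantees enough opening slots exist to meet the total demand, so $\Phi$ reaches zero after $\sum_{j}r_{j}$ Action 1 calls and $\mathcal{U}=\emptyset$ upon termination.

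For the complexity, I would count events and per-event processing cost separately. Action 1 invocations (whether triggered by Event 1 directly or nested inside Event 2) each raise a distinct port counter $p_{j}$, so their total count is $\sum_{j\in\mathcal{C}}r_{j}=O(n\cdot\max_{j}r_{j})$. Event 2 firings are bounded similarly: each opening at $i$ is paid for by contributions summing to $f_{i}$, and this equality is witnessed by at least one Action 1 or at least one switch of a $\mathcal{C}\backslash\mathcal{U}$ client onto a strictly cheaper site; since each port's current connection cost is monotonically decreasing across switches and takes at most $n_{f}$ distinct values, an amortized charging argument yields $O(n\cdot\max_{j}r_{j})$ Event 2 firings. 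Between consecutive events, scanning the $O(n^{2})$ pairs $(i,j)$ suffices to compute the next trigger time (the smallest $c_{ij}$ with $x_{ij}<y_{i}$ for Event 1; the earliest $t$ at which the contribution sum hits $f_{i}$ for Event 2) and to perform the resulting update, giving $O(n^{2})$ per event. Multiplying, the total runtime is $O(n^{3}\max_{j}r_{j})$.

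The hardest part of this analysis is the bookkeeping inside Event 2: one must verify that every facility opening can be charged to a concrete, bounded quantity (either a new port connection or a strict cost decrease of an existing one), and rule out unbounded cascades of switches triggering further openings. The monotonicity of per-port connection costs under switching is the key combinatorial lever that validates the $O(n\cdot\max_{j}r_{j})$ event bound and ultimately the $O(n^{3}\max_{j}r_{j})$ runtime.
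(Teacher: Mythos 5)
Your feasibility and termination arguments are fine and are in fact more explicit than what the paper records (the paper simply notes that the output is feasible for ILP \eqref{eq:ftra-ip}). The complexity analysis, however, has a genuine gap in the count of Event 2 firings. Your charging scheme attributes each Event 2 firing to either an Action 1 or to a switch of some port onto a strictly cheaper site, and then invokes the fact that each port's connection cost is monotonically decreasing and takes at most $n_{f}$ distinct values. But that fact only bounds the number of switches \emph{per port} by $n_{f}$, hence the total number of switches by $n_{f}\sum_{j\in\mathcal{C}}r_{j}$, so your charging yields $O\left(n^{2}\max_{j\in\mathcal{C}}r_{j}\right)$ Event 2 firings rather than the $O\left(n\max_{j\in\mathcal{C}}r_{j}\right)$ you claim; combined with your $O\left(n^{2}\right)$ per-event cost this would only give $O\left(n^{4}\max_{j\in\mathcal{C}}r_{j}\right)$, which is weaker than the statement.

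The bound you want is true, but for a different reason, and it is the one the paper (tersely) uses: every iteration, including every Event 2 firing, connects at least one \emph{new} port of a client in $\mathcal{U}$, so the total number of events is at most $\sum_{j\in\mathcal{C}}r_{j}\leq n_{c}\max_{j\in\mathcal{C}}r_{j}$. To see why Event 2 cannot fire on the strength of switchers alone, observe that the contribution of a client in $\mathcal{C}\backslash\mathcal{U}$ to a closed facility, namely $\max\left(0,\,\max_{q}c_{\phi\left(j^{\left(q\right)}\right)j}-c_{ij}\right)$, never increases over time: switching only lowers the maximum connection cost, and a client becoming fully connected at time $t$ has $\max_{q}c_{\phi\left(j^{\left(q\right)}\right)j}\leq t$, so its contribution does not jump upward at that transition either. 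Hence the total contribution to a closed facility can reach the threshold $f_{i}$ only through the linearly growing terms $t-c_{ij}$ of clients $j\in\mathcal{U}$, and Action 2 then connects at least one such client's port. With at most $\sum_{j\in\mathcal{C}}r_{j}$ events and $O\left(n_{c}n_{f}\right)$ work per event for the switches and for updating the anticipated event times, the claimed $O\left(n^{3}\max_{j\in\mathcal{C}}r_{j}\right)$ bound follows, as in the paper.
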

\begin{proof}
The solution is feasible because $\left(\boldsymbol{x},\,\boldsymbol{y}\right)$
produced from PD is feasible to LP \eqref{eq:ftra-ip}. Each iteration
of PD at least connects a port of a client, so there are maximum $\sum_{j\in\mathcal{C}}r_{j}$
iterations. In addition, \textcolor{black}{similar to Theorem 22.4
of \cite{JensVygenFL06Book} and Theorem 8 of \cite{jain01approximation}
for $UFL$, }the client switching in Action 2 dominates\textcolor{black}{{}
the time complexity. In each iteration, the switching takes time $O\left(n_{c}n_{f}\right)$
to update clients' contributions to other facilities for computing
the anticipated times of the events. Hence, the total time is $O\left(\sum_{j\in\mathcal{C}}r_{j}n_{c}n_{f}\right)$,
i.e. $O\left(n^{3}\max_{j\in\mathcal{C}}r_{j}\right)$.\medskip{}
}
\end{proof}
The PD algorithm produces a feasible primal solution $\left(\boldsymbol{x},\,\boldsymbol{y}\right)$
to $FTRA$ but an infeasible dual solution $\left(\boldsymbol{\alpha},\boldsymbol{\beta},\,\boldsymbol{z}\right)$
if we simply let $\alpha_{j}=\alpha_{j}^{r_{j}}$, $\beta_{ij}=\max\left(0,\alpha_{j}-c_{ij}\right)$
and $z_{i}=0$ in LP \eqref{eq:ftra-dual}. This is because although
the LP's second constraint holds, the first constraint fails to hold
since the algorithm only guarantees \textcolor{black}{$\forall i\in\mathcal{F}$:
$\sum_{j\in\mathcal{U}}\max\left(0,\, t-c_{ij}\right)$$\leq f_{i}$
where $\mathcal{U}\subseteq{\cal C}$.} In order to get around this
feasibility issue, there are ways like the classical dual fitting
\cite{Jain03dualfitting} and inverse dual fitting \cite{shihongftfa}
analyses. However, we observe that these approaches are actually both
based on constraints of the LP. Therefore, in the following we develop
a simple and step-by-step constraint-based analysis for the ease of
handling more complicated dual constructions. Together with the factor-revealing
technique of \cite{Jain03dualfitting}, we derive the ratio of 1.61
for $FTRA$.

In the analysis, first, we use the following dual constructions of
$\boldsymbol{\alpha},\,\boldsymbol{z}$ to bound the primal solution
cost $cost\left(\boldsymbol{x},\,\boldsymbol{y}\right)$ with the
dual solution cost $cost\left(\boldsymbol{\alpha},\,\boldsymbol{\beta},\,\boldsymbol{z}\right)$.

$\forall i\in\mathcal{F},\, j\in\mathcal{C}:\,\alpha_{j}=\alpha_{j}^{r_{j}},\, z_{i}=\sum_{j\in\mathcal{C}}\pi_{ij}$
where $\pi_{ij}=\begin{cases}
\frac{\hat{x_{ij}}\left(\alpha_{j}-\alpha_{j}^{l_{ij}}\right)}{R_{i}} & \textrm{if}\,\hat{x_{ij}}=R_{i}\\
0 & \textrm{if}\,\hat{x_{ij}}<R_{i}
\end{cases}$.

\medskip{}

In this setting, \textcolor{black}{$\boldsymbol{\hat{x}}$} stores
the primary connection amounts of the clients after they become fully-connected
but before they switch any of their connections. $\forall i\in\mathcal{F},\, j\in\mathcal{C}:$
$l_{ij}$ denotes the last port of $j$ \textit{connecting to $i$}
before switching, so $\alpha_{j}^{l_{ij}}$ is the dual value of the
port $l_{ij}$; $\alpha_{j}^{r_{j}}$ is the dual of $j$'s last port
and $\pi_{ij}$ can be interpreted as the conditional marginal dual/price
depending on whether $\hat{x_{ij}}$ reaches $R_{i}$ or not. $\forall i\in\mathcal{F}:$
$z_{i}$ is therefore the sum of marginal duals of all clients w.r.t.
$i$. The other dual variable $\boldsymbol{\beta}$ is to be constructed
later in the analysis.
\begin{lemma}
$cost\left(\boldsymbol{x},\,\boldsymbol{y}\right)\leq cost\left(\boldsymbol{\alpha},\,\boldsymbol{\beta},\,\boldsymbol{z}\right)$
where $\left(\boldsymbol{x},\,\boldsymbol{y}\right)$ is the feasible
primal solution produced from the PD algorithm and $\left(\boldsymbol{\alpha},\,\boldsymbol{\beta},\,\boldsymbol{z}\right)$
is constructed from above. \label{lem: pdb}\end{lemma}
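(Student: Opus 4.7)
The plan is to establish the inequality $\sum_j\sum_q \alpha_j^q \leq \sum_j r_j \alpha_j - \sum_i R_i z_i$ and combine it with the identity $cost(\boldsymbol{x},\boldsymbol{y}) = \sum_j\sum_q \alpha_j^q$ that was already noted when PD was introduced. I will treat that identity as given: it is the standard primal-dual bookkeeping that every time a port is connected (either through Event 1 at cost $c_{ij}=\alpha_j^q$, or through Event 2 where the summed contributions exactly pay the next $f_i$), the resulting $\alpha_j^q$ charges one unit of connection cost plus the port's share of the opening cost, covering each term of $\sum_i f_i y_i + \sum_{ij}c_{ij}x_{ij}$ exactly once even under the switching of Action 2.

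The key inequality then has two ingredients. First, because the global time $t$ is monotonically non-decreasing and the ports of any client $j$ are connected strictly in the order $1,2,\dots,r_j$ (the index $p_j$ only increments), the sequence $(\alpha_j^q)_{q=1}^{r_j}$ is non-decreasing, so $\alpha_j^q \leq \alpha_j^{r_j} = \alpha_j$ for every $q$. This already yields $\sum_j\sum_q \alpha_j^q \leq \sum_j r_j \alpha_j$, and the remaining task is to absorb the $-\sum_i R_i z_i$ term. Unfolding the construction of $z_i$, only the sites $i$ with $\hat{x}_{ij}=R_i$ contribute, and on those pairs $\hat{x}_{ij}$ equals $R_i$, so
\[
\sum_{i\in\mathcal{F}} R_i z_i \;=\; \sum_{j\in\mathcal{C}}\sum_{i:\hat{x}_{ij}=R_i} R_i\,(\alpha_j - \alpha_j^{l_{ij}}).
\]
It therefore suffices to prove, for each client $j$ separately,
\[
\sum_{i:\hat{x}_{ij}=R_i} R_i(\alpha_j-\alpha_j^{l_{ij}}) \;\leq\; \sum_{q=1}^{r_j}(\alpha_j - \alpha_j^q).
\]

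To establish this per-client bound, fix $j$ and a site $i$ with $\hat{x}_{ij}=R_i$, and let $Q_{ij}\subseteq\{1,\dots,r_j\}$ be the set of ports of $j$ initially routed to $i$ (as recorded by $\hat{x}$, before any Action 2 switch). By definition $|Q_{ij}|=\hat{x}_{ij}=R_i$ and $l_{ij}=\max Q_{ij}$, so monotonicity of $\alpha_j^q$ in $q$ gives $\alpha_j^{l_{ij}} \geq \alpha_j^q$ for each $q\in Q_{ij}$, and hence
\[
R_i(\alpha_j-\alpha_j^{l_{ij}}) \;=\; \sum_{q\in Q_{ij}}(\alpha_j-\alpha_j^{l_{ij}}) \;\leq\; \sum_{q\in Q_{ij}}(\alpha_j - \alpha_j^q).
\]
Summing over those $i$ with $\hat{x}_{ij}=R_i$, using that the sets $Q_{ij}$ are disjoint subsets of $\{1,\dots,r_j\}$ (each initial port is assigned to a single site) and that the omitted port contributions $\alpha_j-\alpha_j^q$ are non-negative, produces the per-client inequality; summing over $j$ yields $\sum_i R_i z_i + \sum_j\sum_q\alpha_j^q \leq \sum_j r_j\alpha_j$, which is the desired bound $cost(\boldsymbol{x},\boldsymbol{y})\leq cost(\boldsymbol{\alpha},\boldsymbol{\beta},\boldsymbol{z})$.

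The main obstacle I anticipate is getting the bookkeeping around $\hat{x}$ and $l_{ij}$ right: one must argue that $z_i$ genuinely sums only over the \emph{initial} port assignment (so that each port contributes to at most one $(i,j)$ term in $\sum_i R_i z_i$, ruling out any double counting from Action 2 switching), and that $l_{ij}$ is indeed the largest port index of $j$ routed to $i$ under that pre-switch assignment, which is what couples the definition of $z_i$ to the telescoping supplied by the monotonicity of $\alpha_j^q$. Everything else is routine algebra once this correspondence is set up.
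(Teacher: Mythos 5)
Your proof is correct and is essentially the paper's argument in rearranged form: the paper computes $cost\left(\boldsymbol{\alpha},\,\boldsymbol{\beta},\,\boldsymbol{z}\right)=\sum_{i,j:\,\hat{x_{ij}}<R_{i}}\hat{x_{ij}}\alpha_{j}^{r_{j}}+\sum_{i,j:\,\hat{x_{ij}}=R_{i}}\hat{x_{ij}}\alpha_{j}^{l_{ij}}$ and then asserts $\sum_{j}\sum_{q}\alpha_{j}^{q}$ is at most this, which is exactly your per-client inequality $\sum_{i:\hat{x_{ij}}=R_{i}}R_{i}(\alpha_{j}-\alpha_{j}^{l_{ij}})\leq\sum_{q}(\alpha_{j}-\alpha_{j}^{q})$ after using $\sum_{i}\hat{x_{ij}}=r_{j}$. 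Your writeup has the added merit of making explicit the justification the paper leaves implicit (monotonicity of the port duals, disjointness of the pre-switch port sets $Q_{ij}$, and $l_{ij}=\max Q_{ij}$), but the underlying decomposition is the same.
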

\begin{proof}
{\small 
\begin{eqnarray*}
 & cost\left(\boldsymbol{\alpha},\,\boldsymbol{\beta},\,\boldsymbol{z}\right) & =\sum_{j\in\mathcal{C}}r_{j}\alpha_{j}-\sum_{i\in\mathcal{F}}R_{i}z_{i}\\
 &  & =\sum_{j\in\mathcal{C}}\sum_{i\in\mathcal{F}}x_{ij}\alpha_{j}^{r_{j}}-\underset{\textrm{if}\,\hat{x_{ij}}=R_{i}}{\underbrace{\sum_{i\in\mathcal{F}}\sum_{j\in\mathcal{C}}\hat{x_{ij}}\left(\alpha_{j}^{r_{j}}-\alpha_{j}^{l_{ij}}\right)}}-\underset{\textrm{if}\,\hat{x_{ij}}<R_{i}}{\underbrace{0}}\\
 &  & =\sum_{j\in\mathcal{C}}\sum_{i\in\mathcal{F}}\hat{x_{ij}}\alpha_{j}^{r_{j}}-\underset{\textrm{if}\,\hat{x_{ij}}=R_{i}}{\underbrace{\sum_{i\in\mathcal{F}}\sum_{j\in\mathcal{C}}\hat{x_{ij}}\left(\alpha_{j}^{r_{j}}-\alpha_{j}^{l_{ij}}\right)}}\\
 &  & =\underset{\textrm{if}\,\hat{x_{ij}}<R_{i}}{\underbrace{\sum_{i\in\mathcal{F}}\sum_{j\in\mathcal{C}}\hat{x_{ij}}\alpha_{j}^{r_{j}}}}+\underset{\textrm{if}\,\hat{x_{ij}}=R_{i}}{\underbrace{\sum_{i\in\mathcal{F}}\sum_{j\in\mathcal{C}}\hat{x_{ij}}\alpha_{j}^{l_{ij}}}}.
\end{eqnarray*}
}{\small \par}

Hence, $cost\left(\boldsymbol{x},\,\boldsymbol{y}\right)=\sum_{j\in\mathcal{C}}\sum_{1\leq q\leq r_{j}}\alpha_{j}^{q}\leq\underset{\textrm{if}\,\hat{x_{ij}}<R_{i}}{\underbrace{\sum_{i\in\mathcal{F}}\sum_{j\in\mathcal{C}}\hat{x_{ij}}\alpha_{j}^{r_{j}}}}+\underset{\textrm{if}\,\hat{x_{ij}}=R_{i}}{\underbrace{\sum_{i\in\mathcal{F}}\sum_{j\in\mathcal{C}}\hat{x_{ij}}\alpha_{j}^{l_{ij}}}}=cost\left(\boldsymbol{\alpha},\,\boldsymbol{\beta},\,\boldsymbol{z}\right)$.
\end{proof}
Second, we exploit the dual constraints of the LP \eqref{eq:ftra-dual}
by relaxing their feasibilities with some relaxation factors. Before
going into this, we have the basic definition below.
\begin{definition}
An algorithm is bi-factor $\left(\rho_{f},\,\rho_{c}\right)$ or single
factor $\max\left(\rho_{f},\,\rho_{c}\right)$-approximation for $FTRA$,
iff for every instance $\mathcal{I}$ of $FTRA$ and any feasible
solution $SOL$ (possibly fractional) of $\mathcal{I}$ with facility
cost $F_{SOL}$ and connection cost $C_{SOL},$ the total cost produced
from the algorithm is at most $\rho_{f}F_{SOL}+\rho_{c}C_{SOL}$ ($\rho_{f},\,\rho_{c}$
are both positive constants greater than or equal to one).
\end{definition}
In the definition, let any feasible solution be $SOL=\left(\boldsymbol{x''},\,\boldsymbol{y''}\right)$,
then $F_{SOL}=\sum_{i\in\mathcal{F}}f_{i}y_{i}''$, $C_{SOL}=\sum_{i\in\mathcal{F}}\sum_{j\in\mathcal{C}}c_{ij}x_{ij}''$
and $cost\left(\boldsymbol{x''},\,\boldsymbol{y''}\right)=F_{SOL}+C_{SOL}$.
In the following, we consider the feasibility relaxed dual constraints
with the relaxation factors $\rho_{f}$ and $\rho_{c}$:

\medskip{}

(C6) $\forall i\in\mathcal{F},\, j\in\mathcal{C}:\,\alpha_{j}-\beta_{ij}\leq\rho_{c}c_{ij}$.

(C7) $\forall i\in\mathcal{F}:\,\sum_{j\in\mathcal{C}}\beta_{ij}\leq\rho_{f}f_{i}+z_{i}$.

\medskip{}

Next, we show that if the dual variables $\left(\boldsymbol{\alpha},\boldsymbol{\beta},\,\boldsymbol{z}\right)$
satisfies these relaxed constraints, the corresponding dual cost will
be bounded by any feasible primal cost scaled by the factors $\rho_{f}$
and $\rho_{c}$.
\begin{lemma}
If $\left(\boldsymbol{\alpha},\boldsymbol{\beta},\,\boldsymbol{z}\right)$
satisfies (C6) and (C7) while $SOL=\left(\boldsymbol{x''},\,\boldsymbol{y''}\right)$
is any feasible primal solution, then $cost\left(\boldsymbol{\alpha},\boldsymbol{\beta},\,\boldsymbol{z}\right)\leq\rho_{f}F_{SOL}+\rho_{c}C_{SOL}$.\label{lem:dual-bound}\end{lemma}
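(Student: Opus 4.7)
The plan is to mimic the standard weak-duality chain for facility location problems, but with careful bookkeeping for the new $-\sum_{i}R_{i}z_{i}$ term in the dual objective arising from the resource constraint. I start from the dual cost $\sum_{j\in\mathcal{C}}r_{j}\alpha_{j}-\sum_{i\in\mathcal{F}}R_{i}z_{i}$ and work towards $\rho_{f}F_{SOL}+\rho_{c}C_{SOL}$ by applying the three primal feasibility constraints of $SOL$ in turn, and the two relaxed dual constraints (C6) and (C7).

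First, I use the primal demand constraint $\sum_{i\in\mathcal{F}}x_{ij}''\geq r_{j}$ together with $\alpha_{j}\geq 0$ to get $\sum_{j}r_{j}\alpha_{j}\leq\sum_{j}\sum_{i}x_{ij}''\alpha_{j}$. Then (C6) gives $\alpha_{j}\leq\rho_{c}c_{ij}+\beta_{ij}$, so the right-hand side is at most $\rho_{c}\sum_{i,j}c_{ij}x_{ij}''+\sum_{i,j}x_{ij}''\beta_{ij}=\rho_{c}C_{SOL}+\sum_{i}\sum_{j}x_{ij}''\beta_{ij}$. Next, using the primal linking constraint $x_{ij}''\leq y_{i}''$ and $\beta_{ij}\geq 0$, I bound $\sum_{j}x_{ij}''\beta_{ij}\leq y_{i}''\sum_{j}\beta_{ij}$, and then (C7) yields $\sum_{j}\beta_{ij}\leq\rho_{f}f_{i}+z_{i}$, giving $\sum_{i}y_{i}''\sum_{j}\beta_{ij}\leq\rho_{f}F_{SOL}+\sum_{i}y_{i}''z_{i}$.

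Putting the pieces together,
\begin{equation*}
cost(\boldsymbol{\alpha},\boldsymbol{\beta},\boldsymbol{z})\leq\rho_{c}C_{SOL}+\rho_{f}F_{SOL}+\sum_{i\in\mathcal{F}}y_{i}''z_{i}-\sum_{i\in\mathcal{F}}R_{i}z_{i}=\rho_{c}C_{SOL}+\rho_{f}F_{SOL}+\sum_{i\in\mathcal{F}}(y_{i}''-R_{i})z_{i}.
\end{equation*}
The final step is to dispose of the leftover term $\sum_{i}(y_{i}''-R_{i})z_{i}$. This is precisely where the third primal constraint $y_{i}''\leq R_{i}$ (the resource constraint) is used: combined with $z_{i}\geq 0$, it makes this sum non-positive, completing the bound.

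The only real subtlety is recognizing that the extra negative term $-\sum_{i}R_{i}z_{i}$ in the dual objective is not an inconvenience but the very mechanism that absorbs the slack $\sum_{i}y_{i}''z_{i}$ left over after applying (C7); without the resource constraint $y_{i}''\leq R_{i}$ one would not be able to close the chain. Everything else is routine manipulation of non-negativity and the relaxed dual inequalities, so no case analysis or technical lemmas are needed beyond what is already established.
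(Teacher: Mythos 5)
Your proof is correct and follows essentially the same weak-duality chain as the paper's: both use the demand constraint with $\alpha_{j}\geq0$, then (C6), then the linking constraint $x_{ij}''\leq y_{i}''$ with $\beta_{ij}\geq0$, then (C7), and finally absorb the leftover $\sum_{i}y_{i}''z_{i}$ via $y_{i}''\leq R_{i}$ and $z_{i}\geq0$. The only difference is cosmetic ordering (the paper applies $y_{i}''\leq R_{i}$ at the start and bundles (C6) with the linking step via the identity $\alpha_{j}x_{ij}''=\beta_{ij}x_{ij}''+(\alpha_{j}-\beta_{ij})x_{ij}''$, while you defer the cancellation to the end), so the two arguments are equivalent.
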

\begin{proof}
Since $\left(\boldsymbol{x''},\,\boldsymbol{y''}\right)$ is any feasible
solution, all constraints of the LP \eqref{eq:ftra-lp} should hold
first. Together with (C6) and (C7), we have:

{\small 
\begin{eqnarray*}
 & cost\left(\boldsymbol{\alpha},\boldsymbol{\beta},\,\boldsymbol{z}\right) & =\sum_{j\in\mathcal{C}}r_{j}\alpha_{j}-\sum_{i\in\mathcal{F}}R_{i}z_{i}\\
 &  & \leq\sum_{j\in\mathcal{C}}\sum_{i\in\mathcal{F}}x_{ij}''\alpha_{j}-\sum_{i\in\mathcal{F}}y_{i}''z_{i}\\
 &  & \leq\sum_{i\in\mathcal{F}}\sum_{j\in\mathcal{C}}\left[\beta_{ij}y_{i}''+\left(\alpha_{j}-\beta_{ij}\right)x_{ij}''\right]-\sum_{i\in\mathcal{F}}y_{i}''z_{i}\\
 &  & \leq\sum_{i\in\mathcal{F}}\left(\rho_{f}f_{i}+z_{i}\right)y_{i}''+\sum_{i\in\mathcal{F}}\sum_{j\in\mathcal{C}}\rho_{c}c_{ij}x_{ij}''-\sum_{i\in\mathcal{F}}y_{i}''z_{i}\\
 &  & =\sum_{i\in\mathcal{F}}\rho_{f}f_{i}y_{i}''+\sum_{i\in\mathcal{F}}\sum_{j\in\mathcal{C}}\rho_{c}c_{ij}x_{ij}''=\rho_{f}F_{SOL}+\rho_{c}C_{SOL}.
\end{eqnarray*}
}{\small \par}
\end{proof}
The previous two lemmas and the definition immediately imply the next
lemma.
\begin{lemma}
The PD Algorithm is $\left(\rho_{f},\,\rho_{c}\right)$-approximation
if $\left(\boldsymbol{\alpha},\,\boldsymbol{\beta},\,\boldsymbol{z}\right)$
satisfies (C6) and (C7).\label{lem:appro}
\end{lemma}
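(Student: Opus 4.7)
The plan is to chain together the two preceding lemmas and then invoke the definition of bi-factor approximation. Nothing further is required: Lemma \ref{lem: pdb} already relates the primal cost produced by the PD algorithm to the cost of the constructed dual $\left(\boldsymbol{\alpha},\,\boldsymbol{\beta},\,\boldsymbol{z}\right)$, and Lemma \ref{lem:dual-bound} already bounds that dual cost in terms of any feasible primal solution once the relaxed constraints (C6) and (C7) are satisfied.

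Concretely, the first step is to apply Lemma \ref{lem: pdb} to obtain
$cost\left(\boldsymbol{x},\,\boldsymbol{y}\right)\leq cost\left(\boldsymbol{\alpha},\,\boldsymbol{\beta},\,\boldsymbol{z}\right)$,
where $\left(\boldsymbol{x},\,\boldsymbol{y}\right)$ is the primal solution returned by PD and $\left(\boldsymbol{\alpha},\,\boldsymbol{\beta},\,\boldsymbol{z}\right)$ is the dual assignment constructed right before Lemma \ref{lem: pdb}. The second step is to apply Lemma \ref{lem:dual-bound}: since by hypothesis $(\boldsymbol{\alpha},\boldsymbol{\beta},\boldsymbol{z})$ satisfies (C6) and (C7), for any feasible (possibly fractional) primal solution $SOL=\left(\boldsymbol{x''},\,\boldsymbol{y''}\right)$ with facility cost $F_{SOL}$ and connection cost $C_{SOL}$, we have $cost\left(\boldsymbol{\alpha},\,\boldsymbol{\beta},\,\boldsymbol{z}\right)\leq\rho_{f}F_{SOL}+\rho_{c}C_{SOL}$.

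Composing these two inequalities yields $cost\left(\boldsymbol{x},\,\boldsymbol{y}\right)\leq\rho_{f}F_{SOL}+\rho_{c}C_{SOL}$ for every feasible $SOL$, which is exactly the condition in the preceding definition for PD to be bi-factor $\left(\rho_{f},\,\rho_{c}\right)$-approximation. The lemma follows.

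There is essentially no obstacle in this proof; it is a one-line composition of the two previous lemmas with the definition. The real work is deferred to the subsequent step of the paper, namely exhibiting a specific construction of $\boldsymbol{\beta}$ (and verifying (C6) and (C7) for concrete constants $\rho_f,\rho_c$, eventually yielding the claimed 1.61 ratio via the factor-revealing technique of \cite{Jain03dualfitting}); that, however, lies outside the scope of this statement.
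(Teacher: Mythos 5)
Your proof is correct and matches the paper exactly: the paper itself states that ``the previous two lemmas and the definition immediately imply'' this lemma, which is precisely your composition of Lemma \ref{lem: pdb}, Lemma \ref{lem:dual-bound}, and the bi-factor approximation definition. Nothing is missing.
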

In the last step, we show $\left(\boldsymbol{\alpha},\,\boldsymbol{\beta},\,\boldsymbol{z}\right)$
indeed satisfies (C6) and (C7), so the algorithm is $\left(\rho_{f},\,\rho_{c}\right)$-approximation.
To satisfy (C6), obviously we can set $\forall i\in\mathcal{F},\, j\in\mathcal{C}:\,\beta_{ij}=\max\left(0,\alpha_{j}-\rho_{c}c_{ij}\right)$
(because $\beta_{ij}\geq0$), thereby finishing constructing $\left(\boldsymbol{\alpha},\,\boldsymbol{\beta},\,\boldsymbol{z}\right)$.
The rest is to find the actual values of the factors $\rho_{f}$ and
$\rho_{c}$ to make (C7) hold as well. The next lemma and corollary
are more specific forms of the previous lemma, after substituting
the setting of $\left(\boldsymbol{\alpha},\,\boldsymbol{\beta},\,\boldsymbol{z}\right)$
into (C7).
\begin{lemma}
The PD Algorithm is $\left(\rho_{f},\,\rho_{c}\right)$-approximation
if $\forall i\in\mathcal{F}:\,$\linebreak{}
$\sum_{j\in\mathcal{A}_{i}}\left(\alpha_{j}^{r_{j}}-\rho_{c}c_{ij}-\pi_{ij}\right)\leq\rho_{f}f_{i}$
where $\mathcal{A}_{i}=\left\{ j\in\mathcal{C}\,|\,\alpha_{j}^{r_{j}}\geq\rho_{c}c_{ij}\right\} $.\label{lem:s-appro}\end{lemma}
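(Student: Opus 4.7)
The plan is to derive this lemma as a direct corollary of Lemma \ref{lem:appro} by substituting the particular dual constructions already fixed in the analysis ($\alpha_j = \alpha_j^{r_j}$ and $z_i = \sum_{j\in\mathcal{C}} \pi_{ij}$) into constraints (C6) and (C7), then completing the construction of $\boldsymbol{\beta}$. By Lemma \ref{lem:appro}, it suffices to exhibit $\boldsymbol{\beta}$ such that both (C6) and (C7) hold for the given $\boldsymbol{\alpha}$ and $\boldsymbol{z}$.

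First I would set $\beta_{ij} := \max(0,\, \alpha_j^{r_j} - \rho_c c_{ij})$, the canonical choice that makes (C6) automatic: when $\alpha_j^{r_j} \geq \rho_c c_{ij}$ the constraint holds with equality, and otherwise $\beta_{ij}=0$ and $\alpha_j^{r_j} < \rho_c c_{ij}$ gives (C6) directly. With this choice (C7) becomes
\[
\sum_{j\in\mathcal{C}} \max(0,\,\alpha_j^{r_j} - \rho_c c_{ij}) \;\leq\; \rho_f f_i + \sum_{j\in\mathcal{C}} \pi_{ij},
\]
and I would next simplify the left side using the definition of $\mathcal{A}_i$: only $j\in\mathcal{A}_i$ contribute nonzero terms, so the inequality is equivalent to $\sum_{j\in\mathcal{A}_i}(\alpha_j^{r_j} - \rho_c c_{ij}) \leq \rho_f f_i + \sum_{j\in\mathcal{C}} \pi_{ij}$.

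The last step is to relate this to the hypothesis of the present lemma, where the sum of $\pi_{ij}$ on the right ranges only over $\mathcal{A}_i$ rather than all of $\mathcal{C}$. Here I would note that $\pi_{ij} \geq 0$ for every $i,j$: either $\pi_{ij}=0$ by definition, or $\pi_{ij} = \hat{x_{ij}}(\alpha_j^{r_j} - \alpha_j^{l_{ij}})/R_i \geq 0$ since $\alpha_j^{q}$ is assigned the algorithm time $t$ at the moment port $q$ gets connected, and ports are connected in increasing order of $q$, so $\alpha_j^{l_{ij}} \leq \alpha_j^{r_j}$. Consequently $\sum_{j\in\mathcal{A}_i} \pi_{ij} \leq \sum_{j\in\mathcal{C}} \pi_{ij}$, and the hypothesized bound $\sum_{j\in\mathcal{A}_i}(\alpha_j^{r_j} - \rho_c c_{ij} - \pi_{ij}) \leq \rho_f f_i$ implies the form of (C7) displayed above. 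With (C6) and (C7) both established, Lemma \ref{lem:appro} yields the claimed $(\rho_f, \rho_c)$-approximation.

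There is no real obstacle here; the lemma is a bookkeeping specialization of Lemma \ref{lem:appro}. The only delicate point is the small monotonicity argument that lets us replace $\sum_{j\in\mathcal{C}} \pi_{ij}$ by $\sum_{j\in\mathcal{A}_i} \pi_{ij}$, which rests entirely on the sign of $\pi_{ij}$ and on the monotone time-assignment $\alpha_j^q = t$ built into the PD algorithm.
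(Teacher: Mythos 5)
Your proposal is correct and follows essentially the same route as the paper: both choose $\beta_{ij}=\max\left(0,\,\alpha_{j}^{r_{j}}-\rho_{c}c_{ij}\right)$ to make (C6) automatic, substitute into (C7), restrict the sum to $\mathcal{A}_{i}$, and discard the remaining $\pi_{ij}$ terms using $\pi_{ij}\geq0$. The only (welcome) addition is that you explicitly justify $\pi_{ij}\geq0$ via the monotone port-connection times, a fact the paper simply asserts.
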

\begin{proof}
After the substitution, (C7) becomes:

$\forall i\in\mathcal{F}:\,\sum_{j\in\mathcal{C}}\left(\beta_{ij}-\pi_{ij}\right)\leq\rho_{f}f_{i}\Rightarrow$

$\forall i\in\mathcal{F}:\,\sum_{j\in\mathcal{C}}\left(\max\left(0,\,\alpha_{j}^{r_{j}}-\rho_{c}c_{ij}\right)-\pi_{ij}\right)\leq\rho_{f}f_{i}\Rightarrow$

$\forall i\in\mathcal{F}:\,\underset{\textrm{if}\,\alpha_{j}^{r_{j}}\geq\rho_{c}c_{ij}}{\underbrace{\sum_{j\in\mathcal{C}}\left(\alpha_{j}^{r_{j}}-\pi_{ij}-\rho_{c}c_{ij}\right)}}-\underset{\textrm{if}\,\alpha_{j}^{r_{j}}<\rho_{c}c_{ij}}{\underbrace{\sum_{j\in\mathcal{C}}\pi_{ij}}}\leq\rho_{f}f_{i}$.

Therefore, since $\pi_{ij}\geq0$, it is sufficient to prove $\forall i\in\mathcal{F}:\,\sum_{j\in\mathcal{A}_{i}}\left(\alpha_{j}^{r_{j}}-\pi_{ij}-\rho_{c}c_{ij}\right)\leq\rho_{f}f_{i}$
where $\mathcal{A}_{i}=\left\{ j\in\mathcal{C}\,|\,\alpha_{j}^{r_{j}}\geq\rho_{c}c_{ij}\right\} $
to satisfy the original (C7).
\end{proof}
If we set $\forall i\in\mathcal{F},\, j\in\mathcal{C}:\, u_{ij}=\alpha_{j}^{r_{j}}-\pi_{ij}$,
then $u_{ij}=\begin{cases}
\alpha_{j}^{l_{ij}} & \textrm{if}\,\hat{x_{ij}}=R_{i}\\
\alpha_{j}^{r_{j}} & \textrm{if}\,\hat{x_{ij}}<R_{i}
\end{cases}$ and we have the corollary below.
\begin{corollary}
W.l.o.g., for every site $i$, order the corresponding $n_{i}=\left|\mathcal{A}_{i}\right|$
clients in $\mathcal{A}_{i}=\left\{ j\in\mathcal{C}\,|\,\alpha_{j}^{r_{j}}\geq\rho_{c}c_{ij}\right\} $
s.t. $u_{i1}\leq\ldots\leq u_{in_{i}}$. Then the PD Algorithm is
$\left(\rho_{f},\,\rho_{c}\right)$-approximation if $\forall i\in\mathcal{F}:\,\sum_{j=1}^{n_{i}}\left(u_{ij}-\rho_{c}c_{ij}\right)\le\rho_{f}f_{i}$.
\label{cor:s-appro}
\end{corollary}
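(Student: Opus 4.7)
The plan is to deduce the corollary directly from Lemma~\ref{lem:s-appro} by substituting the definition $u_{ij}=\alpha_{j}^{r_{j}}-\pi_{ij}$ into the inequality and then observing that the sum on the left-hand side is invariant under any reordering of the clients in $\mathcal{A}_{i}$. Since the corollary is essentially a restatement in more convenient notation, no new analytic content is needed; the work is bookkeeping.

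First, I would verify the two cases that determine $u_{ij}$. Using the definition
\[
\pi_{ij}=\begin{cases}\dfrac{\hat{x_{ij}}\left(\alpha_{j}^{r_{j}}-\alpha_{j}^{l_{ij}}\right)}{R_{i}} & \text{if }\hat{x_{ij}}=R_{i},\\[4pt]0 & \text{if }\hat{x_{ij}}<R_{i},\end{cases}
\]
a direct computation gives $u_{ij}=\alpha_{j}^{l_{ij}}$ in the first case (since the factor $\hat{x_{ij}}/R_{i}=1$) and $u_{ij}=\alpha_{j}^{r_{j}}$ in the second case, matching the definition asserted before the corollary. Hence $\alpha_{j}^{r_{j}}-\pi_{ij}-\rho_{c}c_{ij}=u_{ij}-\rho_{c}c_{ij}$ term by term.

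Next, I would rewrite the sufficient condition from Lemma~\ref{lem:s-appro} as
\[
\forall\,i\in\mathcal{F}:\;\sum_{j\in\mathcal{A}_{i}}\bigl(u_{ij}-\rho_{c}c_{ij}\bigr)\le\rho_{f}f_{i}.
\]
Because this is a finite sum indexed by the set $\mathcal{A}_{i}$, it is insensitive to the order in which the clients are enumerated. Therefore we may, without loss of generality, relabel the $n_{i}=|\mathcal{A}_{i}|$ clients in $\mathcal{A}_{i}$ as $1,\ldots,n_{i}$ so that $u_{i1}\le\cdots\le u_{in_{i}}$, and the sum becomes $\sum_{j=1}^{n_{i}}(u_{ij}-\rho_{c}c_{ij})$, which yields exactly the stated condition. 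Invoking Lemma~\ref{lem:s-appro} then gives the $\left(\rho_{f},\rho_{c}\right)$-approximation guarantee of the PD algorithm.

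There is essentially no obstacle here beyond correctly carrying out the case analysis for $u_{ij}$; the imposed ordering is purely a notational convenience preparing the ground for the subsequent factor-revealing LP argument (in which a monotone ordering of the $u_{ij}$'s is needed to couple them to the ordered connection costs $c_{ij}$). I would add one short sentence flagging this motivation so the reader understands why the ordering, though vacuous here, is introduced now rather than later.
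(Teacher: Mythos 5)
Your proposal is correct and matches the paper's own (implicit) argument: the paper likewise obtains the corollary by substituting $u_{ij}=\alpha_{j}^{r_{j}}-\pi_{ij}$ into Lemma~\ref{lem:s-appro}, noting the two cases give $u_{ij}=\alpha_{j}^{l_{ij}}$ or $u_{ij}=\alpha_{j}^{r_{j}}$, and imposing the ordering w.l.o.g.\ since the sum over $\mathcal{A}_{i}$ is order-independent. Your added remark about why the ordering is introduced now (to feed the factor-revealing LP) is accurate and harmless.
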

In addition, for each $i$, any subset of the clients are ordered
from $1$ to $k_{i}$ s.t. $u_{i1}\leq\ldots\leq u_{ik_{i}}$. Now,
we proceed the proof to find $\rho_{f}$ and $\rho_{c}$ with the
following lemmas. These lemmas are needed for the factor-revealing
technique and they capture the properties of the PD algorithm for
the uniform $FTRA$.
\begin{lemma}
For every site $i$, at time $t=u_{ij}-\epsilon,\,$ $\forall1\leq h<j<k_{i}$
let $\omega_{h,\, j}^{i}=\begin{cases}
u_{ih} & if\,\hat{x_{ih}}=R_{i}\\
\max_{q}c_{\phi\left(h^{\left(q\right)}\right)h} & if\,\hat{x_{ih}}<R_{i}
\end{cases}$, then $\,\omega_{h,j}^{i}\geq\omega_{h,j+1}^{i}$.\label{lem:r}\end{lemma}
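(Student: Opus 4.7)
I would prove the lemma by a direct case analysis on the snapshot quantity $\hat{x_{ih}}$. Since $\hat{x_{ih}}$ is assigned once---at the moment $h$ becomes fully-connected, which occurs at time $\alpha_h^{r_h} \le u_{ih} \le u_{ij} - \epsilon$---the branch chosen by the definition of $\omega_{h,j}^i$ agrees with that chosen by $\omega_{h,j+1}^i$, so it suffices to verify the inequality within each branch separately.

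In the first branch ($\hat{x_{ih}} = R_i$), we have $\omega_{h,j}^i = u_{ih} = \alpha_h^{l_{ih}}$, which is independent of the reference index $j$, and the desired inequality $\omega_{h,j}^i \ge \omega_{h,j+1}^i$ holds with equality.

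In the second branch ($\hat{x_{ih}} < R_i$), we have $\omega_{h,j}^i = \max_q c_{\phi(h^{(q)})h}$ with $\phi$ evaluated at time $u_{ij}-\epsilon$, so one must track how $\phi(h^{(\cdot)})$ evolves between times $u_{ij}-\epsilon$ and $u_{i,j+1}-\epsilon$. Here I would invoke the structure of Algorithm PD: since in this branch $u_{ih} = \alpha_h^{r_h}$ lies at or before $u_{ij} - \epsilon$, client $h$ has already left $\mathcal{U}$, so Action 1 can no longer fire on $h$ and only the switching subroutine embedded in Action 2 can modify $\phi(h^{(\cdot)})$. A per-switch check then shows monotonicity: Action 2 performs a swap on $h$ only when $\max_q c_{\phi(h^{(q)})h} - c_{i'h} > 0$, so the removed connection strictly dominates the inserted one, and the new maximum is either the previous second-largest value (unchanged) or the newly inserted $c_{i'h}$ (strictly smaller than the old maximum). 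Iterating over every Action-2 swap experienced by $h$ in the interval yields $\omega_{h,j+1}^i \le \omega_{h,j}^i$.

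The main obstacle will be handling the edge cases of the switching argument in the second branch---specifically, verifying that ties among $h$'s current connection costs never allow the maximum to rise after a swap, and confirming that no mechanism (including an Action~1 nested inside an Action~2 triggered by a \emph{different} client) can re-insert a previously removed expensive connection into $h$'s port assignment once $h$ is fully-connected. Once these points are settled, combining the two branches yields the stated inequality for all $1 \le h < j < k_i$.
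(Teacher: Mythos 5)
Your proof is correct and follows essentially the same route as the paper's: split on whether $\hat{x_{ih}}=R_{i}$ (where $\omega_{h,j}^{i}=u_{ih}$ is independent of $j$, giving equality) versus $\hat{x_{ih}}<R_{i}$ (where $u_{ih}=\alpha_{h}^{r_{h}}$ forces $h$ to be fully-connected by time $t$, after which only Action~2 switches can alter $\phi(h^{(\cdot)})$ and each such switch replaces a most-expensive connection by a strictly cheaper one, so the maximum never increases). The ``obstacles'' you flag are resolved exactly as you anticipate, and in fact your per-switch monotonicity check is more explicit than the paper's one-line assertion of the same fact.
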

\begin{proof}
If $\hat{x_{ih}}=R_{i}$, then $\omega_{h,\, j}^{i}=\omega_{h,j+1}^{i}$
(at time $t=u_{i\left(j+1\right)}-\epsilon$) $=u_{ih}$. Otherwise,
$\hat{x_{ih}}<R_{i}$ implies $u_{ih}=\alpha_{h}^{r_{h}}$, so $h$
is fully-connected at time $t$ since $ $$u_{ih}\leq u_{ij}$. Therefore,
$\omega_{h,j}^{i}\geq\omega_{h,j+1}^{i}$ because a fully-connected
client's ports always reconnect to the sites with less connection
cost, so its maximum connection cost will never increase. The lemma
follows.\end{proof}
\begin{lemma}
For any site $i$ and ordered $k_{i}$ clients, $\forall1\leq j\leq k_{i}:\,$\linebreak{}
$\sum_{h=1}^{j-1}\max\left(0,\,\omega_{h,j}^{i}-c_{ih}\right)+\sum_{h=j}^{k_{i}}\max\left(0,\, u_{ij}-c_{ih}\right)\leq f_{i}$.\label{lem:contri}\end{lemma}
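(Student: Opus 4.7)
The plan is to interpret each summand on the left-hand side as (an upper bound on) the contribution of one of the $k_i$ clients toward opening a new facility at site $i$ at an appropriate instant of PD's execution, and then invoke the invariant that while $y_i<R_i$ the total contribution to $i$ cannot exceed $f_i$---otherwise Event~2 would already have fired at $i$. I begin by dispatching the two ``live'' cases. For $h\ge j$, the inequality $u_{ih}\ge u_{ij}$ means that $h$'s relevant port (the $r_h$-th if $\hat{x}_{ih}<R_i$, the $l_{ih}$-th if $\hat{x}_{ih}=R_i$) has not yet fired at time $t=u_{ij}-\epsilon$, so $h$ is still in $\mathcal{U}$ and its contribution to $i$ is $\max(0,t-c_{ih})\to\max(0,u_{ij}-c_{ih})$, matching the second sum. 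For $h<j$ with $\hat{x}_{ih}<R_i$, the identity $u_{ih}=\alpha_h^{r_h}<u_{ij}$ implies $h$ is already fully connected at time $u_{ij}-\epsilon$ and contributes $\max(0,\max_q c_{\phi(h^{(q)})h}-c_{ih})=\max(0,\omega_{h,j}^{i}-c_{ih})$ to $i$ by the very definition of $\omega_{h,j}^{i}$.

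The delicate case is $h<j$ with $\hat{x}_{ih}=R_i$. Here the chain $x_{ih}\le y_i\le R_i$ combined with $x_{ih}=R_i$ at $u_{ih}=\alpha_h^{l_{ih}}$ forces $y_i$ to have reached $R_i$ by time $u_{ih}$, after which no further Event~2 can fire at $i$. I would use the earlier snapshot $u_{ih}-\epsilon$: at that instant $h$ is still in $\mathcal{U}$ (its $l_{ih}$-th port is about to connect, either via the Event~2 at $i$ that is on the verge of triggering or, if $h$'s saturating connection to $i$ arrives through Event~1 at $t=c_{ih}$, via Action~1), and $h$ contributes exactly $u_{ih}-\epsilon-c_{ih}\to u_{ih}-c_{ih}$ toward the still-closed next facility at $i$. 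In the Event~1 subcase the summand degenerates to $0$, which is harmless.

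Finally I would combine the per-client contributions. The LHS is a sum of contributions evaluated at potentially different snapshot times---the frozen $u_{ih}-\epsilon$ for saturated clients, the live $u_{ij}-\epsilon$ for the rest---but every such contribution is charged against a still-closed facility at $i$, and so PD's invariant caps the aggregate at $f_i$. To formalize the argument, one picks a single snapshot, namely the moment just before the last Event~2 at $i$ fires (or $u_{ij}-\epsilon$ if no Event~2 ever fires at $i$), and migrates the remaining terms to it using the monotonicity of $\omega_{h,j}^{i}$ in $j$ guaranteed by Lemma~\ref{lem:r} (unsaturated terms only decrease when pushed backwards in time). The main obstacle is precisely this reconciliation of per-client snapshots into a single $f_i$ budget; I expect to resolve it by a short induction along the chronological sequence of Event~2 firings at $i$, charging each saturated summand $u_{ih}-c_{ih}$ to the specific Event~2 that admits $h$'s $l_{ih}$-th port and verifying that at every such firing the contributions of the yet-unsaturated clients still fit inside $f_i$.
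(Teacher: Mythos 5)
Your handling of the two easy cases ($h\ge j$, and $h<j$ with $\hat{x}_{ih}<R_i$) is exactly the paper's argument: evaluate at the single snapshot $t=u_{ij}-\epsilon$, identify each summand with that client's contribution to site $i$ at that instant, and invoke the rule that Event~2 caps the total contribution at $f_i$ while some facility at $i$ remains closed. You also correctly isolate the genuine difficulty: the clients with $\hat{x}_{ih}=R_i$, for whom the invariant is threatened because Event~2 stops firing once $y_i=R_i$.

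The gap is in your reconciliation step, and as sketched it would fail. You propose to migrate every term back to the snapshot just before the last Event~2 at $i$. But the terms $\max(0,\,u_{ij}-c_{ih})$ for $h\ge j$ grow with the snapshot time: at any earlier instant $T<u_{ij}$ the corresponding contribution is only $\max(0,\,T-c_{ih})$, so these summands cannot be pushed backwards. (Relatedly, Lemma~\ref{lem:r} gives $\omega_{h,j}^{i}\ge\omega_{h,j+1}^{i}$, i.e.\ the unsaturated terms \emph{decrease} as the snapshot moves forward and \emph{increase} when pushed backwards --- the opposite of what you state, though for those terms this happens to be the harmless direction.) The paper avoids multi-snapshot bookkeeping entirely: it stays at $t=u_{ij}-\epsilon$ and treats the left-hand side as a \emph{modified} sum of contributions in which any client with $x_{ih}=R_i$ is frozen at $\alpha_h^{l_{ih}}-c_{ih}$ from the moment of its last connection to $i$. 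This modified sum equals the true sum while $y_i<R_i$ (hence is at most $f_i$ there), and once $y_i=R_i$ it can only decrease: every client of $\mathcal{U}$ with $c_{ih}\le t$ is by then connected to all $R_i$ facilities and frozen, clients with $c_{ih}>t$ contribute zero, and fully-connected unsaturated clients' contributions are non-increasing by Lemma~\ref{lem:r}. The companion fact that rescues the $h\ge j$ terms in the regime where $y_i$ already equals $R_i$ at time $u_{ij}-\epsilon$ is that each such term must vanish: a client $h\in\mathcal{U}$ with $u_{ij}-\epsilon>c_{ih}$ and $y_i=R_i$ would have $x_{ih}=R_i$ and hence $u_{ih}=\alpha_h^{l_{ih}}<u_{ij}$, contradicting $h\ge j$. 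Without an argument of this shape, your induction over Event~2 firings cannot close.
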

\begin{proof}
For any site $i$ and at time $t=u_{ij}-\epsilon$, if $h<j$ client
$h$'s contribution is set to be $\max\left(0,\,\omega_{h,j}^{i}-c_{ih}\right)$.
In particular, from the previous lemma and the setting of $u_{ij}$,
if $\hat{x_{ih}}<R_{i}$, it implies $h$ is fully-connected at time
$t$ and the contribution is $\max\left(0,\,\max_{q}c_{\phi\left(h^{\left(q\right)}\right)h}-c_{ih}\right)$.
In addition, if $\hat{x_{ih}}=R_{i}$ the contribution is $\max\left(0,\,\alpha_{h}^{l_{ih}}-c_{ih}\right)$.
Note that under this case, $h$ still might be fully-connected at
time $t$, but because $\hat{x_{ih}}=R_{i}$ and following the algorithm,
its contribution should not be set to $\max\left(0,\,\max_{q}c_{\phi\left(h^{\left(q\right)}\right)h}-c_{ih}\right)$
for ensuring the lemma. On the other hand, if $h\geq j$, $h$ is
not fully-connected since $t<\alpha_{h}^{r_{h}}$, so we set the contribution
to $\max\left(0,\, t-c_{ih}\right)$, i.e. $\max\left(0,\, u_{ij}-c_{ih}\right)$.
From the execution of the algorithm, at any time, the sum of these
contributions will not exceed the facility's opening cost at site
$i$, hence the lemma follows.\end{proof}
\begin{lemma}
For any site $i$ and clients $h,\, j$ s.t. $1\leq h<j\leq k_{i}:$
$r_{h}=r_{j}=r$, then ${\displaystyle u_{ij}\leq\omega_{h,j}^{i}+c_{ij}}+c_{ih}$.\label{lem:tri}\end{lemma}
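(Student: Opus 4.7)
The plan is to derive the bound by contradiction: assume $u_{ij} > s := \omega_{h,j}^i + c_{ih} + c_{ij}$, and expose an earlier opportunity for $j$'s pending port to connect via Event 1 that would force $u_{ij} \leq s$. At time $t = u_{ij} - \epsilon$, either port $l_{ij}$ of $j$ (when $\hat{x_{ij}} = R_i$) or port $r_j$ (when $\hat{x_{ij}} < R_i$) is the next port awaiting connection, so by choice of $u_{ij}$ this port is still uncommitted just before time $u_{ij}$. The argument will exhibit, under each of the two branches of the definition of $\omega_{h,j}^i$, an already-open facility that this pending port could have reached at cost bounded by $s$.

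First I would split on the value of $\hat{x_{ih}}$. When $\hat{x_{ih}} < R_i$, client $h$ is already fully connected by time $u_{ih} \leq u_{ij}$ (by the ordering of $\mathcal{A}_i$), and $\omega_{h,j}^i = c_{i^{*}h}$ where $i^{*}$ is $h$'s currently most expensive open facility at time $t$. The metric triangle inequality then yields $c_{i^{*}j} \leq c_{i^{*}h} + c_{hi} + c_{ij} = s$, and since $h$ occupies a facility at $i^{*}$, at least one facility at $i^{*}$ is open at time $t$. When $\hat{x_{ih}} = R_i$, $\omega_{h,j}^i = u_{ih}$ and $h$'s port $l_{ih}$ has already connected to $i$ by time $u_{ih}$, so $c_{ih} \leq u_{ih}$ and $y_i \geq R_i$; here the candidate open facility is at $i$ itself, and $c_{ij} \leq s$ is immediate. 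In either case, since at most $R_i - 1$ ports of $j$ can precede $l_{ij}$ as primary connections to $i$, the capacity check $x_{i'j} < y_{i'}$ for the candidate $i' \in \{i, i^{*}\}$ should hold throughout the interval in question, so Event 1 must fire before time $u_{ij}$ to produce the contradiction.

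The hard part will be the timing/capacity bookkeeping: concretely, showing that at the relevant moment the pending port $p_j$ satisfies $\alpha_j^{p_j - 1} \leq s$ \emph{and} the capacity condition $x_{i'j} < y_{i'}$ holds for the candidate $i'$. The uniform hypothesis $r_h = r_j = r$ enters precisely here, by synchronizing the paces at which $h$'s and $j$'s ports accumulate, so that $j$ cannot have silently saturated the capacity of $i'$ before time $s$ in a way that would block the triggering connection. Once this is in place, Event 1 fires no later than $\max\bigl(c_{i'j}, \alpha_j^{p_j-1}\bigr) \leq s$, delivering the required bound $u_{ij} \leq \omega_{h,j}^i + c_{ih} + c_{ij}$.
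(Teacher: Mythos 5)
Your overall framing matches the paper's: exhibit an open facility within distance $\omega_{h,j}^{i}+c_{ih}+c_{ij}$ of $j$ that still has spare capacity for $j$, so that Event~1 forces $j$'s pending port to connect no later than that time, bounding $u_{ij}$. Your case $\hat{x_{ih}}=R_{i}$ is essentially sound (all $R_{i}$ facilities at $i$ are open by time $u_{ih}=\omega_{h,j}^{i}$, and $x_{ij}<R_{i}$ holds for the pending port, so $u_{ij}\leq\max\left(c_{ij},\,\omega_{h,j}^{i}\right)$), and it corresponds to the paper's first case where all facilities at $i$ are already open.

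The gap is in the case $\hat{x_{ih}}<R_{i}$, and it is exactly the step you deferred as ``the hard part.'' You fix the candidate site to be $i^{*}$, the site of $h$'s \emph{most expensive} current connection, and then need $x_{i^{*}j}<y_{i^{*}}$; but nothing prevents $j$ from having already saturated $i^{*}$ (e.g.\ $y_{i^{*}}=1$ and $j$ connected an early port there because $c_{i^{*}j}$ is small), in which case Event~1 cannot fire at $i^{*}$ and your contradiction evaporates. The appeal to uniformity as ``synchronizing the paces'' of $h$'s and $j$'s ports is not a real mechanism --- the algorithm imposes no such synchronization. What uniformity actually buys, and what the paper uses, is a one-line pigeonhole: at time $t=u_{ij}-\epsilon$ client $h$ is fully connected with $\sum_{i'}x_{i'h}=r$ while $j$ is not, so $\sum_{i'}x_{i'j}<r$ and hence \emph{some} site $i'$ satisfies $x_{i'h}>x_{i'j}$. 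This single choice simultaneously gives the capacity condition $x_{i'j}<x_{i'h}\leq y_{i'}$ and the cost bound $c_{i'h}\leq\max_{q}c_{\phi\left(h^{\left(q\right)}\right)h}=\omega_{h,j}^{i}$ (the maximum dominates \emph{every} connection of $h$, so there is no need to insist on the most expensive one), after which the triangle inequality $c_{i'j}\leq c_{i'h}+c_{ih}+c_{ij}$ closes the argument. Replace your choice of $i^{*}$ with this counting argument and the proof goes through; as written, the capacity claim at $i^{*}$ is false in general and the proof does not close.
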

\begin{proof}
At time $t=u_{ij}-\epsilon,$ if all facilities at site $i$ are already
open, then $u_{ij}\leq c_{ij}$ and the lemmas holds. Otherwise, if
not all facilities are open, then at time $t$ every client $h<j$
is fully-connected. This is because $u_{ih}\leq u_{ij}$ implies $u_{ih}=\alpha_{h}^{l_{ih}}$
or $\alpha_{h}^{r_{h}}$ at the time $t$. Since $h$ can only connect
to less than $R_{i}$ facilities at $i$, this contradicts the condition
$\hat{x_{ih}}=R_{i}$ for the setting of $u_{ih}$, so $u_{ih}=\alpha_{h}^{r_{h}}.$
In addition, $j$ itself is not fully-connected at $t$, whereas $h$
is fully-connected and has already connected to $r$ facilities. There
is at least a facility to which $h$ is connected but not $j$. (This
is where we must enforce all clients have the uniform connection $r$.)
Denote this facility (site) by $i'$, we have $u_{ij}\leq c_{i'j}$
and $\omega_{h,j}^{i}\geq c_{i'h}$. Lastly, by the triangle inequality
of the metric property, $c_{i'j}\leq c_{i'h}+c_{ij}+c_{ih}$ and then
we have the lemma.
\end{proof}
{\small 
\begin{align*}
z_{k}=\textrm{maximize\,\,\,\,\,} & {\displaystyle \frac{\sum_{j=1}^{k}\alpha_{j}}{f+\sum_{j=1}^{k}d_{j}}}\\
\textrm{subject to\,\,\,\,\,} & \forall1\leq j<k:\alpha_{j}\leq\alpha_{j+1}\\
 & \forall1\leq h<j<k:r_{h,j}\geq r_{h,j+1}\\
 & \forall1\leq h<j\leq k:{\displaystyle \alpha_{j}\leq r_{h,j}+d_{h}+d_{j}}\\
 & 1\leq j\leq k:\sum_{h=1}^{j-1}\max\left(r_{h,j}-d_{h},\,0\right)+\sum_{h=j}^{k}\max\left(\alpha_{j}-d_{h},\,0\right)\leq f\\
 & 1\leq h\leq j<k:\alpha_{j},\, d_{j},\, f,\, r_{h,j}\geq0
\end{align*}
}Consider the above factor-revealing program series (25) of \cite{Jain03dualfitting}.
If we let $k=k_{i,\,}\alpha_{j}=u_{ij},r_{h,j}=\omega_{h,j}^{i},\, f=f_{i},\, d_{j}=c_{ij}$,
from the previous lemmas it is clear that $u_{ij},\,\omega_{h,j}^{i},\, f_{i}$
and $c_{ij}$ constitute a feasible solution. Also, from Lemma 5.4
and Theorem 8.3 of \cite{Jain03dualfitting}, and Lemma 4 of \cite{Mohammad06FLP}
we can directly get $\forall i\in\mathcal{F}:\,$$\sum_{j=1}^{k_{i}}\left(u_{ij}-1.61c_{ij}\right)\leq1.61f_{i},\,\sum_{j=1}^{k_{i}}\left(u_{ij}-1.78c_{ij}\right)\leq1.11f_{i}$
and $\sum_{j=1}^{k_{i}}\left(u_{ij}-2c_{ij}\right)\leq f_{i}$. Furthermore,
because $n_{i}=\left|\mathcal{A}_{i}\right|$ and $k_{i}$ represents
the size of any subset of the clients, Lemma \ref{lem:time-PD} and
Corollary \ref{cor:s-appro} directly lead to the following theorem.
\begin{theorem}
Algorithm PD is 1.61-, (1.11, 1.78)- and (1,2)-approximation in time
$O\left(n^{3}\max_{j\in\mathcal{C}}r_{j}\right)$ for the uniform
$FTRA$. \label{thm: multifactor}
\end{theorem}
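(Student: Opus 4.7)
The plan is to treat the theorem as a direct packaging of the preceding Lemmas \ref{lem:r}, \ref{lem:contri} and \ref{lem:tri} into the factor-revealing linear program (25) of \cite{Jain03dualfitting}, then invoke the already-known bounds on the optimum of that program and apply Corollary \ref{cor:s-appro}. The first step will be to verify, site by site, the substitution $k \leftarrow k_i$, $\alpha_j \leftarrow u_{ij}$, $r_{h,j} \leftarrow \omega_{h,j}^i$, $f \leftarrow f_i$, $d_j \leftarrow c_{ij}$. Under this correspondence each constraint of the factor-revealing LP is matched by an earlier result: the monotonicity $\alpha_j \le \alpha_{j+1}$ is the client reordering built into Corollary \ref{cor:s-appro}; the monotonicity $r_{h,j} \ge r_{h,j+1}$ is Lemma \ref{lem:r}; the triangle-type constraint $\alpha_j \le r_{h,j} + d_h + d_j$ is Lemma \ref{lem:tri}; and the per-site capacity constraint is Lemma \ref{lem:contri}. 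Consequently the tuple $(u_{ij}, \omega_{h,j}^i, f_i, c_{ij})$ is feasible for program (25) at every $i \in \mathcal{F}$.

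Once feasibility is in hand, I will invoke the quantitative upper bounds on the optimum $z_{k_i}$, already proved in the literature. From Lemma 5.4 and Theorem 8.3 of \cite{Jain03dualfitting} I obtain
\[
\sum_{j=1}^{k_i}\bigl(u_{ij} - 1.61\, c_{ij}\bigr) \le 1.61\, f_i,
\]
and from Lemma 4 of \cite{Mohammad06FLP} the bi-factor variants
\[
\sum_{j=1}^{k_i}\bigl(u_{ij} - 1.78\, c_{ij}\bigr) \le 1.11\, f_i, \qquad \sum_{j=1}^{k_i}\bigl(u_{ij} - 2\, c_{ij}\bigr) \le f_i.
\]
Because these inequalities hold for every $i \in \mathcal{F}$ and for the particular choice $k_i = n_i$ in which $\mathcal{A}_i$ is used, Corollary \ref{cor:s-appro} immediately converts them into the three approximation guarantees $1.61$, $(1.11, 1.78)$ and $(1,2)$. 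The runtime bound $O(n^3 \max_{j \in \mathcal{C}} r_j)$ is imported verbatim from Lemma \ref{lem:time-PD}, so no additional work is needed there.

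The only delicate point I foresee is making sure the verification of the constraint $\alpha_j \le r_{h,j} + d_h + d_j$ is correctly aligned with Lemma \ref{lem:tri}, because this is the unique step where the uniformity hypothesis $r_h = r_j = r$ genuinely enters the argument; the proof of Lemma \ref{lem:tri} exploits a pigeonhole-style statement that breaks without a common connection requirement, which is precisely why the theorem is stated for the uniform $FTRA$ rather than the general case. Once that alignment is secured, the remainder is a purely mechanical packaging of structural facts about the PD trajectory into the factor-revealing framework, with no new estimates to produce.
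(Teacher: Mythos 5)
Your proposal is correct and follows essentially the same route as the paper: the paper's own argument is precisely the substitution $k=k_i$, $\alpha_j=u_{ij}$, $r_{h,j}=\omega_{h,j}^{i}$, $f=f_i$, $d_j=c_{ij}$ into program (25) of \cite{Jain03dualfitting}, with feasibility supplied by Lemmas \ref{lem:r}, \ref{lem:contri} and \ref{lem:tri}, the numerical bounds imported from Lemma 5.4 and Theorem 8.3 of \cite{Jain03dualfitting} and Lemma 4 of \cite{Mohammad06FLP}, and the conclusion drawn via Corollary \ref{cor:s-appro} and Lemma \ref{lem:time-PD}. Your remark that the uniformity hypothesis enters only through Lemma \ref{lem:tri} matches the paper's own observation that the algorithms are generic and uniformity is needed only in that lemma's analysis.
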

\begin{algorithm}[H]
\caption{APD: Acceleration of Primal-Dual Algorithm}

\textbf{\textcolor{black}{Input}}\textcolor{black}{: }$\mathcal{F},\,\mathcal{C},\,\boldsymbol{f},\,\boldsymbol{c},\,\boldsymbol{r},\,\boldsymbol{R}$.\textbf{\textcolor{black}{{}
Output: }}$\left(\boldsymbol{x},\,\boldsymbol{y}\right)$.

\textbf{\textcolor{black}{Initialization}}\textcolor{black}{: S}et
$\mathcal{U}=\mathcal{C}$, $\forall i\in\mathcal{F},\, j\in\mathcal{C}:\, x_{ij},\, y_{i}=0,\, FC_{j}=0$.

\medskip{}

\textbf{\textcolor{black}{while}}\textcolor{black}{{} $\mathcal{U}\neq\emptyset$,
increase time $t$ uniformly and execute the events below:} 
\begin{itemize}
\item \textcolor{black}{Event 1: $\exists i\in\mathcal{F},\, j\in\mathcal{U}$
s.t. $t=c_{ij}$ and }$x_{ij}<y_{i}$.\\
\textcolor{black}{Action 1-a: }$ToC\leftarrow\min\left(y_{i}-x_{ij},\, r_{j}-FC_{j}\right)$;\\
Action 1-b: \textbf{set} $x_{ij}\leftarrow x_{ij}+ToC$ and $FC_{j}\leftarrow FC_{j}+ToC$;\\
Action 1-c: If $FC_{j}=r_{j}$ then\textbf{ $\mathcal{U}\leftarrow\mathcal{U}\backslash\left\{ j\right\} $}.\textcolor{black}{{}
}\medskip{}

\item \textcolor{black}{Event 2:} \textcolor{black}{$\exists i\in\mathcal{F}$:
$\sum_{j\in\mathcal{U}}\max\left(0,\, t-c_{ij}\right)+$$\sum_{j\in\mathcal{C}\backslash\mathcal{U}}\max\left(0,\,\max_{i'\in\mathcal{F}\&\&x_{i'j}>0}c_{i'j}-c_{ij}\right)=f_{i}$}
and $y_{i}<R_{i}$.\\
\textcolor{black}{Action 2-a: }$\mathcal{\mathcal{U}}_{i}\leftarrow\left\{ j\in\mathcal{\mathcal{U}}\,|\, t-c_{ij}\geq0\right\} $
and $NC\leftarrow\min_{j\in\mathcal{\mathcal{U}}_{i}}\left(r_{j}-FC_{j}\right)$;
\\
Action 2-b: $\mathcal{S}_{i}\leftarrow\left\{ j\in\mathcal{\mathcal{\mathcal{C}\backslash\mathcal{U}}}\,|\,\max_{i'\in\mathcal{F}\&\&x_{i'j}>0}c_{i'j}-c_{ij}>0\right\} $,
$\forall j\in\mathcal{S}_{i}:\, i_{j}^{*}\leftarrow\arg\max_{i'\in\mathcal{F}\&\&x_{i'j}>0}c_{ij}$
and $NS\leftarrow\min_{j\in\mathcal{S}_{i}}x_{i_{j}^{*}j}$; \\
Action 2-c: \textbf{\textcolor{black}{set}}\textcolor{black}{{} }$ToC\leftarrow\min\left(NC,\, NS,\, R_{i}-y_{i}\right)$
and $y_{i}\leftarrow y_{i}+ToC$; \\
Action 2-d: $\forall j\in\mathcal{S}_{i}:$ $x_{i_{j}^{*}j}\leftarrow x_{i_{j}^{*}j}-ToC$
and $x_{ij}\leftarrow x_{ij}+ToC$; \\
Action 2-e: $\forall j\in\mathcal{\mathcal{U}}_{i}:$ \textbf{do}
Action 1-b; \\
Action 2-f: $\forall j\in\mathcal{\mathcal{U}}_{i}:$ \textbf{do}
Action 1-c.
\end{itemize}
\begin{rem} \textcolor{black}{For convenience of analysis, sequential
actions of the events are separated as above. If more than one event
happen at the same time, the algorithm process Event 2 first so that
no repeated events are needed.}\end{rem}
\end{algorithm}

The previous PD algorithm runs in pseudo-polynomial time depending
on $r_{j}$. With the acceleration heuristic described in the following,
the algorithm can then change to an essentially identical algorithm
APD (Algorithm 3) which is strongly polynomial. In the heuristic,\textcolor{black}{{}
$\left(\boldsymbol{x},\,\boldsymbol{y}\right)$ is able to increase
at a }faster rate\textcolor{black}{{} }rather than $1$, through combining
the repeated events into a single event to reduce the total number
of events to process and hence achieve fast connections. In particular,
for \textcolor{black}{Event }2, once a facility of a site $i$ is
opened and connected with a group of clients' ports, according to
the PD algorithm, additional facilities at $i$ may subsequently open
and connect with this group of clients' other ports until their sum
of contributions (SOC) becomes insufficient to pay $f_{i}$, or $y_{i}=R_{i}$.
The SOC is not enough \textcolor{black}{any more if} a client in $\mathcal{U}$
appears to be fully-connected, so $\sum_{j\in\mathcal{U}}\max\left(0,\, t-c_{ij}\right)$
will decrease, or the most expensive connection of a client in $\mathcal{C}\backslash\mathcal{U}$
differs (after switching all such connections), in this case $\sum_{j\in\mathcal{C}\backslash\mathcal{U}}\max\left(0,\,\max_{i'\in\mathcal{F}\&\&x_{i'j}>0}c_{i'j}-c_{ij}\right)$%
\footnote{For simplicity of the algorithm description, we replace the term \textcolor{black}{$\sum_{j\in\mathcal{C}\backslash\mathcal{U}}\max\left(0,\,\max_{q}c_{\phi\left(j^{\left(q\right)}\right)j}-c_{ij}\right)$}
in the PD algorithm with essentially the same term here.%
} will decrease. Similarly, for \textcolor{black}{Event }1, once a
client $j$'s port starts to connect to an already opened facility
at a site $i$, its other ports may get connected to $i$ at the same
time until either there are no remaining open facilities at $i$ or
$j$ reaches $r_{j}$ connections. 

Formally in the APD Algorithm, $FC_{j}$ denotes the number of established
connections of client $j$ and $ToC$ the total number of connections
decided to make according to the heuristic. The incremental rate of
\textcolor{black}{$\left(\boldsymbol{x},\,\boldsymbol{y}\right)$}
can then be determined by $ToC$ instead of $1$. Moreover, in the
more complicated Event 2 on a site $i$, $NC$ denotes the maximum
number of connections to make until one of the clients in $\mathcal{U}$
gets fully-connected, and $NS$ the maximum number of switches until
the most expensive connection of a client in $\mathcal{C}\backslash\mathcal{U}$
changes. Therefore, $ToC$ is calculated as $\min\left(NC,\, NS,\, R_{i}-y_{i}\right)$,
the maximum number of connections until the SOC becomes insufficient
or $y_{i}=R_{i}$. Similarly, for \textcolor{black}{Event }1, $ToC$
is calculated as $\min\left(y_{i}-x_{ij},\, r_{j}-FC_{j}\right)$.
\begin{lemma}
With the acceleration heuristic decided by \textup{$ToC$}, the numbers
of Event 1 and Event 2 in the APD algorithm are bounded by \textcolor{black}{$n_{f}n_{c}$}
and \textcolor{black}{$\left(n_{c}+n_{f}+n_{c}n_{f}\right)$} respectively
which are independent of $r_{j}$.\label{lem:suh}\end{lemma}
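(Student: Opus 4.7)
The plan is to bound Event~1 and Event~2 separately by charging each occurrence to a distinct combinatorial object that the acceleration heuristic forbids from being reused.

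For Event~1, the trigger condition $t = c_{ij}$ together with the strict monotonicity of the global time $t$ forces Event~1 to fire at most once for each pair $(i,j) \in \mathcal{F} \times \mathcal{C}$. When it does fire, Actions 1-a, 1-b, 1-c execute the batched update in a single shot, so on exit either $x_{ij} = y_i$ or $FC_j = r_j$; any facility that subsequently opens at $i$ while $j$ is still in $\mathcal{U}$ gets connected to $j$ through Action 2-e inside some Event~2, not through a fresh Event~1. This already yields the bound $n_f n_c$ on the number of Event~1 occurrences.

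For Event~2, I would partition the occurrences according to which of $NC$, $NS$, or $R_i - y_i$ attains the minimum defining $ToC$ (treating an empty $\mathcal{U}_i$ or $\mathcal{S}_i$ as $+\infty$; since a positive SOC of $f_i$ triggered the event, at least one of the three is finite and hence binds). Events in which $R_i - y_i$ binds leave $y_i = R_i$, after which the guard ``$y_i < R_i$'' blocks every future Event~2 at site $i$; at most one such event per site, contributing $n_f$. Events in which $NC$ binds drive at least one $j \in \mathcal{U}_i$ to $FC_j = r_j$, permanently removing $j$ from $\mathcal{U}$; at most one such event per client, contributing $n_c$. The $NS$-bound case is where the real work sits.

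The main obstacle is the $n_c n_f$ bound on $NS$-bound Event~2 occurrences. Each such event drives $x_{i_j^* j}$ to $0$ for at least one $j \in \mathcal{S}_i$, so it suffices to establish the charging lemma: for every pair $(j, i^*) \in \mathcal{C} \times \mathcal{F}$, the quantity $x_{i^* j}$ transitions from positive to zero at most once during the whole run. I would prove this by tracking $M_j := \max_{i' : x_{i' j} > 0} c_{i' j}$ after $j$ leaves $\mathcal{U}$: every Action 2-d switch replaces a block of connections at the current argmax $i_j^*$ by connections at a strictly cheaper site $i$, so $M_j$ is non-increasing for $j \in \mathcal{C} \setminus \mathcal{U}$. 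If $(j, i^*)$ is retired at some moment, then at that moment $i^* = i_j^*$ and $c_{i^* j} = M_j$, and immediately after the event $M_j$ drops strictly below $c_{i^* j}$. For $x_{i^* j}$ to rebound to a positive value would require a later Event~2 at $i^*$ with $j \in \mathcal{S}_{i^*}$, which demands $M_j > c_{i^* j}$---a contradiction (and since $j$ never re-enters $\mathcal{U}$, Event~1 and Action 2-e at $i^*$ are also ruled out). Summing the three cases gives the claimed total $n_c + n_f + n_c n_f$.
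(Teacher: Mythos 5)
Your proof is correct and follows essentially the same route as the paper's: the same per-pair charge for Event~1, and the same three-way split of Event~2 into a site saturating $R_i$, a client leaving $\mathcal{U}$, and a client exhausting its most expensive connection block, with the last case charged to the at most $n_c n_f$ client--site connection blocks that can each be retired only once because switches go only to cheaper sites. Your version merely makes the paper's ``clients only switch to cheaper connections'' step explicit via the non-increasing quantity $M_j$ (and note the strict drop of $M_j$ below $c_{i^*j}$ is not needed --- $M_j \leq c_{i^*j}$ already rules out re-entry into $\mathcal{S}_{i^*}$).
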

\begin{proof}
In the APD algorithm, the number of Event 1 is \textcolor{black}{at
most $n_{f}n_{c}$ because for any client $j$ and site $i$ only
when $t=c_{ij}$, $j$ exhaustively gets connected to open facilities
at site $i$, and there are $n_{f}$ sites and $n_{c}$ clients in
total. }Moreover, \textcolor{black}{each Event 2 will cause at least
one of the following 3 cases: (1) a client $j$ in }$\mathcal{\mathcal{U}}$\textcolor{black}{{}
becomes fully-connected; (2) a client $j$ in }$\mathcal{\mathcal{\mathcal{C}\backslash\mathcal{U}}}$\textcolor{black}{{}
switches all of its most expensive connections; (3) a site opens all
its facilities. It is easy to see that there are maximum $n_{c}$
and $n_{f}$ cases (1) and (3) respectively, so we are left to bound
the number of case (2). For this case, it is important to observe
that any client $j$ has at most $n_{f}$ possible sets of connections
where connections in each set associate to the same site. So there
are at most $n_{c}n_{f}$ such possible sets in total, and each case
(2) removes at least one set of a client with currently most expensive
connection cost, effectively reducing the number of possible sites
for switching, since clients only switch to cheaper connections. Therefore,
there are at most $n_{c}n_{f}$ case (2) and Event 2 is bounded by
$\left(n_{c}+n_{f}+n_{c}n_{f}\right)$.}\end{proof}
\begin{lemma}
Algorithm APD computes a feasible solution to the $FTRA$ and runs
in $O\left(n^{4}\right)$.\label{lem:time-SPD}\end{lemma}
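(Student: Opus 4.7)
My plan is to separate the claim into its two parts---feasibility and $O(n^4)$ running time---and for each part leverage what has already been established, namely feasibility of the PD algorithm (Lemma~\ref{lem:time-PD}) and the event-count bound of Lemma~\ref{lem:suh}.

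For feasibility, the key observation is that APD is mechanically the same as PD except that several consecutive executions of the same event are collapsed into one via the counter $ToC$. It therefore suffices to check that the three choices of $ToC$ in the pseudocode are conservative enough to preserve the primal constraints of ILP~\eqref{eq:ftra-ip}. In Event~1, setting $ToC \leftarrow \min\left(y_i - x_{ij},\, r_j - FC_j\right)$ directly enforces both $x_{ij} \le y_i$ after Action~1-b and $FC_j \le r_j$, so that the client is only removed from $\mathcal{U}$ in Action~1-c exactly when $FC_j = r_j$. In Event~2, the choice $ToC \leftarrow \min\left(NC,\, NS,\, R_i - y_i\right)$ ensures $y_i \le R_i$, that no client in $\mathcal{U}_i$ would be over-connected beyond $r_j$ (guarded by $NC$), and that no switch removes more copies of a client's current most-expensive connection than exist (guarded by $NS$). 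Since every step after batching matches a legal sequence of PD steps with the same effect on $(\boldsymbol{x},\boldsymbol{y})$, the feasibility argument of the PD algorithm transfers verbatim.

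For the time bound, I will apply Lemma~\ref{lem:suh}, which gives $O(n_f n_c) = O(n^2)$ total event executions, independently of the $r_j$'s. The per-event cost is dominated, as in PD (cf.\ the $O(n_c n_f)$ bookkeeping inherited from Theorem~22.4 of \cite{JensVygenFL06Book}), by updating each client's contribution to every facility and identifying the next time $t$ at which an event fires; this is $O(n_c n_f) = O(n^2)$ per event. Maintaining $\max_{i' \in \mathcal{F}, x_{i'j}>0} c_{i'j}$ for each $j$ can be refreshed in $O(n_f)$ per affected client, which is absorbed into the same $O(n^2)$ budget. Multiplying the two factors yields $O(n^2) \cdot O(n^2) = O(n^4)$ total time.

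The main obstacle, and the part that I would write out most carefully, is verifying semantic equivalence of the batched Action~2 with a sequence of unit-step Event~2 executions in PD. Concretely, I need to argue that while the batch of $ToC$ increments is being applied, the sum-of-contributions condition triggering Event~2 remains tight at site $i$ and does not trigger at any other site before the batch completes; this is exactly what the three arguments of the $\min$ defining $ToC$ guarantee, since the first event that would perturb the equality $\sum \max(0, t - c_{ij}) + \sum \max(0, \max_{i'} c_{i'j} - c_{ij}) = f_i$ is either a client in $\mathcal{U}_i$ reaching $FC_j = r_j$, a switch exhausting $x_{i_j^* j}$ on the currently most expensive connection, or $y_i$ reaching $R_i$. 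Spelling this out confirms that batching does not skip any event and the APD trajectory is a valid coarsening of the PD trajectory, completing both parts of the lemma.
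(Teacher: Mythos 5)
Your proposal is correct and follows essentially the same route as the paper: feasibility by arguing APD is the PD algorithm with consecutive identical events batched via $ToC$, and the $O\left(n^{4}\right)$ bound by combining the event counts of Lemma~\ref{lem:suh} with $O\left(n_{c}n_{f}\right)$ work per event. The only point the paper makes more explicit is that getting $O\left(n_{c}n_{f}\right)$ per event (rather than the naive $O\left(n_{c}n_{f}^{2}\right)$ for locating the next Event~2 time, which would give $O\left(n^{5}\right)$ overall) requires maintaining cached contributions and anticipated event times in heaps --- your accounting implicitly assumes this data-structure support, so it is consistent with the paper's argument.
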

\begin{proof}
The solution is feasible because APD is essentially the same as the
PD algorithm for $FTRA$ except the implementation of the acceleration
heuristic. In addition,\textcolor{black}{{} }from the previous lemma,
the number of Event 1 is \textcolor{black}{at most $n_{f}n_{c}$,
so the numbers of both Action 1-a and 1-b are bounded by $n_{f}n_{c}$,
while Action 1-c is bounded by $n_{c}$ since there are $n_{c}$ clients
to be connected in total. In addition, the number of Event 2 is bounded
by $\left(n_{c}+n_{f}+n_{c}n_{f}\right)$, so the numbers of Action
2-a, 2-b, 2-c, 2-d and 2-e are bounded by $O\left(n_{c}n_{f}\right)$
while Action 2-f is included in Action 1-c. Although the presented
APD algorithm is continuous on $t$, in real implementation, it can
be easily discretized through finding the the smallest $t$ that satisfy
the conditions of events. Naively, finding such $t$ for Event 1 takes
time $O\left(n_{c}n_{f}\right)$, and Event 2 takes time $O\left(n_{c}n_{f}^{2}\right)$,
so the algorithm runs in }$O\left(n^{5}\right)$. However, \textcolor{black}{similar
to Theorem 22.4 of \cite{JensVygenFL06Book} and Theorem 8 of \cite{jain01approximation}
for $UFL$, we can maintain two heaps to reduce the time to find $t$.
In particular,} each Action 2-d \textcolor{black}{actually requires
extra time $O\left(n_{c}n_{f}\right)$ after }the client switchings\textcolor{black}{{}
to change clients' contributions to all facilities. This is for later
eventually updating the anticipated times of the events in the heaps
(that takes time $O\left(n_{f}\log n_{c}n_{f}\right)$) following
}each Action \textcolor{black}{1-c. This} action dominates\textcolor{black}{{}
the overall runtime complexity. Hence the total time is $O\left(n_{c}^{2}n_{f}^{2}\right)$.}
\end{proof}
The algorithm computes the same solution as the PD algorithm, so we
have the following theorem.
\begin{theorem}
Algorithm APD is 1.61-, (1.11, 1.78)- and (1,2)-approximation in time
$O\left(n^{4}\right)$ for the uniform $FTRA$.
\end{theorem}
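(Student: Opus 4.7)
The plan is to obtain this theorem as a direct corollary of Theorem~\ref{thm: multifactor} for PD together with Lemma~\ref{lem:time-SPD}, by showing that APD and PD produce identical primal--dual information $\left(\boldsymbol{x},\boldsymbol{y},\boldsymbol{\alpha},\boldsymbol{\beta},\boldsymbol{z}\right)$. Since every lemma driving the $1.61$, $(1.11,1.78)$ and $(1,2)$ bounds depends only on this information, preserving it across the two algorithms is what must be argued.

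First I would formalize the equivalence step by step. The acceleration heuristic merely batches contiguous micro-steps that PD performs one port at a time into a single action with multiplicity $ToC$. For Event~1, PD starts at $t=c_{ij}$ and connects consecutive ports of $j\in\mathcal{U}$ to the already open facilities at $i$ until $j$ either exhausts those facilities or reaches $r_{j}$ connections; APD accomplishes this in one step with $ToC=\min\left(y_{i}-x_{ij},\,r_{j}-FC_{j}\right)$. For Event~2 at site $i$, PD would open fresh facilities one by one, each time routing a new port for every $j\in\mathcal{U}$ with $t\geq c_{ij}$ and a switched port for every $j\in\mathcal{C}\setminus\mathcal{U}$ whose current most expensive connection exceeds $c_{ij}$, continuing until the sum of contributions drops below $f_{i}$ or $y_{i}=R_{i}$. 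The quantities $NC$, $NS$ and $R_{i}-y_{i}$ in APD are exactly the three cutoffs whose exhaustion would cause PD to halt this cascade. Hence the multiset of (client port, site) connections, the values of $\hat{x_{ij}}$, and all associated time stamps $\alpha_{j}^{q}=t$ produced by APD coincide with those of PD.

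Given this invariance, every ingredient in the proof of Theorem~\ref{thm: multifactor} transfers verbatim: the cost bound of Lemma~\ref{lem: pdb}, the relaxed-dual analysis of Lemmas~\ref{lem:dual-bound}--\ref{lem:s-appro} and Corollary~\ref{cor:s-appro}, and the factor-revealing chain of Lemmas~\ref{lem:r}--\ref{lem:tri} feeding the factor-revealing program series of \cite{Jain03dualfitting}. All three bi-factor guarantees therefore hold for APD, and the running-time bound $O\left(n^{4}\right)$ is provided by Lemma~\ref{lem:time-SPD}.

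The main obstacle will be ruling out subtle ordering issues when several ports receive the same time stamp $t$ inside one APD action. I would resolve this by appealing to the remark following Algorithm PD, which permits arbitrary ordering among simultaneous events: the ports batched together in an APD action correspond to a valid serialization of PD operating at that same $t$. The structural quantities $u_{ij}$, $\omega_{h,j}^{i}$ and $\pi_{ij}$ depend only on these time stamps and on $\hat{x_{ij}}$ (recorded at the moment $j$ becomes fully connected), and are therefore unchanged by batching, closing the argument.
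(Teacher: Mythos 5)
Your proposal is correct and follows essentially the same route as the paper, which justifies the theorem by noting that APD merely batches PD's repeated events and therefore ``computes the same solution as the PD algorithm,'' then imports the approximation factors from Theorem~\ref{thm: multifactor} and the $O\left(n^{4}\right)$ bound from Lemma~\ref{lem:time-SPD}. Your write-up simply makes the batching-equivalence argument more explicit than the paper does.
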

In order to further achieve the factor of 1.52 in strongly polynomial
time that matches the best result \cite{Swamy08FTFL2.076} for the
uniform $FTFL$, it is necessary to apply the cost scaling and greedy
augmentation (GA) techniques \cite{Swamy08FTFL2.076,Guha03FTFL2.41}
for $FTFL$ to $FTRA$. However, like in \cite{kewen2011cocoon,yan2011approximation},
the difficulty encountered is the application of greedy augmentation
(GA) in polynomial time, since the naive way of treating an $FTRA/FTRA_{\infty}$
instance as an equivalent $FTFL$ instance and then directly applying
GA after cost scaling will result in weakly polynomial or pseudo-polynomial
time algorithms, depending on whether using the instance shrinking
technique in the previous section or not. 

\begin{algorithm}[H]
\caption{AGA: Acceleration of Greedy Augmentation}

\textbf{\textcolor{black}{Input}}\textcolor{black}{: }$\mathcal{F},\,\mathcal{C},\,\boldsymbol{f},\,\boldsymbol{c},\,\boldsymbol{r},\,\boldsymbol{R}$\textcolor{black}{,}\textbf{\textcolor{black}{{}
}}$\left(\boldsymbol{x},\,\boldsymbol{y}\right)$. \textbf{\textcolor{black}{Output:
}}$\left(\boldsymbol{x},\,\boldsymbol{y}\right)$.

\textbf{\textcolor{black}{Initialization}}\textcolor{black}{: }

\textbf{for $j\in\mathcal{C}$ }//optimize the total connection cost
first

\qquad{}\textbf{for $i\in\mathcal{F}$ }and $y_{i}>0$, in the increasing
order of distances w.r.t $j$

\qquad{}\qquad{}$x_{ij}\leftarrow\min\left(r_{j},\, y_{i}\right)$

\qquad{}\qquad{}$r_{j}\leftarrow r_{j}-x_{ij}$

\textbf{set} residual vector $\boldsymbol{\bar{y}}\leftarrow\boldsymbol{R}-\boldsymbol{y}$
//for detecting the case $y_{i}$ reaches $R_{i}$

\textbf{set} $CC\leftarrow\sum_{i\in\mathcal{F}}\sum_{j\in\mathcal{C}}c_{ij}x_{ij}$
as the current total connection cost

\textbf{invoke} calculateGain

\medskip{}

\textbf{while} $\max_{i\in\mathcal{F}}gain\left(i\right)>0$: //if
$ $$gain\left(i\right)>0$, then $\bar{y_{i}}>0$ from the calculateGain
function

\qquad{}\textbf{pick} $i^{*}=\arg\max_{i\in\mathcal{F}}\frac{gain\left(i\right)}{f_{i}}$

\qquad{}$\mathcal{S}_{i}\leftarrow\left\{ j\in\mathcal{\mathcal{\mathcal{C}}}\,|\,\max_{i'\in\mathcal{F}\&\&x_{i'j}>0}c_{i'j}-c_{i^{*}j}>0\right\} $

\qquad{}$\forall j\in\mathcal{S}_{i}:\, i_{j}^{*}\leftarrow\arg\max_{i'\in\mathcal{F}\&\&x_{i'j}>0}c_{i'j}$ 

\qquad{}$NS\leftarrow\min_{j\in\mathcal{S}_{i}}x_{i_{j}^{*}j}$,
$ToC\leftarrow\min\left(NS,\,\bar{y_{i}}\right)$

\qquad{}\textbf{\textcolor{black}{set}} $y_{i^{*}}\leftarrow y_{i^{*}}+ToC$

\qquad{}$\Delta\leftarrow0$ //$\Delta$ stores the total decrease
in the connection cost after all switches

\qquad{}\textbf{for $j\in\mathcal{S}_{i}$}

\qquad{}\qquad{}$\Delta\leftarrow\Delta+ToC\cdot\left(\max_{i'\in\mathcal{F}\&\&x_{i'j}>0}c_{i'j}-c_{i^{*}j}\right)$

\qquad{}\qquad{}\textbf{\textcolor{black}{set}} $x_{i_{j}^{*}j}\leftarrow x_{i_{j}^{*}j}-ToC$
and $x_{i^{*}j}\leftarrow x_{i^{*}j}+ToC$

\qquad{}\textbf{\textcolor{black}{set}} $CC\leftarrow CC-\Delta$

\qquad{}\textbf{update }$\boldsymbol{\bar{y}}$

\qquad{}\textbf{invoke} calculateGain

\medskip{}

\textbf{function} calculateGain

\qquad{}\textbf{for $i\in\mathcal{F}$}

\qquad{}\qquad{}$C_{i}\leftarrow CC$ //for each $i$, $C_{i}$
stores the total connection cost after connections 

\qquad{}\qquad{}are switched to $i$

\qquad{}\qquad{}$gain\left(i\right)\leftarrow0$

\qquad{}\qquad{}\textbf{if }$\bar{y_{i}}>0$

\qquad{}\qquad{}\qquad{}\textbf{for $j\in\mathcal{C}$}

\qquad{}\qquad{}\qquad{}\qquad{}\textbf{if }$\max_{i'\in\mathcal{F}\&\&x_{i'j}>0}c_{i'j}>c_{ij}$

\qquad{}\qquad{}\qquad{}\qquad{}\qquad{}$C_{i}\leftarrow C_{i}-\max_{i'\in\mathcal{F}\&\&x_{i'j}>0}c_{i'j}+c_{ij}$

\qquad{}\qquad{}\qquad{}$gain\left(i\right)\leftarrow CC-C_{i}-f_{i}$
\end{algorithm}

Nevertheless, if GA is applied with another similar acceleration heuristic,
it changes to the algorithm AGA\textit{ }(Algorithm 4) which runs
in strongly polynomial time. Before describing AGA, we take a brief
look at GA in \cite{Guha03FTFL2.41} for $FTFL$. It defines $ $$gain\left(i\right)$
of a facility $i$ to be the decrease in total cost (decrease in total
connection cost minus increase in facility cost of $i$) of the solution
after adding a facility $i$ to open and connecting clients to their
closest facilities. Note that once a set of open facilities are fi{}xed,
the total connection cost can be easily computed since every client
simply chooses these facilities in increasing order of distance. GA
then iteratively picks the facility with the largest gain ratio $\frac{gain\left(i\right)}{f_{i}}$
to open until there is no facility $i$ with $gain\left(i\right)>0$
left. On the other hand, AGA computes $gain\left(i\right)$ in the
same way as GA. The difference is in $FTRA$ there are \textcolor{black}{$\sum_{i\in\mathcal{F}}R_{i}$
facilities in total, therefore it is slow to consider one facility
at a time (in each iteration of AGA). Fortunately, there is also an
acceleration heuristic: because all facilities at a site $i$ has
$gain\left(i\right)$, once a facility at site $i_{m}$ with} $\max_{i}\frac{gain\left(i\right)}{f_{i}}$\textcolor{black}{{}
is selected to open, additional facilities at $i_{m}$ may also open
at the same time until either (1) this maximum} \textcolor{black}{$gain\left(i_{m}\right)$}
reduces due to insufficient decrease in the total connection cost;
or (2) $y_{i}$ reaches $R_{i}$. Moreover, (1) \textcolor{black}{happens
once a client has appeared to switch all of its most expensive connections
to $i_{m}$, which is similar to the switching case in the previous
algorithm APD.}

Formally in the AGA algorithm, $CC$ denotes the current total connection
cost and $C_{i}$ the connection cost after $i$ is opened and client
connections are switched. The calculateGain function computes $gain\left(i\right)$
and the while loop implements GA with the described heuristic. In
each loop iteration, for updating $CC$, $\Delta$ stores the total
decrease in the connection cost after client switching. Following
the heuristic, $ToC$ and $NS$ are defined similarly as in the APD
algorithm. Note that in the initialization phase of AGA, the total
connection cost is optimized first so that every client connects to
its closest facilities. This is to ensure that in every iteration
only the client connections with the largest costs need to be considered
in computing the best possible connection cost $C_{i}$.
\begin{lemma}
Algorithm AGA runs in $O\left(n^{4}\right)$ for $FTRA$.\label{lem:time-GA}\end{lemma}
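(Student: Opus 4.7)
The plan is to mimic the accounting used in Lemma \ref{lem:suh} and Lemma \ref{lem:time-SPD}: first bound the number of outer iterations of the \textbf{while} loop of AGA by a polynomial independent of $\boldsymbol{r}$ and $\boldsymbol{R}$, and then bound the work done inside each iteration (plus the work in calculateGain) by a polynomial in $n$. Multiplying the two gives the $O(n^4)$ claim, and the initialization phase will be subsumed.

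First I would bound the number of while-loop iterations. By construction, each iteration sets $ToC=\min(NS,\bar{y}_{i^{*}})$ and then either (a) raises $y_{i^{*}}$ all the way to $R_{i^{*}}$ (the $\bar{y}_{i^{*}}$ branch), or (b) drives $x_{i_j^{*}j}$ down to $0$ for some client $j\in\mathcal{S}_i$ (the $NS$ branch). Case (a) can occur at most $n_f$ times in total since each site can only be "filled up" once. For case (b), the argument is the same as in the proof of Lemma \ref{lem:suh}: every client $j$ has at most $n_f$ distinct current-connection sites, and each occurrence of case (b) permanently eliminates the currently most expensive such site from $j$'s connection set (AGA only ever switches to \emph{cheaper} facilities, so the expelled site cannot reappear). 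Hence case (b) occurs at most $n_c n_f$ times overall, giving an aggregate of $O(n_c n_f)$ iterations of the \textbf{while} loop.

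Next I would bound the cost of a single iteration. The initialization phase (assigning each client to its closest open facilities and computing the initial $CC$) costs $O(n_c n_f)$. Inside the main loop, picking $i^{*}$, constructing $\mathcal{S}_i$, computing $NS$, performing the switches that update $\boldsymbol{x}$, and updating $\boldsymbol{\bar y}$ and $CC$ together take $O(n_c+n_f)$ time, provided we maintain, for each $j$, a data structure for its current most expensive connection. The dominant cost per iteration is the call to calculateGain, whose double loop over $\mathcal{F}\times\mathcal{C}$ takes $O(n_c n_f)$ assuming the per-client maxima $\max_{i': x_{i'j}>0} c_{i'j}$ are kept up to date; refreshing them after the switches of the current iteration costs only $O(n_c n_f)$ amortized. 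Thus each iteration runs in $O(n_c n_f)$ time.

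Combining the $O(n_c n_f)$ iteration bound with the $O(n_c n_f)$ per-iteration cost yields a total runtime of $O(n_c^2 n_f^2) = O(n^4)$, as required. The main obstacle I expect is the second-case counting argument for the \textbf{while} loop: one must be careful that switching only ever reduces each client's set of "most expensive" source sites, so that no site can be expelled from a client's current-connection set and later re-added, which is exactly the property that keeps the iteration count polynomial and independent of $\boldsymbol{R}$; the remaining accounting for calculateGain and the per-iteration bookkeeping is routine once the right incremental data structures are in place.
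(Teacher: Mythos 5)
Your proposal is correct and follows essentially the same route as the paper's proof: bound the number of \textbf{while}-loop iterations by $O(n_c n_f)$ via the two cases (a site reaching $R_i$, at most $n_f$ times; a client exhausting its currently most expensive connection set, at most $n_c n_f$ times since switches only go to cheaper connections), and multiply by the $O(n_c n_f)$ per-iteration cost dominated by calculateGain. The paper's argument is the same accounting, stated by analogy with Lemma~\ref{lem:suh}.
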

\begin{proof}
Each iteration of the while loop runs in $O\left(n_{c}n_{f}\right)$
due to the calculateGain function. Now, we bound the the total number
of iterations. Similar to the acceleration heuristic analysis of the
algorithm APD (c.f. Lemma \ref{lem:suh}), in AGA once a site $i_{m}$
with the maximum gain is chosen, AGA opens the facilities at $i_{m}$
until either $R_{i_{m}}$ is reached, or \textcolor{black}{a client
has appeared to switch all of its most expensive connections, causing
reduced maximum gain. Further, there are at most $n_{f}$ chances
to reach }$R_{i_{m}}$\textcolor{black}{{} and $n_{c}n_{f}$ possible
sets of connections for all clients. Since clients also only switch
to cheaper connections, there are maximum $\left(n_{f}+n_{c}n_{f}\right)$
iterations.} The total time is therefore $O\left(n_{c}^{2}n_{f}^{2}\right)$.
\end{proof}
Now the important observation/trick for the analysis is that applying
AGA to an $FTRA/FTRA_{\infty}$ instance (with solution) obtains essentially
the \textit{same solution} (also the same cost) as treating this instance
as an equivalent $FTFL$ instance (by naively splitting sites) and
then directly applying GA. The difference is, with the acceleration
heuristic, AGA is able to arrive at this solution faster, in strongly
polynomial time. The observation then implies that AGA alone improves
the 3.16-approximation result of \cite{yan2011approximation} for
the general $FTRA_{\infty}$ to 2.408 in polynomial time using the
GA results \cite{Guha03FTFL2.41} for $FTFL$. Similarly, for the
uniform $FTRA$, AGA combined with cost scaling further improves on
the (1.11, 1.78)-approximation algorithm APD according to the results
of \cite{Swamy08FTFL2.076} for the uniform $FTFL$.
\begin{theorem}
The uniform $FTRA$ can be approximated with a factor of 1.52 in time
$O\left(n^{4}\right)$. \label{thm:1.52uniform}
\end{theorem}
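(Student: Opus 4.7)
The plan is to mimic the cost-scaling plus greedy-augmentation recipe of \cite{Swamy08FTFL2.076} for the uniform $FTFL$, but in a way that never pays for the blown-up $FTFL$ instance obtained by naively splitting sites. Concretely, I would scale the facility costs by a carefully chosen factor $\delta > 1$, run APD on the scaled instance to produce a feasible solution $(\boldsymbol{x},\boldsymbol{y})$, then invoke AGA on $(\boldsymbol{x},\boldsymbol{y})$ with the original (unscaled) facility costs to greedily augment the open facilities. The output of this pipeline is what I claim is a $1.52$-approximation.

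For the ratio analysis, I would first recall from Theorem \ref{thm: multifactor} that APD on the $\delta$-scaled instance is bi-factor $(\rho_f(\delta),\rho_c(\delta))$-approximation, where the family of bi-factor guarantees is parameterized by the scaling as in the standard argument of \cite{Swamy08FTFL2.076,Guha03FTFL2.41}. I would then invoke the key structural observation stated in the paragraph just before the theorem: running AGA on the $FTRA$ instance produces exactly the same augmented solution as running classical GA on the equivalent split $FTFL$ instance, only faster. Consequently, the analysis of GA's contribution to the cost from \cite{Guha03FTFL2.41,Swamy08FTFL2.076} applies verbatim, and optimizing $\delta$ over the combined bi-factor plus greedy-augmentation inequality yields the same $1.52$ ratio as in the uniform $FTFL$ setting. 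The uniform hypothesis enters only through Lemma \ref{lem:tri}, which is already in force for APD; AGA itself is metric-only and does not require uniformity.

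For the runtime, Lemma \ref{lem:time-SPD} gives APD in $O(n^4)$ and Lemma \ref{lem:time-GA} gives AGA in $O(n^4)$, so their sequential composition is $O(n^4)$. Cost scaling is a one-shot multiplication of the facility costs and contributes nothing to the asymptotic runtime.

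The main obstacle is the equivalence claim between AGA on $FTRA$ and classical GA on the split $FTFL$ instance: one must check that AGA's ``batch'' opening of multiple facilities at a single site $i_m$, with $ToC$ determined by $\min(NS,\bar y_{i_m})$, reproduces exactly what GA would do if it considered the $R_{i_m}$ clones at $i_m$ one at a time. The subtle point is that after opening one clone, the gain of the next clone at the same site can only change when a client has finished switching all of its most-expensive-connections to $i_m$ or when $y_{i_m}$ reaches $R_{i_m}$; AGA's termination condition for the batch is precisely this, so no clone is opened with a different marginal gain than GA would have computed. Once this equivalence is established, every quantitative guarantee of \cite{Swamy08FTFL2.076} transfers and the $1.52$ ratio follows.
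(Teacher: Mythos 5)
Your proposal matches the paper's own argument essentially step for step: cost scaling of the facility costs, APD with its $(1.11,\,1.78)$ bi-factor guarantee from Theorem \ref{thm: multifactor}, followed by AGA, with the ratio transferred from the uniform $FTFL$ analysis of \cite{Swamy08FTFL2.076} via the observation that AGA on the $FTRA$ instance yields the same solution as GA on the naively split $FTFL$ instance, and the $O(n^{4})$ runtime from Lemmas \ref{lem:time-SPD} and \ref{lem:time-GA}. Your discussion of why AGA's batch opening reproduces GA's clone-by-clone decisions is in fact more explicit than the paper's own justification of that equivalence.
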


\section{The Uniform $k$-$FTRA$}

Lastly, we consider the Constrained Fault-Tolerant $k$-Resource Allocation
($k$-$FTRA$) problem and show its uniform case achieves an approximation
ratio of 4. In this important variant of $FTRA$, there is an additional
constraint that at most $k$ facilities ($\max_{j\in\mathcal{C}}r_{j}\leq k$
and $k\leq\sum_{i\in\mathcal{F}}R_{i}$) across all sites can be opened
as resources. This problem has the following formulation.

\textit{
\begin{equation}
\begin{array}{llc}
\mathrm{minimize} & \sum_{i\in\mathcal{F}}f_{i}y_{i}+\sum_{i\in\mathcal{F}}\sum_{j\in\mathcal{C}}c_{ij}x_{ij}\\
\mathrm{subject\, to} & \forall j\in\mathcal{C}:\,\sum_{i\in\mathcal{F}}x_{ij}\ge r_{j}\\
 & \forall i\in\mathcal{F},j\in\mathcal{C}:\, y_{i}-x_{ij}\geq0\\
 & \sum_{i\in\mathcal{F}}y_{i}\leq k\\
 & \forall i\in\mathcal{F}:\, y_{i}\leq R_{i}\\
 & \forall i\in\mathcal{F},j\in\mathcal{C}:\, x_{ij},\, y_{i}\in\mathbb{Z}^{+}
\end{array}\label{eq:kftra-ip}
\end{equation}
}

Its LP-relaxation (primal LP) and dual LP are:

\textit{
\begin{equation}
\begin{array}{llc}
\mathrm{minimize} & \sum_{i\in\mathcal{F}}f_{i}y_{i}+\sum_{i\in\mathcal{F}}\sum_{j\in\mathcal{C}}c_{ij}x_{ij}\\
\mathrm{subject\, to} & \forall j\in\mathcal{C}:\,\sum_{i\in\mathcal{F}}x_{ij}\ge r_{j}\\
 & \forall i\in\mathcal{F},j\in\mathcal{C}:\, y_{i}-x_{ij}\geq0\\
 & \sum_{i\in\mathcal{F}}y_{i}\leq k\\
 & \forall i\in\mathcal{F}:\, y_{i}\leq R_{i}\\
 & \forall i\in\mathcal{F},j\in\mathcal{C}:\, x_{ij},\, y_{i}\geq0
\end{array}\label{eq:kftra-lp}
\end{equation}
}

\begin{equation}
\begin{array}{llc}
\textrm{maximize} & \sum_{j\in\mathcal{C}}r_{j}\alpha_{j}-\sum_{i\in\mathcal{F}}R_{i}z_{i}-k\theta\\
\mathrm{subject\, to} & \forall i\in\mathcal{F}:\,\sum_{j\in\mathcal{C}}\beta_{ij}\leq f_{i}+z_{i}+\theta\\
 & \forall i\in\mathcal{F},j\in\mathcal{C}:\,\alpha_{j}-\beta_{ij}\leq c_{ij}\\
 & \forall i\in\mathcal{F},j\in\mathcal{C}:\,\alpha_{j},\,\beta_{ij},\, z_{i},\,\theta\geq0
\end{array}\label{eq:kftra-dual}
\end{equation}

It is clear that $k$-$FTRA$ generalizes the well studied $k$-$UFL$
\cite{jain01approximation,Jain03dualfitting} and $k$-$FTFL$ \cite{Swamy08FTFL2.076}
problems. In the following, besides adapting the algorithms and analyses
therein, we also develop a greedy pairing (GP) procedure which in
polynomial time constructs paired and unpaired sets of facilities
from sites for randomly opening them afterwards.

\textbf{Algorithm Description. }The algorithm PK (Algorithm 5) consists
of three sequential procedures: Binary Search (BS), Greedy Pairing
(GP) and Randomized Rounding (RR). BS utilizes the previous $\left(1,\,2\right)$-approximation
algorithm APD (Algorithm 3) for $FTRA$ with the \textit{modified}
input facility cost $2\left(f_{i}+\theta\right)$, i.e. the cost is
increased by $\theta$ first and then scaled by $2$. As we will see
later in the analysis, this modification is necessary for two reasons:
1) the Lagrangian relaxation of $k$-$FTRA$ is $FTRA$; 2) the scaling
of the facility cost enables us to build on the approximation ratio
$\left(1,\,2\right)$ of $FTRA$ for getting the ratio of $k$-$FTRA$.
For simplicity, let APD$\left(\theta,\,\lambda\right)$ denote the
parameterized APD algorithm with the input facility cost perturbing
factor $\theta$ and scaling factor $\lambda$, so APD$\left(0,\,1\right)$
produces the same solution as APD. From LP \eqref{eq:ftra-ip} and
\eqref{eq:kftra-ip}, it is clear that APD produces an almost feasible
integral solution to $k$-$FTRA$ except that it has to guarantee
at most $k$ facilities to open ($\sum_{i\in\mathcal{F}}y_{i}\leq k$)
from all sites. This guarantee might not be even possible, but fortunately
we can use APD$\left(\theta,\,\lambda\right)$ to get two solutions
$\left(\boldsymbol{x_{s}},\,\boldsymbol{y_{s}}\right)$ and $\left(\boldsymbol{x_{l}},\,\boldsymbol{y_{l}}\right)$
$ $with the small one having $\sum_{i\in\mathcal{F}}y_{s,i}=k_{s}<k$
and the large one $\sum_{i\in\mathcal{F}}y_{l,i}=k_{l}>k$ facilities
open. A convex combination of these two solutions is able to give
a feasible \textit{fractional} solution $\left(\boldsymbol{x'},\,\boldsymbol{y'}\right)$
to $k$-$FTRA$ instead, i.e. $\left(\boldsymbol{x'},\,\boldsymbol{y'}\right)=a\left(\boldsymbol{x_{s}},\,\boldsymbol{y_{s}}\right)+b\left(\boldsymbol{x_{l}},\,\boldsymbol{y_{l}}\right)$
with $a+b=1$ and $ak_{s}+bk_{l}=k$. The solutions can be obtained
by binary searching two values ($\theta_{1}$ and $\theta_{2}$) of
$\theta$ over the interval $\left[0,\,\frac{n_{c}c_{max}}{\lambda}\right]$
where $c_{max}=\max_{i\in\mathcal{F},\, j\in\mathcal{C}}c_{ij}$ and
invoking APD$\left(\theta_{1},\,\lambda\right)$ and APD$\left(\theta_{2},\,\lambda\right)$.
This specific interval is chosen because as the value of $\theta$
increases, the number of open facilities from APD$\left(\theta,\,\lambda\right)$
will decrease. So if $\theta=\frac{n_{c}c_{max}}{\lambda}$, the algorithm
will only open the minimum number of $\max_{j\in\mathcal{C}}r_{j}$
facilities.%
\footnote{We noticed that the binary search interval $\left[0,\, nrc_{max}\right]$
(c.f. the third paragraph of Section 7 of \cite{Swamy08FTFL2.076})
for $k$-$FTFL$ can be reduced to $\left[0,\,\frac{n_{c}c_{max}}{2}\right]$,
because once the minimum number of $\max_{j\in\mathcal{C}}r_{j}$
facilities are opened and all facility costs are at least $n_{c}c_{max}$,
from the primal-dual algorithm, all clients are already fully-connected.%
} Moreover, as shown later, if $\theta_{1}$ and $\theta_{2}$ become
sufficiently close ($\epsilon=\frac{c_{\min}}{8N^{2}}$ where $c_{\min}$
is the smallest positive connection cost and $N=\sum_{i\in\mathcal{F}}R_{i}$
\footnote{The algorithm analysis also holds if $N=\sum_{j\in\mathcal{C}}r_{j}$.%
}) in BS, the approximation ratio of APD is almost preserved for building
a ratio for $k$-$FTRA$.

\begin{algorithm}
\caption{PK: Procedures for $k$-$FTRA$}

\textbf{\textcolor{black}{Input}}\textcolor{black}{: }A $k$-$FTRA$
instance $\left\langle \mathcal{F},\,\mathcal{C},\, f_{i},\, c_{ij},\, r_{j},\, R_{i},\, k\right\rangle $\textcolor{black}{.}\textbf{\textcolor{black}{{}
Output: }}$\left(\boldsymbol{x},\,\boldsymbol{y}\right)$

\textbf{\textcolor{black}{Initialization}}\textcolor{black}{:} $\boldsymbol{x}\leftarrow\boldsymbol{0},\,\boldsymbol{y}\leftarrow\boldsymbol{0}$

\textbf{\textcolor{black}{Procedure 1}}\textcolor{black}{: }Binary
Search (BS)

$\theta_{1}=0,\,\theta_{2}=\frac{n_{c}\max_{i\in\mathcal{F},\, j\in\mathcal{C}}c_{ij}}{2}$

\textbf{\textcolor{black}{while}}\textcolor{black}{{} $\theta_{2}-\theta_{1}>\epsilon$
}\textbf{\textcolor{black}{do}}\textcolor{black}{:}

\qquad{}$mid=\frac{\theta_{2}-\theta_{1}}{2}$

\qquad{}\textbf{invoke} APD with $\left\langle \mathcal{F},\,\mathcal{C},\,2\left(f_{i}+mid\right),\, c_{ij},\, r_{j},\, R_{i}\right\rangle $
and output $\left(\boldsymbol{x_{mid}},\,\boldsymbol{y_{mid}}\right)$

\qquad{}$k_{mid}=\sum_{i\in\mathcal{F}}y_{mid,i}$

\qquad{}\textbf{if }$k_{mid}<k$

\qquad{}\qquad{}\textbf{set $\theta_{2}=mid$}

\qquad{}\textbf{else if }$k_{mid}>k$

\qquad{}\qquad{}\textbf{set $\theta_{1}=mid$}

\qquad{}\textbf{else}

\qquad{}\qquad{}\textbf{return $mid$ }//if here reached, all procedures
afterwards can be safely ignored

\textbf{invoke} APD with $\left\langle \mathcal{F},\,\mathcal{C},\,2\left(f_{i}+\theta_{1}\right),\, c_{ij},\, r_{j},\, R_{i}\right\rangle $
and output $\left(\boldsymbol{x_{l}},\,\boldsymbol{y_{l}}\right)$

\textbf{invoke} APD with $\left\langle \mathcal{F},\,\mathcal{C},\,2\left(f_{i}+\theta_{2}\right),\, c_{ij},\, r_{j},\, R_{i}\right\rangle $
and output $\left(\boldsymbol{x_{s}},\,\boldsymbol{y_{s}}\right)$

$k_{l}=\sum_{i\in\mathcal{F}}y_{l,i}>k$ and $k_{s}=\sum_{i\in\mathcal{F}}y_{s,i}<k$

\medskip{}

\textbf{\textcolor{black}{Procedure 2}}\textcolor{black}{: Greedy
Pairing (GP)}

//vectors representing numbers of constructed paired and unpaired
facilities in $\boldsymbol{y_{l}}$

\textbf{set} $\boldsymbol{y_{p}}\leftarrow\boldsymbol{0}$, $\boldsymbol{\bar{y_{p}}}\leftarrow\boldsymbol{0}$ 

\textbf{for $i\in\mathcal{F}$}

\qquad{}if $y_{s,i}>0$ and $y_{l,i}>0$

\qquad{}\qquad{}$y_{p,i}\leftarrow y_{p,i}+\min\left(y_{s,i},\, y_{l,i}\right)$

\qquad{}\qquad{}//updating vectors and store them in $\boldsymbol{\hat{y_{s}}}$
and $\boldsymbol{\hat{y_{l}}}$ for the next pairing steps

\qquad{}\qquad{}$\hat{y_{s,i}}\leftarrow y_{s,i}-\min\left(y_{s,i},\, y_{l,i}\right)$

\qquad{}\qquad{}$\hat{y_{l,i}}\leftarrow y_{l,i}-\min\left(y_{s,i},\, y_{l,i}\right)$

\textbf{for $i\in\mathcal{F}$ }in arbitrary order

\qquad{}\textbf{if} $\hat{y_{s,i}}>0$

\qquad{}\qquad{}\textbf{for $i'\in\mathcal{F}\backslash i$ }in
the order of closest to $i$

\qquad{}\qquad{}\qquad{}\textbf{if} $\hat{y_{l,i'}}>0$

\qquad{}\qquad{}\qquad{}\qquad{}$y_{p,i'}\leftarrow y_{p,i'}+\min\left(\hat{y_{s,i}},\,\hat{y_{l,i'}}\right)$

\qquad{}\qquad{}\qquad{}\qquad{}$\hat{y_{s,i}}\leftarrow\hat{y_{s,i}}-\min\left(\hat{y_{s,i}},\,\hat{y_{l,i'}}\right)$

\qquad{}\qquad{}\qquad{}\qquad{}$\hat{y_{l,i'}}\leftarrow\hat{y_{l,i'}}-\min\left(\hat{y_{s,i}},\,\hat{y_{l,i'}}\right)$

$\boldsymbol{\bar{y_{p}}}\leftarrow\boldsymbol{y_{l}}-\boldsymbol{y_{p}}$
//at this time $\sum_{i\in\mathcal{F}}y_{p,i}=k_{s}$ and $\sum_{i\in\mathcal{F}}\bar{y_{p,i}}=k_{l}-k_{s}$

\medskip{}

\textbf{\textcolor{black}{Procedure }}\textcolor{black}{3: Randomized
Rounding (RR)}

\textbf{choose }probabilities $a=\frac{k_{l}-k}{k_{l}-k_{s}}$ and
$b=\frac{k-k_{s}}{k_{l}-k_{s}}$ so $ak_{s}+bk_{l}=k$ and $a+b=1$

\textbf{set} $\boldsymbol{y}\leftarrow\boldsymbol{y_{s}}$ with probability
$a$ and $\boldsymbol{y}\leftarrow\boldsymbol{y_{p}}$ with probability
$b=1-a$ //disjoint cases both open $k_{s}$ facilities

\textbf{select }a random subset of $k-k_{s}$ facilities to open from
$\boldsymbol{\bar{y_{p}}}$ and add these to $\boldsymbol{y}$ //at
this time $\sum_{i\in\mathcal{F}}y_{i}=k$ and each facility in $\boldsymbol{y_{l}}$
is opened with probability $b$

//connects each client $j$ to its closest $r_{j}$ opened facilities

\textbf{for $j\in\mathcal{C}$}

\qquad{}\textbf{for $i\in\mathcal{F}$ }in the order of closest to
$j$

\qquad{}\qquad{}$x_{ij}\leftarrow\min\left(r_{j},\, y_{i}\right)$

\qquad{}\qquad{}$r_{j}\leftarrow r_{j}-x_{ij}$
\end{algorithm}

However, for a feasible \textit{integral} solution $\left(\boldsymbol{x},\,\boldsymbol{y}\right)$
with $k$ open facilities, the algorithm instead relies on our\textit{
efficient} GP and RR procedures. These procedures extend the matching
and rounding procedures (cf. the paragraph before Lemma 7.1 in \cite{Swamy08FTFL2.076})
for $k$-$FTFL$ respectively. In particular, based on the solution
vectors $\boldsymbol{y_{s}}$ and $\boldsymbol{y_{l}}$ obtained from
BS, GP splits the vector $\boldsymbol{y_{l}}$ into $\boldsymbol{y_{p}}$
and $\boldsymbol{\bar{y_{p}}}$ s.t. $\boldsymbol{y_{l}}=\boldsymbol{y_{p}}+\boldsymbol{\bar{y_{p}}}$
and $\sum_{i\in\mathcal{F}}y_{s,i}=\sum_{i\in\mathcal{F}}y_{p,i}=k_{s}$.
Note that each of these integral vectors represents the facility opening
amounts of all sites. To be precise, GP greedily constructs the paired
($\boldsymbol{y_{p}}$) and unpaired facilities ($\boldsymbol{\bar{y_{p}}}$)
from $\boldsymbol{y_{l}}$ against the small solution $\boldsymbol{y_{s}}$.
It first pairs the facilities of the corresponding sites in $\boldsymbol{y_{s}}$
and $\boldsymbol{y_{l}}$ (both sites with open facilities) and records
the pairing result in $\boldsymbol{y_{p}}$. Next, for each left unpaired
site $ $$i$ in $\boldsymbol{\hat{y_{s}}}$ in arbitrary order, GP
exhaustively pairs the facilities at $i$ with the facilities of the
unpaired sites in $\boldsymbol{\hat{y_{l}}}$ in the order of closest
to $i$. In this pairing step, $\boldsymbol{y_{p}}$ is updated accordingly.
At the end, $\boldsymbol{\bar{y_{p}}}$ is simply set to be $\boldsymbol{y_{l}}-\boldsymbol{y_{p}}$.
To be more precise, we consider a simple example with $\boldsymbol{y_{s}}=\left[3,\,2,\,0,\,2\right]$
and $\boldsymbol{y_{l}}=\left[2,\,0,\,5,\,3\right]$ from BS before
running GP. After the first pairing step, $\boldsymbol{y_{p}}=\left[2,\,0,\,0,\,2\right]$,
$\hat{\boldsymbol{y_{s}}}=\left[1,\,2,\,0,\,0\right]$ and $\hat{\boldsymbol{y_{l}}}=\left[0,\,0,\,5,\,1\right]$.
Now for simplicity, we assume that the distance between sites $i$
and $j$ is $\left|i-j\right|$ where we follow the ascending order
of indices $j$'s in resolving the ties of the closest distances.
Therefore, after the second step, $\boldsymbol{y_{p}}=\left[2,\,0,\,3,\,2\right]$,
$\hat{\boldsymbol{y_{s}}}=\left[0,\,0,\,0,\,0\right]$ and $\hat{\boldsymbol{y_{l}}}=\bar{\boldsymbol{y_{p}}}=\left[0,\,0,\,2,\,1\right]$,
since both the unpaired $\hat{y_{s,1}}$ and $\hat{y_{s,2}}$ (1-based
index) are paired to the closest unpaired $\hat{y_{l,3}}$.

From the $\boldsymbol{y_{s}}$, $\boldsymbol{y_{p}}$ and $\boldsymbol{\bar{y_{p}}}$
obtained, the last procedure RR then randomly opens $k$ facilities
in a way that the expected facility opening cost of $\boldsymbol{y}$
is the \textit{same} as the facility opening cost of the convex combination
solution $\boldsymbol{y'}$. In addition, RR connects each client
$j$ to its closest $r_{j}$ open facilities in $\boldsymbol{y}$,
ensuring the expected connection cost of $\boldsymbol{x}$ is\textit{
bounded} by the connection cost of $\boldsymbol{x'}$.

\textbf{Algorithm Analysis. }The basic idea of the analysis is to
first bound $cost\left(\boldsymbol{x'},\,\boldsymbol{y'}\right)$
by $cost\left(\boldsymbol{\alpha'},\,\boldsymbol{\beta'},\,\boldsymbol{z'},\theta'\right)$
where $\left(\boldsymbol{\alpha'},\,\boldsymbol{\beta'},\,\boldsymbol{z'},\theta'\right)$
is a constructed feasible dual solution to LP \eqref{eq:kftra-dual}.
Then we bound the expected total cost $cost\left(\boldsymbol{x},\,\boldsymbol{y}\right)$
with $cost\left(\boldsymbol{x'},\,\boldsymbol{y'}\right)$ to further
establish the approximation ratio $\rho$ s.t. $cost\left(\boldsymbol{x},\,\boldsymbol{y}\right)\leq\rho cost\left(\boldsymbol{\alpha'},\,\boldsymbol{\beta'},\,\boldsymbol{z'},\theta'\right)$.
Finally, by the weak duality theorem, $cost\left(\boldsymbol{x},\,\boldsymbol{y}\right)\leq\rho cost\left(\boldsymbol{x_{k}^{*}},\,\boldsymbol{y_{k}^{*}}\right)$
where $\left(\boldsymbol{x_{k}^{*}},\,\boldsymbol{y_{k}^{*}}\right)$
is the optimal fractional solution to $k$-$FTRA$ (displayed as LP
\eqref{eq:kftra-lp}).

In the first step, we focus on analyzing the BS procedure to bound
$cost\left(\boldsymbol{x'},\,\boldsymbol{y'}\right)$ by $cost\left(\boldsymbol{\alpha'},\,\boldsymbol{\beta'},\,\boldsymbol{z'},\theta'\right)$.
Suppose APD$\left(\theta,\,2\right)$ produces the primal solution
$\left(\boldsymbol{\tilde{x}},\,\boldsymbol{\tilde{y}}\right)$ with
$\tilde{k}$ open facilities. We let the cost of $\left(\boldsymbol{\tilde{x}},\,\boldsymbol{\tilde{y}}\right)$
w.r.t. the \textit{original} input instance be $cost\left(\boldsymbol{\tilde{x}},\,\boldsymbol{\tilde{y}}\right)=\tilde{F}+\tilde{C}$,
where in the separate costs $\left(\tilde{F},\,\tilde{C}\right)$,
$\tilde{F}=\sum_{i\in\mathcal{F}}f_{i}\tilde{y_{i}}$ is the total
facility cost and $\tilde{C}=\sum_{i\in\mathcal{F}}\sum_{j\in\mathcal{C}}c_{ij}\tilde{x_{ij}}$
is the connection cost. Similarly, w.r.t. the modified instance, the
cost is $cost'\left(\boldsymbol{\tilde{x}},\,\boldsymbol{\tilde{y}}\right)=2\left(\tilde{F}+\tilde{k}\theta\right)+\tilde{C}$.
From the analysis (cf. the paragraph before Theorem \ref{thm: multifactor})
of the factor revealing program of the PD algorithm, for APD$\left(\theta,\,2\right)$,
we get $\forall i\in\mathcal{F}:\,\sum_{j\in\mathcal{C}'}\left(u_{ij}-2c_{ij}\right)\leq2\left(f_{i}+\theta\right)$
where $\mathcal{C}'\subseteq\mathcal{C}$, i.e.,

\begin{equation}
\sum_{j\in\mathcal{C}}\tilde{\alpha_{j}}-2\sum_{j\in\mathcal{C}}c_{ij}-\tilde{z_{i}}-2\left(f_{i}+\theta\right)\leq0
\end{equation}

where $\left(\boldsymbol{\tilde{\alpha}},\,\boldsymbol{\tilde{\beta}},\,\boldsymbol{\tilde{z}}\right)$
is the corresponding constructed dual values of $\left(\boldsymbol{\tilde{x}},\,\boldsymbol{\tilde{y}}\right)$
from the PD algorithm. Further, from Lemma \ref{lem: pdb}, we have
a bound for $cost'\left(\boldsymbol{\tilde{x}},\,\boldsymbol{\tilde{y}}\right)$,
i.e.,

$ $
\begin{equation}
2\left(\tilde{F}+\tilde{k}\theta\right)+\tilde{C}\leq\sum_{j\in\mathcal{C}}r_{j}\tilde{\alpha_{j}}-\sum_{i\in\mathcal{F}}R_{i}\tilde{z_{i}}.
\end{equation}
Note that the dual solution $\left(\boldsymbol{\tilde{\alpha}},\,\boldsymbol{\tilde{\beta}},\,\boldsymbol{\tilde{z}}\right)$
is used only in the analysis. Also, because APD only speeds up PD
by combining its events, we can use the dual solution produced from
PD for analyzing APD. If we set $\left(\boldsymbol{\tilde{\alpha'}},\,\boldsymbol{\tilde{z'}},\,\tilde{\theta'}\right)=\left(\frac{\boldsymbol{\tilde{\alpha}}}{2},\,\frac{\boldsymbol{\tilde{z}}}{2},\,\theta\right)$
and $\forall i\in\mathcal{F},\, j\in\mathcal{C}:\,\tilde{\beta_{ij}'}=\tilde{\alpha_{j}'}-c_{ij}$,
the inequality (7) then becomes $\sum_{j\in\mathcal{C}}\tilde{\beta_{ij}'}\leq f_{i}+\tilde{z_{i}'}+\tilde{\theta'}$,
implying $\left(\boldsymbol{\tilde{\alpha'}},\,\boldsymbol{\tilde{\beta'}},\,\boldsymbol{\tilde{z'}},\tilde{\theta'}\right)$
is a feasible dual solution to LP \eqref{eq:kftra-dual}. Furthermore,
(8) becomes

\begin{equation}
2\tilde{F}+\tilde{C}\leq2\left(\sum_{j\in\mathcal{C}}r_{j}\tilde{\alpha_{j}'}-\sum_{i\in\mathcal{F}}R_{i}\tilde{z_{i}'}-\tilde{k}\tilde{\theta'}\right).
\end{equation}
The analysis here reveals the Lagrangian relation between $k$-$FTRA$
and $FTRA$ from the dual perspective, whereas the Lagrangian relaxation
framework (cf. Section 3.6 of \cite{jain01approximation}) starts
from the primal. Therefore, if $\tilde{k}=k$, $\left(\boldsymbol{\tilde{x}},\,\boldsymbol{\tilde{y}}\right)$
is 2-approximation from the inequality (9), the bound $cost\left(\boldsymbol{\tilde{x}},\,\boldsymbol{\tilde{y}}\right)<2\tilde{F}+\tilde{C}$
and the feasibilities of $\left(\boldsymbol{\tilde{x}},\,\boldsymbol{\tilde{y}}\right)$
and $\left(\boldsymbol{\tilde{\alpha'}},\,\boldsymbol{\tilde{\beta'}},\,\boldsymbol{\tilde{z'}},\tilde{\theta'}\right)$.
However, as mentioned before, we may never encounter the situation
$\tilde{k}=k$. Instead, the BS procedure finds $ $$\theta_{1}$
and $\theta_{2}$ until $\theta_{2}-\theta_{1}\leq\epsilon=\frac{c_{min}}{8N^{2}}$.
It then runs APD$\left(\theta_{1},\,2\right)$ to obtain the solution
$\left(\boldsymbol{x_{l}},\,\boldsymbol{y_{l}}\right)$ with $k_{l}>k$
and the cost $\left(F_{l},\, C_{l}\right)$ w.r.t. the original instance;
and APD$\left(\theta_{2},\,2\right)$ to get the solution $\left(\boldsymbol{x_{s}},\,\boldsymbol{y_{s}}\right)$
with $k_{s}<k$ and $\left(F_{s},\, C_{s}\right)$. Hence, from (9)
we have

\begin{equation}
2F_{l}+C_{l}\leq2\left(\sum_{j\in\mathcal{C}}r_{j}\alpha_{l,j}'-\sum_{i\in\mathcal{F}}R_{i}z_{l,i}'-k_{l}\theta_{1}\right)
\end{equation}

and

\begin{equation}
2F_{s}+C_{s}\leq2\left(\sum_{j\in\mathcal{C}}r_{j}\alpha_{s,j}'-\sum_{i\in\mathcal{F}}R_{i}z_{s,i}'-k_{s}\theta_{2}\right)
\end{equation}

where $\left(\boldsymbol{\alpha_{l}'},\,\boldsymbol{\beta_{l}'},\,\boldsymbol{z_{l}'}\right)$
and $\left(\boldsymbol{\alpha_{s}'},\,\boldsymbol{\beta_{s}'},\,\boldsymbol{z_{s}'}\right)$
are constructed as $\left(\boldsymbol{\tilde{\alpha'}},\,\boldsymbol{\tilde{\beta'}},\,\boldsymbol{\tilde{z'}}\right)$
to be feasible duals.

Now we are ready to bound $cost\left(\boldsymbol{x'},\,\boldsymbol{y'}\right)$
by $cost\left(\boldsymbol{\alpha'},\,\boldsymbol{\beta'},\,\boldsymbol{z'},\theta'\right)$.
The proof of the following lemma builds on the idea of Lemma 9 in
\cite{jain01approximation} for $k$-median.
\begin{lemma}
\textup{$cost\left(\boldsymbol{x'},\,\boldsymbol{y'}\right)<\left(2+\frac{1}{N}\right)F'+C'\leq\left(2+\frac{1}{N}\right)cost\left(\boldsymbol{\alpha'},\,\boldsymbol{\beta'},\,\boldsymbol{z'},\theta'\right)$,}
where $N=\sum_{i\in\mathcal{F}}R_{i}$,\textup{ $\left(\boldsymbol{x'},\,\boldsymbol{y'}\right)=a\left(\boldsymbol{x_{s}},\,\boldsymbol{y_{s}}\right)+b\left(\boldsymbol{x_{l}},\,\boldsymbol{y_{l}}\right)$,
$a+b=1,\, k=ak_{s}+bk_{l},$ $F'=\sum_{i\in\mathcal{F}}f_{i}y_{i}'=aF_{s}+bF_{l}$,
$C'=\sum_{i\in\mathcal{F}}\sum_{j\in\mathcal{C}}c_{ij}x_{ij}'=aC_{s}+bC_{l}$},
$\boldsymbol{\alpha'}=a\boldsymbol{\alpha_{s}'}+b\boldsymbol{\alpha_{l}'}$,
$\boldsymbol{\beta'}=a\boldsymbol{\beta_{s}'}+b\boldsymbol{\beta_{l}'}$,
$\boldsymbol{z'}=a\boldsymbol{z_{s}'}+b\boldsymbol{z_{l}'}$ and $\theta'=\theta_{2}$.
Moreover, $\left(\boldsymbol{\alpha'},\,\boldsymbol{\beta'},\,\boldsymbol{z'},\theta'\right)$
is a feasible dual solution to the $k$-$FTRA$ problem.\label{lem:1bound}\end{lemma}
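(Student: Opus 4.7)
The plan is to verify dual feasibility of $(\boldsymbol{\alpha'},\boldsymbol{\beta'},\boldsymbol{z'},\theta')$ first and then establish the two inequalities, leveraging the per-algorithm bounds (10) and (11) from APD$(\theta_1,2)$ and APD$(\theta_2,2)$ together with the binary-search guarantee $\theta_2 - \theta_1 \leq \epsilon = c_{\min}/(8N^2)$.

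For feasibility, I would observe that both $(\boldsymbol{\alpha_s'},\boldsymbol{\beta_s'},\boldsymbol{z_s'},\theta_2)$ and $(\boldsymbol{\alpha_l'},\boldsymbol{\beta_l'},\boldsymbol{z_l'},\theta_1)$ are already feasible for LP \eqref{eq:kftra-dual} by construction (this is exactly what (9) witnesses with the generic $\theta$). Non-negativity and the constraint $\alpha_j - \beta_{ij} \leq c_{ij}$ are preserved under any convex combination. The only non-trivial constraint is $\sum_j \beta_{ij} \leq f_i + z_i + \theta$: averaging gives $\sum_j \beta_{ij}' \leq f_i + z_i' + (a\theta_2 + b\theta_1)$, and choosing $\theta' = \theta_2 \geq a\theta_2 + b\theta_1$ restores feasibility, which is precisely why $\theta_2$ is picked rather than the symmetric average.

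The first inequality is essentially immediate since $cost(\boldsymbol{x'},\boldsymbol{y'}) = F' + C'$ and $F' > 0$ (some facility with positive cost must open because $\sum_i y_i' = ak_s + bk_l = k \geq 1$). For the main inequality I will take the weighted sum $a \cdot (11) + b \cdot (10)$, which gives
\[
2F' + C' \;\leq\; 2\Bigl(\sum_{j} r_j \alpha_j' - \sum_{i} R_i z_i' - (ak_s\theta_2 + bk_l\theta_1)\Bigr).
\]
Using $k = ak_s + bk_l$ and $\theta' = \theta_2$, the identity $ak_s\theta_2 + bk_l\theta_1 = k\theta' - bk_l(\theta_2-\theta_1)$ converts this into
\[
2F' + C' \;\leq\; 2\,cost(\boldsymbol{\alpha'},\boldsymbol{\beta'},\boldsymbol{z'},\theta') + 2 b k_l (\theta_2 - \theta_1).
\]

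The main technical obstacle is promoting the leading factor $2$ to $2+1/N$ while absorbing the residual slack $2 b k_l (\theta_2 - \theta_1)$. Bounding $b k_l \leq k_l \leq N$ and invoking the binary-search termination gives slack at most $2N\epsilon = c_{\min}/(4N)$. The hard part is arguing that $cost(\boldsymbol{\alpha'},\boldsymbol{\beta'},\boldsymbol{z'},\theta') - F' \geq c_{\min}/4$ so that the extra $1/N$ factor is wide enough to absorb this residual: since each required positive-cost connection forces the feasible dual to pay at least $c_{\min}$, only a controlled portion of this payment cancels against $F'$. Once that gap is in hand, the rearrangement $(2+1/N)\,cost - \bigl[(2+1/N)F' + C'\bigr] = (cost - F')/N - (2F' + C' - 2\,cost) \;\geq\; c_{\min}/(4N) - 2bk_l(\theta_2-\theta_1) \;\geq\; 0$ yields the second inequality, completing the proof.
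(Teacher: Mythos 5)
Your feasibility argument and the overall strategy (combine the two APD bounds, then use the binary-search closeness $\theta_2-\theta_1\leq\epsilon$ to pay for replacing $\theta_1$ by $\theta_2$ out of the extra $\frac{1}{N}$ slack) match the paper in spirit. But the step you yourself flag as ``the hard part'' is a genuine gap, and the specific bound you say you need, $cost\left(\boldsymbol{\alpha'},\,\boldsymbol{\beta'},\,\boldsymbol{z'},\theta'\right)-F'\geq\frac{c_{\min}}{4}$, is false in general. The problem is that you discarded the weight $b$ too early: after bounding $bk_{l}\leq N$ your residual becomes $\frac{c_{\min}}{4N}$, so you need a gap of $\frac{c_{\min}}{4}$ that does not depend on $b$. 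However, from your own combined inequality the only handle on that gap is $cost-F'\geq\frac{C'}{2}-bk_{l}\left(\theta_{2}-\theta_{1}\right)$ with $C'=aC_{s}+bC_{l}$; if $C_{s}=0$ and $b$ is of order $\frac{1}{N}$ (which happens, e.g., when $k_{s}=k-1$ and $k_{l}$ is large), then $C'$ can be as small as roughly $\frac{c_{\min}}{N}$, nowhere near $\frac{c_{\min}}{2}$. Your one-line justification (``each positive-cost connection forces the dual to pay at least $c_{\min}$'') does not produce a $b$-free lower bound on $cost-F'$.

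The repair is to keep the factor $b$ on both sides so that it cancels: the residual is $2bk_{l}\left(\theta_{2}-\theta_{1}\right)\leq\frac{bk_{l}c_{\min}}{4N^{2}}$ and $C'\geq bC_{l}\geq bc_{\min}$, which together do close the gap. This is essentially what the paper does, but organized differently: it performs the $\theta_{1}\rightarrow\theta_{2}$ substitution on the large solution's inequality \emph{before} forming the convex combination, absorbing $2k_{l}\left(\theta_{2}-\theta_{1}\right)\leq\frac{C_{l}k_{l}}{4N^{2}}$ directly into $C_{l}$ (using $C_{l}\geq c_{\min}$, which is clean because no dilution by $b$ has yet occurred) to obtain $\left(2+\frac{1}{N}\right)F_{l}+C_{l}\leq\left(2+\frac{1}{N}\right)\left(\sum_{j}r_{j}\alpha_{l,j}'-\sum_{i}R_{i}z_{l,i}'-k_{l}\theta_{2}\right)$; the small solution's inequality upgrades to the factor $2+\frac{1}{N}$ trivially since $F_{s}\leq\Delta_{s}$, and only then are the two combined with weights $b$ and $a$. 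Either order works, but as written your proposal asserts an intermediate inequality that cannot be established.
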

\begin{proof}
From the constructions of $\left(\boldsymbol{\alpha_{l}'},\,\boldsymbol{\beta_{l}'},\,\boldsymbol{z_{l}'}\right)$
and $\left(\boldsymbol{\alpha_{s}'},\,\boldsymbol{\beta_{s}'},\,\boldsymbol{z_{s}'}\right)$,
we get $\forall i\in\mathcal{F}:\,\sum_{j\in\mathcal{C}}\beta_{s,ij}'\leq f_{i}+z_{s,i}'+\theta_{2}$
and $\sum_{j\in\mathcal{C}}\beta_{l,ij}'\leq f_{i}+z_{l,i}'+\theta_{1}$,
then $\sum_{j\in\mathcal{C}}\beta_{ij}'\leq f_{i}+z_{i}'+a\theta_{2}+b\theta_{1}\leq f_{i}+z_{i}'+\theta_{2}$
after multiplying the first inequality by $a$, the second by $b$
and adding them together. In addition, with the setting $\forall i\in\mathcal{F},\, j\in\mathcal{C}:\,\beta_{ij}'=\alpha_{j}'-c_{ij}$,
we get the feasibility of $\left(\boldsymbol{\alpha'},\,\boldsymbol{\beta'},\,\boldsymbol{z'},\theta'\right)$
to LP \eqref{eq:kftra-dual}. Next, we aim to derive the following
bound 
\begin{equation}
\left(2+\frac{1}{N}\right)F_{l}+C_{l}\leq\left(2+\frac{1}{N}\right)\left(\sum_{j\in\mathcal{C}}r_{j}\alpha_{l,j}'-\sum_{i\in\mathcal{F}}R_{i}z_{l,i}'-k_{l}\theta_{2}\right)
\end{equation}

from the inequality (10). For now, suppose this bound holds, from
(11), we have

\begin{equation}
\left(2+\frac{1}{N}\right)F_{s}+C_{s}\leq\left(2+\frac{1}{N}\right)\left(\sum_{j\in\mathcal{C}}r_{j}\alpha_{s,j}'-\sum_{i\in\mathcal{F}}R_{i}z_{s,i}'-k_{s}\theta_{2}\right).
\end{equation}
After multiplying (12) by $b$, (13) by $a$ and adding them together,
we get 

\begin{equation}
\left(2+\frac{1}{N}\right)F'+C'\leq\left(2+\frac{1}{N}\right)\left(\sum_{j\in\mathcal{C}}r_{j}\alpha_{j}'-\sum_{i\in\mathcal{F}}R_{i}z_{i}'-k\theta_{2}\right).
\end{equation}
This then yields the lemma together with the feasibility of $\left(\boldsymbol{\alpha'},\,\boldsymbol{\beta'},\,\boldsymbol{z'},\theta'\right)$
and $cost\left(\boldsymbol{x'},\,\boldsymbol{y'}\right)=F'+C'$. The
last thing left is to verify in the following that (12) indeed holds
from the inequality (10), the termination condition of the algorithm
$\theta_{2}-\theta_{1}\leq\epsilon=\frac{c_{min}}{8N^{2}}$ and the
fact that $C_{l}=\sum_{i\in\mathcal{F}}\sum_{j\in\mathcal{C}}c_{ij}x_{l,ij}\geq c_{min}$.

\begin{eqnarray*}
 & C_{l} & \leq2\left(\sum_{j\in\mathcal{C}}r_{j}\alpha_{l,j}'-\sum_{i\in\mathcal{F}}R_{i}z_{l,i}'-k_{l}\theta_{1}-F_{l}\right)\\
 &  & \leq2\left(\sum_{j\in\mathcal{C}}r_{j}\alpha_{l,j}'-\sum_{i\in\mathcal{F}}R_{i}z_{l,i}'-k_{l}\theta_{2}-F_{l}\right)+\frac{c_{min}k_{l}}{4N^{2}}\\
 &  & \leq2\left(\sum_{j\in\mathcal{C}}r_{j}\alpha_{l,j}'-\sum_{i\in\mathcal{F}}R_{i}z_{l,i}'-k_{l}\theta_{2}-F_{l}\right)+\frac{C_{l}k_{l}}{4N^{2}}.
\end{eqnarray*}

For simplicity, let $\triangle=\left(\sum_{j\in\mathcal{C}}r_{j}\alpha_{l,j}'-\sum_{i\in\mathcal{F}}R_{i}z_{l,i}'-k_{l}\theta_{2}-F_{l}\right)$.
Because $k_{l}\leq N$ and $N\geq1$, we get $C_{l}\leq\frac{2}{1-\frac{k_{l}}{4N^{2}}}\triangle\leq\left(2+\frac{1}{N}\right)\triangle$.
Hence, the inequality (12) is verified.
\end{proof}
For runtime, our BS procedure totally makes $O\left(L+\log N+\log n\right)$
probes ($L$ is the number of bits of the input costs) over the interval
$\left[0,\,\frac{n_{c}c_{max}}{2}\right]$ until the interval becomes
the size of $\frac{c_{min}}{8N^{2}}$. Moreover, each probe takes
$ $$O\left(n^{4}\right)$ to invoke the APD algorithm, so the total
time is $O\left(n^{4}\left(L+\log N+\log n\right)\right)$ which dominates
the overall runtime of the algorithm PK.

In the next step, we focus on analyzing the GP and RR procedures to
bound $cost\left(\boldsymbol{x},\,\boldsymbol{y}\right)=F+C$ with
$cost\left(\boldsymbol{x'},\,\boldsymbol{y'}\right)=F'+C'$ where
we denote $F$ and $C$ as the expected total facility and connection
costs respectively from the randomized procedure RR. This procedure
can also be derandomized using the method of conditional expectation
as in \cite{jain01approximation} for the $k$-median problem. In
the following, we bound $F$ with $F'$ and $C$ with $C'$ separately.
With probability 1, RR opens exactly $k$ facilities. Specifically,
it randomly opens each facility in $\boldsymbol{y_{p}}$ with probability
$b$, and each facility in $\boldsymbol{\bar{y_{p}}}$ with probability
$\frac{k-k_{s}}{k_{l}-k_{s}}$ which is also $b$. Since GP properly
splits the vector $\boldsymbol{y_{l}}$ into $\boldsymbol{y_{p}}$
and $\boldsymbol{\bar{y_{p}}}$ s.t. $\boldsymbol{y_{l}}=\boldsymbol{y_{p}}+\boldsymbol{\bar{y_{p}}}$,
we can conclude that each facility in $\boldsymbol{y_{l}}$ is opened
with probability $b$. In addition, RR randomly opens each facility
in $\boldsymbol{y_{s}}$ with probability $a$, therefore the total
expected opening cost is $aF_{s}+bF_{l}$ which is $F'$.
\begin{lemma}
The total expected facility opening cost $F$ satisfies $F=F'$.
\end{lemma}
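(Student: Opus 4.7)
The plan is to verify the identity $F = F'$ by directly computing, for each site $i \in \mathcal{F}$, the expectation of the integer opening count $y_i$ produced by procedure RR, and then applying linearity of expectation to sum against the facility costs $f_i$.

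First, I would decompose the RR procedure into two independent random choices: (i) the initial assignment of $\boldsymbol{y}$ to either $\boldsymbol{y_s}$ (with probability $a$) or $\boldsymbol{y_p}$ (with probability $b$); and (ii) the uniformly random choice of a size-$(k-k_s)$ subset of the facilities enumerated by $\boldsymbol{\bar{y_p}}$, which is then added to $\boldsymbol{y}$. Step (i) opens each of the $y_{s,i}$ facilities at site $i$ with probability $a$, and each of the $y_{p,i}$ facilities with probability $b$. For step (ii), since $\sum_{i\in\mathcal{F}}\bar{y}_{p,i}=k_l-k_s$ by the terminal line of GP, every individual facility counted by $\boldsymbol{\bar{y_p}}$ is included with probability $\tfrac{k-k_s}{k_l-k_s}$, which equals $b$ by the choice of probabilities in RR.

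Next I would invoke the key structural property established by procedure GP, namely $\boldsymbol{y_l}=\boldsymbol{y_p}+\boldsymbol{\bar{y_p}}$ (coordinate-wise). Combining this with the per-site expectations above gives, by linearity of expectation,
\begin{equation*}
E[y_i] \;=\; a\,y_{s,i} + b\,y_{p,i} + b\,\bar{y}_{p,i} \;=\; a\,y_{s,i} + b\,y_{l,i} \;=\; y_i',
\end{equation*}
since $\boldsymbol{y'}=a\boldsymbol{y_s}+b\boldsymbol{y_l}$ by the definition of the convex combination in Lemma \ref{lem:1bound}. Multiplying by $f_i$ and summing over $i\in\mathcal{F}$ then yields $F = \sum_{i\in\mathcal{F}} f_i E[y_i] = \sum_{i\in\mathcal{F}} f_i y_i' = F'$, as required.

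I do not anticipate any serious obstacle here: the hardest step is merely confirming that the random subset chosen in step (ii) is selected uniformly at the facility level (not the site level), so that each individual facility in $\boldsymbol{\bar{y_p}}$ has marginal inclusion probability exactly $b$. Once this and the identity $\boldsymbol{y_l}=\boldsymbol{y_p}+\boldsymbol{\bar{y_p}}$ from GP are in place, the proof is a one-line application of linearity of expectation.
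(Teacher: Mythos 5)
Your proposal is correct and follows essentially the same argument as the paper: each facility counted by $\boldsymbol{y_s}$ is opened with probability $a$, each facility counted by $\boldsymbol{y_p}$ with probability $b$, each facility counted by $\boldsymbol{\bar{y_p}}$ with marginal probability $\tfrac{k-k_s}{k_l-k_s}=b$, and the GP identity $\boldsymbol{y_l}=\boldsymbol{y_p}+\boldsymbol{\bar{y_p}}$ plus linearity of expectation gives $F=aF_s+bF_l=F'$. Your write-up is merely a more explicit, per-site version of the paper's reasoning.
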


Now we bound $C$ with $C'$. Suppose two $FTRA$ instances with the
solutions $\left(\boldsymbol{x_{s}},\,\boldsymbol{y_{s}}\right)$
and $\left(\boldsymbol{x_{l}},\,\boldsymbol{y_{l}}\right)$ are produced
from the BS procedure. Afterwards, for getting a feasible solution
to $k$-$FTRA$ from these solutions, instead we consider a naive
pseudo-polynomial time algorithm. The algorithm first treats the $FTRA$
instances with the solutions $\left(\boldsymbol{x_{s}},\,\boldsymbol{y_{s}}\right)$
and $\left(\boldsymbol{x_{l}},\,\boldsymbol{y_{l}}\right)$ as equivalent
$FTFL$ instances (by naively splitting sites and keeping the clients
unchanged) with the transformed solutions $\left(\check{\boldsymbol{x_{s}}},\,\check{\boldsymbol{y_{s}}}\right)$
and $\left(\check{\boldsymbol{x_{l}}},\,\check{\boldsymbol{y_{l}}}\right)$%
\footnote{W.l.o.g., the solutions can be easily transformed between $FTRA$
and $FTFL$ as shown in Theorem 7 of \cite{kewen2011cocoon}. %
} respectively. Then, it uses the matching and rounding procedures
\cite{Swamy08FTFL2.076} on $\left(\check{\boldsymbol{x_{s}}},\,\check{\boldsymbol{y_{s}}}\right)$
and $\left(\check{\boldsymbol{x_{l}}},\,\check{\boldsymbol{y_{l}}}\right)$
to get a feasible solution $\left(\check{\boldsymbol{x}},\,\check{\boldsymbol{y}}\right)$
to $k$-$FTFL$. Finally, the solution $\left(\check{\boldsymbol{x}},\,\check{\boldsymbol{y}}\right)$
can be easily transformed to a feasible solution $\left(\boldsymbol{x},\,\boldsymbol{y}\right)$
to $k$-$FTRA$. Now, the important observation is that directly applying
GP and RR to these $FTRA$ instances with the solutions $\left(\boldsymbol{x_{s}},\,\boldsymbol{y_{s}}\right)$
and $\left(\boldsymbol{x_{l}},\,\boldsymbol{y_{l}}\right)$ from BS
obtains essentially the \textit{same solution} $\left(\boldsymbol{x},\,\boldsymbol{y}\right)$
(also the same cost) to $k$-$FTRA$ as the naive algorithm does.
It is mainly because for the $FTRA$ instances of size $O\left(n\right)$,
our designed GP procedure pairs the integer vectors in polynomial
time. This is the acceleration of the matching procedure therein \cite{Swamy08FTFL2.076}
applied to the equivalent $FTFL$ instances of size $ $$O\left(\sum_{i\in\mathcal{F}}R_{i}\right)$.
Therefore, \textit{only in the analysis}, we can consider the naive
algorithm instead to get the following bound for $C$. This analysis
trick is similar to the trick used for analyzing the algorithm AGA
(cf. the paragraph before Theorem \ref{thm:1.52uniform}).
\begin{lemma}
The total expected connection cost $C$ satisfies $C\leq\left(1+\max\left(a,b\right)\right)C'$.\end{lemma}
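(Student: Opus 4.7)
The plan is to exploit the observation made immediately before the lemma statement: applying GP and RR to the $FTRA$ solutions $(\boldsymbol{x_s},\,\boldsymbol{y_s})$ and $(\boldsymbol{x_l},\,\boldsymbol{y_l})$ produces a random solution with the same (joint) distribution and hence the same expected connection cost as the naive pseudo-polynomial algorithm that first converts these into $k$-$FTFL$ solutions $(\check{\boldsymbol{x_s}},\,\check{\boldsymbol{y_s}})$ and $(\check{\boldsymbol{x_l}},\,\check{\boldsymbol{y_l}})$ on the equivalent $FTFL$ instance (splitting each site $i$ into $R_i$ co-located individual facilities) and then invokes the matching and randomized rounding procedures of \cite{Swamy08FTFL2.076}. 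The transformation between $FTRA$ and $FTFL$ solutions preserves the per-client connection cost (only the labeling of facilities changes), so any bound on the expected connection cost in the $k$-$FTFL$ setting directly transfers.

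Given this equivalence, the second step is to apply the connection-cost bound from the analysis of the matching-based rounding in \cite{Swamy08FTFL2.076} (the analogue of their Lemma 7.1 for $k$-$FTFL$). The Swamy-Shmoys argument proceeds client-by-client: for each $j$, with probability $a$ the small solution is opened and $j$ pays at most its connection cost under $\check{\boldsymbol{x_s}}$; with probability $b$ the paired half of the large solution together with a random subset of the unpaired $\check{\boldsymbol{y_l}}$-facilities is opened, and each open facility in $\check{\boldsymbol{y_l}}$ appears with marginal probability $b$, so $j$'s expected cost in this case is at most its connection cost under $\check{\boldsymbol{x_l}}$. The extra $\max(a,b)$ factor arises from the one bad event in which $j$'s intended facility is not opened but its matching partner is, in which case the triangle inequality plus the matching guarantee cap the detour. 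Summing client contributions yields $C \leq (1+\max(a,b))(aC_s + bC_l) = (1+\max(a,b))C'$.

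The main obstacle will be rigorously justifying that our GP procedure plays the role of the Swamy-Shmoys matching in the split $FTFL$ instance. GP proceeds in two phases: it first pairs co-located facilities at the same site (incurring zero pairing distance) and only then greedily pairs the residual $\hat{y_{s,i}}$ against the closest available $\hat{y_{l,i'}}$ across sites. One must check that this two-phase greedy pairing is consistent with the Swamy-Shmoys matching property required by their connection-cost analysis, namely that each opened $\boldsymbol{y_s}$-facility has a partner in $\boldsymbol{y_l}$ whose distance can be charged to client connection costs via the triangle inequality. Intuitively this holds because co-located pairings are cost-free (optimal) and the cross-site phase is precisely the greedy matching of \cite{Swamy08FTFL2.076} restricted to residual facilities; once this correspondence is verified by a short exchange argument, the rest of the proof is a direct invocation of their lemma.
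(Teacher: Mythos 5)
Your proposal is correct and follows essentially the same route as the paper: reduce to the equivalent split $FTFL$ instance, observe that GP and RR coincide with the matching and rounding procedures of Swamy--Shmoys (so the resulting solution and costs are identical), invoke their per-client expected connection-cost bound with coefficient $1+\max\left(a,b\right)$ (the paper cites their Lemma 7.2 and notes in a footnote that the coefficient $2$ there sharpens to $1+\max\left(a,b\right)$), and sum over clients using the cost-preserving transformation back to $FTRA$. The only difference is one of emphasis: you explicitly flag the need to verify that GP's two-phase pairing realizes the required matching, a point the paper handles by assertion.
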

\begin{proof}
For the equivalent $FTFL$ instances, we let $ $$\mathcal{F}'$ be
the set of split facilities with size $\sum_{i\in\mathcal{F}}R_{i}$
and use $k$ to index these facilities. After the matching and rounding
procedures in \cite{Swamy08FTFL2.076} on the transformed solutions
$\left(\check{\boldsymbol{x_{s}}},\,\check{\boldsymbol{y_{s}}}\right)$
and $\left(\check{\boldsymbol{x_{l}}},\,\check{\boldsymbol{y_{l}}}\right)$,
we get the solution $\left(\check{\boldsymbol{x}},\,\check{\boldsymbol{y}}\right)$
to $k$-$FTFL$. Also from its Lemma 7.2, we can directly obtain the
bound $C_{j}\leq\left(1+\max\left(a,b\right)\right)\sum_{k\in\mathcal{F}'}c_{kj}\left(a\check{x_{s,kj}}+b\check{x_{l,kj}}\right)$%
\footnote{Note that from proof of the lemma therein, we can easily get the bound
coefficient $\left(1+\max\left(a,b\right)\right)$ rather than $2$.%
} where $C_{j}=\sum_{k\in\mathcal{F}'}c_{kj}\check{x_{kj}}$, i.e.
the expected connection cost of any client $j$. Since $\left(\check{\boldsymbol{x_{s}}},\,\check{\boldsymbol{y_{s}}}\right)$,
$\left(\check{\boldsymbol{x_{l}}},\,\check{\boldsymbol{y_{l}}}\right)$
and $\left(\boldsymbol{x},\,\boldsymbol{y}\right)$ are transformed
from $\left(\boldsymbol{x_{s}},\,\boldsymbol{y_{s}}\right)$, $\left(\boldsymbol{x_{l}},\,\boldsymbol{y_{l}}\right)$
and $\left(\check{\boldsymbol{x}},\,\check{\boldsymbol{y}}\right)$
respectively with the same costs, we have
\begin{eqnarray*}
 & C & =\sum_{i\in\mathcal{F}}\sum_{j\in\mathcal{C}}c_{ij}x_{ij}=\sum_{j\in\mathcal{C}}\sum_{k\in\mathcal{F}'}c_{kj}\check{x_{kj}}\\
 &  & \leq\left(1+\max\left(a,b\right)\right)\sum_{j\in\mathcal{C}}\sum_{k\in\mathcal{F}'}c_{kj}\left(a\check{x_{s,kj}}+b\check{x_{l,kj}}\right)\\
 &  & =\left(1+\max\left(a,b\right)\right)\sum_{i\in\mathcal{F}}\sum_{j\in\mathcal{C}}c_{ij}\left(ax_{s,ij}+bx_{l,ij}\right)\\
 &  & =\left(1+\max\left(a,b\right)\right)C',
\end{eqnarray*}
which concludes the lemma.
\end{proof}
Adding up the separate bounds in the previous two lemmas, we get $cost\left(\boldsymbol{x},\,\boldsymbol{y}\right)=F+C\leq F'+\left(1+\max\left(a,b\right)\right)C'$.
Relating this bound to the bound $\left(2+\frac{1}{N}\right)F'+C'\leq\left(2+\frac{1}{N}\right)cost\left(\boldsymbol{\alpha'},\,\boldsymbol{\beta'},\,\boldsymbol{z'},\theta'\right)$
in Lemma \ref{lem:1bound}, we obtain

\begin{eqnarray*}
 & cost\left(\boldsymbol{x},\,\boldsymbol{y}\right) & \leq F'+\left(1+\max\left(a,b\right)\right)C'\\
 &  & <\left(2+\frac{1}{N}\right)\left(1+\max\left(a,b\right)\right)F'+\left(1+\max\left(a,b\right)\right)C'\\
 &  & \leq\left(2+\frac{1}{N}\right)\left(1+\max\left(a,b\right)\right)cost\left(\boldsymbol{\alpha'},\,\boldsymbol{\beta'},\,\boldsymbol{z'},\theta'\right)\\
 &  & <4cost\left(\boldsymbol{\alpha'},\,\boldsymbol{\beta'},\,\boldsymbol{z'},\theta'\right).
\end{eqnarray*}
The last inequality is from the fact that $a=\frac{k_{l}-k}{k_{l}-k_{s}}\leq1-\frac{1}{N}$
(achieved when $k_{l}=N$ and $k_{s}=k-1$), and $b=\frac{k-k_{s}}{k_{l}-k_{s}}\leq1-\frac{1}{k}$
(achieved when $k_{l}=k+1$ and $k_{s}=1$). Therefore, $1+\max\left(a,b\right)\leq2-\frac{1}{N}$
and $\left(2+\frac{1}{N}\right)\left(1+\max\left(a,b\right)\right)\leq4-\frac{1}{N^{2}}$.
By the weak duality theorem, the approximation ratio is 4. For runtime,
from the algorithm PK, both GP and RR take $O\left(n^{2}\right)$. 
\begin{theorem}
Algorithm PK is 4-approximation for the uniform $k$-$FTRA$ in polynomial
time $O\left(n^{4}\left(L+\log N+\log n\right)\right)$.
\end{theorem}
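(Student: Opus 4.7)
The plan is to assemble the bounds already established in the preceding lemmas and apply weak duality. Lemma \ref{lem:1bound} bounds the convex-combination fractional solution $(\boldsymbol{x'},\boldsymbol{y'})$ against a feasible dual via $(2+1/N) F' + C' \leq (2+1/N) \cdot cost\left(\boldsymbol{\alpha'},\,\boldsymbol{\beta'},\,\boldsymbol{z'},\theta'\right)$. The two lemmas immediately preceding the theorem contribute $F = F'$ exactly (using the probability structure of RR together with the pairing property of GP, so that each facility in $\boldsymbol{y_l}$ is opened with probability $b$ and each in $\boldsymbol{y_s}$ with probability $a$) and $C \leq (1+\max(a,b)) C'$ (by reducing GP+RR on the $FTRA$ instances to the matching-and-rounding procedure for $k$-$FTFL$ applied to the split instances).

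Adding the expected cost bounds gives $cost\left(\boldsymbol{x},\boldsymbol{y}\right) = F + C \leq F' + (1+\max(a,b)) C'$. Since $1+\max(a,b) \geq 1$ and $2+1/N > 1$, the facility coefficient can be safely inflated, so factoring $(1+\max(a,b))$ out yields
\[
cost\left(\boldsymbol{x},\boldsymbol{y}\right) \leq (1+\max(a,b))\bigl[(2+1/N)F' + C'\bigr] \leq (1+\max(a,b))(2+1/N)\cdot cost\left(\boldsymbol{\alpha'},\boldsymbol{\beta'},\boldsymbol{z'},\theta'\right).
\]
By weak duality the feasible dual cost lower-bounds the LP optimum (and hence the integral optimum) of $k$-$FTRA$, so it suffices to verify $(1+\max(a,b))(2+1/N) < 4$.

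The key algebraic step uses the invariants $1 \leq k_s < k < k_l \leq N$ maintained by BS. From $a = (k_l-k)/(k_l-k_s)$, the worst case is $k_l = N$, $k_s = k-1$, giving $a \leq 1 - 1/N$. From $b = (k-k_s)/(k_l-k_s)$, the worst case is $k_l = k+1$, $k_s = 1$, giving $b \leq 1 - 1/k$. Since $k \leq N$, $\max(a,b) \leq 1 - 1/N$, so $1+\max(a,b) \leq 2 - 1/N$ and the product is at most $(2-1/N)(2+1/N) = 4 - 1/N^2 < 4$, which closes the approximation argument.

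For the runtime, BS performs $O(L + \log N + \log n)$ binary-search probes to shrink the initial interval $[0,\,n_c c_{max}/2]$ down to width $\epsilon = c_{min}/(8N^2)$, each probe invoking APD in $O(n^4)$ by Lemma \ref{lem:time-SPD}. GP pairs facilities across $O(n)$ sites in $O(n^2)$ and RR opens $k$ facilities and greedily connects each client to its $r_j$ closest opened facilities in $O(n^2)$. BS dominates, yielding total time $O(n^4(L + \log N + \log n))$. The main obstacle is a subtle one: the extremal bounds on $a$ and $b$ must be derived simultaneously and tightly enough that the product is strictly below $4$ rather than merely $\leq 4$; a slacker estimate (e.g.\ using only $a,b \leq 1$) would spoil the ratio. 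Apart from this, the proof is a routine assembly of the preceding lemmas via weak duality.
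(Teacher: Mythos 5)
Your proof is correct and follows essentially the same route as the paper's: combine $F=F'$ and $C\leq(1+\max(a,b))C'$ with Lemma \ref{lem:1bound}, bound $1+\max(a,b)\leq 2-\frac{1}{N}$ via the extremal values of $a$ and $b$ so that the product with $2+\frac{1}{N}$ is $4-\frac{1}{N^{2}}<4$, invoke weak duality, and charge the runtime to the $O(L+\log N+\log n)$ binary-search probes each costing $O(n^{4})$. The only cosmetic difference is that you factor $(1+\max(a,b))$ out of the combined bound, whereas the paper inflates the coefficient of $F'$ directly; the algebra is identical.
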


\section{Concluding Remarks}

In this paper, we studied the Constrained Fault-Tolerant Resource
Allocation ($FTRA$) problem and its important variant Constrained
Fault-Tolerant $k$-Resource Allocation ($k$-$FTRA$) problem. In
particular, although $FTRA$ generalizes the classical Fault-Tolerant
Facility Location ($FTFL$) problem, we have shown that it can achieve
the same approximation ratios as $FTFL$, for the general and the
uniform cases respectively. 

From the practical side, our developed resource allocation models
inherited from $FTFL$ and $UFL$ are more general and applicable
for optimizing the performances of many contemporary distributed systems.
Therefore, in future, it is worth looking at these models' other important
variants such as the capacitated variant in \cite{kewen2011cocoon},
the Reliable Resource Allocation ($RRA$) problem in \cite{kewen2012cats}
and etc. From the theoretical side, two grand challenges on the classical
problems still remain today: 1) close the approximation gap between
$FTFL$ (1.7245) and $UFL$ (1.488) or show $FTFL$ is more difficult
than $UFL$; 2) reduce the ratio of 1.488 to the established lower
bound 1.463 for $UFL$.

\section*{Acknowledgement} This research was partially supported by Australian Research Council Discovery Project grant \#DP0985063, National Science Foundation of China under its General Projects funding \#61170232 and Research Initiative Grant of Sun Yat-sen University.

\bibliographystyle{plain} \bibliographystyle{plain}
\bibliography{facilityLocation_new}

\end{document}